\newtheorem{thm}{Theorem}[section]
\newtheorem{lemma}[thm]{Lemma}
\newtheorem{property}[thm]{Property}
\newtheorem{remark}[thm]{Remark}
\theoremstyle{definition}
\newtheorem{definition}[thm]{Definition}
\newcommand\numberthis{\addtocounter{equation}{1}\tag{\theequation}}
\algnewcommand{\LineComment}[1]{\State \(\triangleright\) #1}
\newtheorem{problem}{Problem}
\newtheorem{observation}{Observation}
\theoremstyle{definition}
\DeclareMathOperator{\E}{\mathbb{E}}
\DeclareMathOperator{\alg}{\mathrm{ALG}}
\DeclarePairedDelimiterX{\frob}[2]{\langle}{\rangle_F}{#1, #2}
\newcommand\norm[1]{\lVert#1\rVert}
\title{Improved Regret Bounds for \\ Online Fair Division with Bandit Learning\thanks{Schiffer was supported by an NSF Graduate Research Fellowship. Zhang was supported by an NSF Graduate Research Fellowship.}}
\author{
	Benjamin Schiffer\thanks{Department of Statistics, Harvard University | \emph{E-mail}: \href{mailto:bschiffer1@g.harvard.edu}{bschiffer1@g.harvard.edu}.}
	\and
	Shirley Zhang\thanks{Paulson School of Engineering and Applied Sciences, Harvard University | \emph{E-mail}: \href{mailto:szhang2@g.harvard.edu}{szhang2@g.harvard.edu}.}
}
\begin{document}

\begin{titlepage}
\maketitle

\setcounter{page}{0}
\thispagestyle{empty}

\begin{abstract}
We study online fair division when there are a finite number of item types and the player values for the items are drawn randomly from distributions with unknown means. In this setting, a sequence of indivisible items arrives according to a random online process, and each item must be allocated to a single player. The goal is to maximize expected social welfare while maintaining that the allocation satisfies proportionality in expectation. When player values are normalized, we show that it is possible to with high probability guarantee proportionality constraint satisfaction and achieve $\tilde{O}(\sqrt{T})$ regret. To achieve this result, we present an upper confidence bound (UCB) algorithm that uses two rounds of linear optimization. This algorithm highlights fundamental aspects of proportionality constraints that allow for a UCB algorithm despite the presence of many (potentially tight) constraints. This result improves upon the previous best regret rate of $\tilde{O}(T^{2/3})$.
\end{abstract}

\end{titlepage}

\section{Introduction}

The fair division of indivisible goods is a classic problem in computational social choice. In this problem, a set of goods must be fairly allocated among $n$ players, where each player may have a different value for each good. For example, consider a central food bank which is in charge of distributing food to multiple food pantries across the region. Each food pantry may have differing preferences over various types of food depending on the populations they serve. The goal is then to divide the goods in a way that is fair relative to each player's valuations.  

Online fair division adds another degree of difficulty. Instead of all items being known upfront, in online fair division goods arrive one at a time and must be irrevocably allocated at the time of arrival. The first goods that arrive must be allocated without knowing what future goods will be, and cannot be reallocated after more goods arrive. In the food bank example, this setting is especially relevant for perishable goods, which the food bank must quickly allocate after arrival. For example, \citep{mertzanidis2024automating} describes a partnership with an Indiana program which redistributes rejected food by redirecting truck drivers from landfills to food banks. Truck drivers arrive on the app in an online manner, necessitating an online fair division algorithm to match drivers to food banks. The food available depends on what was rejected on a given day, which may be unpredictable. 

A common fairness requirement is proportionality, which insists that each player receive at least $\frac{1}{n}$ of their total value for all goods. In settings where the allocation of an item may be randomized, it is natural to instead consider proportionality in expectation, where the expectation is taken over both the random player values and the random item types. For a given instance, there may be many proportional allocations, in which case we can differentiate further by also considering efficiency. Past works have incorporated efficiency by, for example, requiring Pareto optimality in addition to proportionality \citep{BKPP+24} or maximizing utilitarian social welfare subject to proportionality \citep{procaccia2024honor}.

We study the online fair division problem in the setting where player values are \textit{unknown} at the time of item arrival, and the value of a player for an item is only revealed if the item is allocated to that player. Specifically, we consider the setting where there are $m$ item types, and each player's value for an item of each type is drawn from an (possibly different) unknown distribution. We would like to distribute items fairly in expectation to players despite not knowing their true mean values for item types. In our food bank example, this setting is most relevant when a food bank does not yet understand the needs of its food pantries, but can easily collect information from each food pantry (e.g. in the form of surveys) regarding how well items are received by each food pantry's visitors. \citep{yamada2024learning} gives further motivation for this setting in the form of allocating difficult tasks to users with different strengths and distributing humanitarian aid, both of which are online fair division problems in which values are only unveiled after requesting feedback. In such settings, it is necessary to learn each player's expected value for each item. 

We consider the problem statement proposed in \citep{procaccia2024honor}, which is as follows. Subject to the fairness constraints of proportionality in expectation, we strive for efficiency as measured by the \textit{expected utilitarian social welfare} of our solution. Specifically, we aim to minimize the regret incurred when comparing to an optimal solution which must adhere to the fairness constraint but knows the true means. As the algorithm does not know the players' true means, it will be impossible to achieve non-trivial regret and guarantee proportionality in expectation at every time step. Therefore, we will instead require that the algorithm with high probability satisfies the proportionality in expectation constraints at every time step. 

In summary, we consider online fair division with unknown means as a reinforcement learning problem subject to fairness constraints. In this paper, we study how to balance exploration and exploitation while also maintaining proportionality.

 \subsection{Our Contributions}
In this work, we study the problem of online fair division with unknown means subject to proportionality constraints. Our main result is that, when player values are normalized, an algorithm can achieve $\tilde{O}(\sqrt{T})$ regret while satisfying proportionality in expectation constraints at every time step with high probability (Theorem \ref{thm:sqrtT_regret}). This is an improvement on the $\tilde{\Omega}(T^{2/3})$ regret of the explore-then-commit algorithm in \citep{procaccia2024honor}. The algorithm that achieves our result (Algorithm \ref{algo:UCB}) uses a variant of upper confidence bound (UCB) logic with two rounds of linear program optimization. The first round of optimization guarantees that with high probability the constraints are satisfied, but does not provide sufficient exploration for UCB. Therefore, Algorithm \ref{algo:UCB} performs a second round of optimization that exploits the underlying structure of the fairness constraints to sufficiently explore without losing significant social welfare.

We complement our positive results for proportionality with an impossibility result for envy-freeness, another commonly studied fairness notion. Specifically, we show that when values are normalized, the best regret rate for envy-freeness is $\tilde{O}(T^{2/3})$, which matches the lower bound for envy-freeness when values are not normalized. This highlights a fundamental difference in the difficulty of maintaining envy-freeness versus proportionality when learning unknown values.

\subsection{Related Work}

Our problem is most closely related to that in \citep{procaccia2024honor}, which introduces the  problem setting we study. \citep{procaccia2024honor} studies both envy-freeness in expectation and proportionality in expectation constraints, and provide explore-then-commit algorithms which achieve $\tilde{O}(T^{2/3})$ regret while maintaining these fairness constraints with high probability. \citep{procaccia2024honor} also proves that no algorithm can have lower regret while maintaining these fairness constraints. In contrast, we show that when players' values are normalized, there do exist algorithms which achieve $\tilde{O}(\sqrt{T})$ regret while maintaining proportionality in expectation at each time step with high probability. The algorithm we present is an upper confidence bound  algorithm rather than an explore-then-commit algorithm as used in \citep{procaccia2024honor}. While our algorithm relies on fundamental properties of fairness constraints similar to those in \citep{procaccia2024honor}, the properties needed for $\tilde{O}(\sqrt{T})$ regret are stronger and are \textit{not} satisfied by envy-freeness constraints. 

We briefly mention two other related works in online fair division. \citep{yamada2024learning} studies a similar setting in which a player's value for an item type is unknown and is only observed (with noise) when an item is allocated to that player. Rather than encoding fairness as a constraint, however, \citep{yamada2024learning} instead maximizes Nash social welfare in the objective function by leveraging algorithms which use dual averaging. \citep{BKPP+24} studies a somewhat different online fair division setting in which items arrive in an adversarial manner, but the values of all agents for each item are known when the item arrives (and, crucially, before item allocation).\citep{BKPP+24} also considers efficiency, but in the form of guaranteeing Pareto optimality rather than maximizing utilitarian social welfare. 

There are many notions of fairness that have been studied in the multi-armed bandits literature. One such notion is that similar individuals are treated similarly ~\citep{chen2021fairer, liu2017calibrated}. Another related notion of fairness is that `worse' arms are never pulled with higher probability than 'better' arms ~\citep{joseph2016fairness, joseph2016fair}. These two notions of fairness are incompatible with proportionality, as proportionality may require that worse arms are pulled with higher probability. A third notion of fairness is the requirement that every arm is pulled a minimum proportion of the time~\citep{chen2020fair,claure2020multi,li2019combinatorial,patil2021achieving}. Once again, this notion of fairness is not compatible with proportionality, as there exist proportional allocations where every item type is not allocated to every individual. We also focus solely on the non-contextual setting, however many works also study fairness when there is context \citep{grazzi2022group, schumann2019group,wang2021fairness, wu2023best, wei2024fair}.

Because our fairness constraints and objective function are linear, our problem formulation is also related to the problem of multi-armed bandits under general linear constraints. One area of work studies linear bandits under linear safety constraints~\citep{amani2019linear, carlsson2024pure, moradipari2020linear}. \citep{amani2019linear} shows that if there is a single linear constraint and the optimal solution has positive slack with respect to this constraint, then $\tilde{O}(\sqrt{T})$ regret is possible. When the slack of the optimal solution is $0$, \citep{amani2019linear} presents an algorithm that has regret of $\tilde{O}(T^{2/3})$. Our setting differs from \citep{amani2019linear} in that we have many linear constraints (one for each player), and the optimal solution frequently has zero slack for some of the constraints. The existence of the $\tilde{O}(\sqrt{T})$ regret algorithm in our setting despite these added difficulties fundamentally relies on the structure of our constraints. Other works studying linear bandits have constraints that differ from our setting either because they are not applied at every time step or because they require slack in the constraints~\citep{liu2021efficient, pacchiano2021stochastic}. 

Another related area is bandits with knapsacks, which also studies bandits problems with constraints~\citep{liu2022combinatorial,badanidiyuru2018bandits}. However, the knapsack constraints depend on resource consumption vectors rather than the unknown mean values, and therefore these constraints are significantly different than proportionality constraints. 

\section{Model}

In our setting, there are $N = [n]$ players and $M = [m]$ item types, with $T$ items arriving over time. For any item of type $k$, the value of player $i$ for that item is drawn from a sub-Gaussian distribution with mean $\mu^*_{ik}$. For each round $t$ from $0$ to $T-1$, an item $j_t$ of type $k_t$ is drawn uniformly at random. As shown in \citep{procaccia2024honor}, the assumption of uniformly random item types is WLOG, and all of our results hold when item types are drawn from any arbitrary distribution. After observing $k_t$, an algorithm allocates the item $j_t$ (potentially randomly) to a player $i_t$, and then observes $i_t$'s value $v_t$ for $j_t$. In order to decide how to allocate $j_t$, an algorithm may consult the history $H_t = \{k_{t'}, i_{t'}, v_{t'}\}_{t' < t}$. Note that the algorithm never observes the values of a player $i$ for item $j$ if $j$ is not allocated to $i$.  

An algorithm allocates items to players via fractional allocations $X \in \mathbb{R}^{n \times m}$, where a fractional allocation is said to be \textit{valid} if $\sum_i X_{ik} = 1$ for all $k \in [m]$. Intuitively, the $k$th column of a fractional allocation represents how the algorithm will randomly allocate the item if the item has type $k$. One valid fractional allocation is the \textit{uniform at random} (UAR) allocation, where every element is equal to $\frac{1}{n}$. At time $t$, before observing $k_t$, an algorithm considers the history $H_t$ and outputs a valid fractional allocation $X^t = \alg(H_t)$. After observing $k_t$, the algorithm will then allocate $j_t$ based on the probabilities in $((X^t)^{\top})_{k_t}$, i.e. the $k_t$th column of $X^t$. The expected value of player $i$ for a fractional allocation $X$ can be written as $\frac{1}{m} X_i \cdot  \mu^*_i$, and the sum over all players' expected values for $X$ is then 
\[
\frac{1}{m} \sum_{i \in [n]} X_i \cdot  \mu^*_i  = \frac{1}{m} \frob{X}{\mu^*}.
\]
Therefore, we can write the total expected social welfare of an algorithm which allocates according to $X^t$ at time $t$ as
\[
\E[\text{social welfare of $\alg$}] = \frac{1}{m} \sum_{t = 0}^{T - 1} \frob{X^t}{\mu^*}.
\]

In this paper, we make two additional assumptions on the unknown mean matrix $\mu^*$. The first assumption is that there exist known $a,b \in \mathbb{R}$ such that  $0 < a \le \mu_{ik}^* \le b < \infty$ for all $i,k$. As shown in \citep{procaccia2024honor}, this is a necessary assumption in order to achieve $o(T)$ regret. The second assumption made in this paper is that the values for each player are normalized. In other words, we assume that for all players $i$, $\sum_{k=1}^m \mu_{ik}^* = 1$. Informally, normalizing values ensures that each player has equal say in the total social welfare. Normalized values is a standard assumption in fair division literature (see, e.g., \citep{gkatzelis2021fair,bogomolnaia2022fair, yamada2024learning}) and further justification can be found in \citep{aziz2020justifications}. Note that assuming normalized values does not affect the proportionality constraints, which are invariant to scaling.

\subsection{Regret and Problem Formulation}
In this section, we give our formal definitions for fairness and regret.

The main fairness notion we study is \textit{proportionality in expectation}. Proportionality in expectation requires that, for every $t$, every player's expected value for the allocation $X^t$ is at least as much as that player's expected value for the uniform at random allocation.

Formally, player $i$'s expected value for fractional allocation $X^t$ is equal to $\frac{1}{m} X^t_i \cdot \mu^*_i$, where the $1/m$ comes from every item having probability $1/m$ of being each item type. Player $i$'s expected value for the UAR allocation is $\frac{1}{nm}\norm{\mu^*_i}_{1} = \frac{1}{nm}$ due to the normalized values assumption. Therefore, a fractional allocation is proportional in expectation if and only if  $\frac{1}{m} X^t_i \cdot \mu^*_i \ge \frac{1}{nm}$, which is equivalent to $X^t_i \cdot \mu^*_i \ge \frac{1}{n}$. We define proportionality in expectation for an algorithm $\alg$ in Definition \ref{def:proportionality}.
\begin{definition}\label{def:proportionality}
    An algorithm $\alg$ that uses fractional allocation $X^t$ at time $t$ satisfies \emph{proportionality in expectation} for $\mu^*$ if
    \[
    X^t_i \cdot \mu^*_i \ge \frac{1}{n} \quad \forall t < T, i \in [n].
    \]
\end{definition}

In this paragraph we will briefly summarize from \citep{procaccia2024honor} the justification for studying proportionality in expectation rather than realized proportionality. First, note that satisfying proportionality in expectation does not guarantee that every player prefers their final set of allocated items at time $T$ to a $1/n$ proportion of all $T$ items (which we refer to as realized proportionality). In fact, no algorithm can with high probability guarantee that every player prefers their final set of allocated items at time $T$ to a $1/n$ proportion of all $T$ items. Define the \textit{dis-proportionality} of the final allocation as the maximum across all players of how much each player prefers a $1/n$ proportion of all $T$ items to their allocated items at time $T$. An algorithm that satisfies proportionality in expectation has the asymptotically optimal rate of dis-proportionality among all possible algorithms.  See \citep{procaccia2024honor} for more discussion about the optimality of proportionality in expectation, including formal statements and proofs.

For any value matrix $\mu$, let $Y^{\mu}$ be the expected social welfare maximizing fractional allocation that satisfies proportionality in expectation for $\mu$. Formally, we define 
\begin{align*}
    Y^\mu := \arg\max & \: \:  \frob{X}{\mu} \\
    \text{s.t. } & X^t_i \cdot \mu_i^* \ge \frac{1}{n} \quad \forall i \\
    & \sum_{i} X_{ik} = 1 \quad \forall k  \numberthis \label{lp:ymu}
\end{align*}
If the true mean values matrix $\mu^*$ is known, then the social welfare maximizing algorithm $\alg$ that satisfies proportionality in expectation is simply the algorithm that chooses $X^t = Y^{\mu^*}$ for all $t \in [0:T-1]$. Using this optimal algorithm as a baseline, we define the regret of an arbitrary algorithm $\alg$ as follows.
 \begin{definition}\label{def:regret}
   Define $Y^{\mu^*}$ as the solution to LP \eqref{lp:ymu} when $\mu = \mu^*$. Then the $T$-step regret for $\mu^*$ of an algorithm $\alg$ that uses allocation $X^t$ at time $t$ can be written as 
   \[
   \text{Regret of $\alg$ for $\mu^*$} = T \cdot \frob{Y^{\mu^*}}{\mu^*} - \sum_{t=0}^{T-1} \frob{X^t}{\mu^*}.
   \]
\end{definition}

The formal problem we address in this paper is as follows.
\begin{problem}\label{problem1}
    Design an algorithm $\alg$ such that the following result holds for any known $n,m, T,a,b$. Suppose that the true mean values satisfy $a \le \mu^*_{ik} \le b$ for all $i \in [n],k \in [m]$. Then with probability $1-1/T$, $\alg$ will both satisfy the proportionality in expectation constraints for $\mu^*$ and have regret for $\mu^*$ of $\tilde{O}(\sqrt{T})$.
\end{problem}

\subsection{Notation}
    Throughout this paper, we will use $O()$, $\tilde{O}(), \Omega(), \tilde{\Omega}()$ notation to represent the limiting behavior of functions with respect to $T$. For two matrices $A$ and $B$, we use $\frob{A}{B}$  to represent the Frobenius product of $A$ and $B$. We use $A_i$ to represent the $i$th row of matrix $A$ and $A_i \cdot B_i$ to represent the dot product between vectors $A_i$ and $B_i$. For matrices $\mu, \epsilon \in \mathbb{R}^{n \times m}$, define 
    \[
    B(\mu, \epsilon) = \{\mu' \in \mathbb{R}^{n \times m} : \mu_{ik} - \epsilon_{ik} \le \mu'_{ik} \le \mu_{ik} + \epsilon_{ik} \: \forall i,k  \}.
    \]

\section{Main Results}

\subsection{Algorithm Overview}
In this section, we present our main algorithm (Algorithm \ref{algo:UCB}) and main theorem (Theorem \ref{thm:sqrtT_regret}).

\begin{thm}\label{thm:sqrtT_regret}
  Suppose $n,m, T, a,b$ are known and that the true mean values satisfy $0 < a \le \mu^*_{ik} \le b$ for all $i \in [n]$, $k \in [m]$. With probability $1-1/T$, Algorithm \ref{algo:UCB} will both satisfy the proportionality in expectation constraints for $\mu^*$ and have regret for $\mu^*$ of $\tilde{O}(n^5m^3\sqrt{T})$.
\end{thm}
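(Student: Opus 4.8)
The plan is to condition on a high-probability \textbf{good event} and then separately establish feasibility and the regret bound on that event. First I would define, for each pair $(i,k)$ and each round $t$, the empirical mean $\hat{\mu}^t_{ik}$ from the $N^t_{ik}$ times type $k$ has so far been allocated to player $i$, together with a confidence width $\epsilon^t_{ik}\asymp\sqrt{\log(nmT)/N^t_{ik}}$ (with $\epsilon^t_{ik}$ set to the trivial bound $b-a$ when $N^t_{ik}=0$). A sub-Gaussian tail bound and a union bound over all $i,k,t$ show that $\mathcal{G}=\{\mu^*\in B(\hat{\mu}^t,\epsilon^t)\ \forall t\}$ holds with probability at least $1-1/T$, and all subsequent claims are made on $\mathcal{G}$.

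For feasibility, the point is that Algorithm \ref{algo:UCB} never enforces the unknown proportionality constraint directly but a robust surrogate. Writing $Z_i=X_i-\tfrac1n\mathbf{1}$ and using normalization $\sum_k\mu^*_{ik}=1$, proportionality $X_i\cdot\mu^*_i\ge\tfrac1n$ is exactly $Z_i\cdot\mu^*_i\ge0$. The first optimization round imposes, for every $i$, the conservative constraint $\min\{Z_i\cdot\mu'_i:\mu'\in B(\hat{\mu}^t,\epsilon^t),\ \sum_k\mu'_{ik}=1\}\ge0$, which on $\mathcal{G}$ immediately yields $Z_i\cdot\mu^*_i\ge0$. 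Two structural facts make this work: the uniform allocation has $Z_i=0$ and hence satisfies every surrogate with zero margin regardless of $\mu'$, so the first-round program is always feasible; and folding the known row-sum $\sum_k\mu'_{ik}=1$ into the robust minimization shrinks the effective safety margin, which is exactly the ingredient that will later buy $\tilde{O}(\sqrt{T})$ rather than $\tilde{O}(T^{2/3})$.

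For the regret bound I would decompose, on $\mathcal{G}$, the per-step gap into a \emph{robustness gap} and an \emph{optimism penalty}. Because the algorithm maximizes the optimistic objective $\frob{X}{\bar{\mu}^t}$ with $\bar{\mu}^t=\hat{\mu}^t+\epsilon^t\ge\mu^*$ over the robustly feasible set, one obtains $\frob{X^t}{\mu^*}\ge\frob{X^t_{\mathrm{rob}}}{\mu^*}-2\frob{X^t}{\epsilon^t}$, where $X^t_{\mathrm{rob}}$ is the best robustly feasible allocation under the true $\mu^*$; thus the per-step regret is at most the robustness gap $\frob{Y^{\mu^*}}{\mu^*}-\frob{X^t_{\mathrm{rob}}}{\mu^*}$ plus the penalty $2\frob{X^t}{\epsilon^t}$. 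A repair argument bounds the robustness gap by moving the most uncertain coordinates of $Y^{\mu^*}$ toward the uniform allocation, showing it is $O(\sum_{i,k}\epsilon^t_{ik})$ up to $\mathrm{poly}(n,m)$; both terms are therefore governed by the widths $\epsilon^t_{ik}$ of the pivotal pairs. Since $\E[\Delta N^t_{ik}\mid H_t]=\tfrac1m X^t_{ik}$, a Freedman-type martingale bound ties $N^t_{ik}$ to $\tfrac1m\sum_{s<t}X^s_{ik}$, and feeding this into $\sum_t x_t/\sqrt{\sum_{s\le t}x_s}\le2\sqrt{\sum_t x_t}$ converts $\sum_t\frob{X^t}{\epsilon^t}$ into one $\sqrt{\sum_t X^t_{ik}}\le\sqrt{T}$ term per pair, giving the $\sqrt{T}$ scaling and part of the polynomial factor.

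The hard part, and the heart of the argument, is the \textbf{second optimization round}: the first-round allocation need not sample under-explored pairs at all, so the widths $\epsilon^t_{ik}$ of pairs absent from its support (which are precisely the pairs inflating the robustness gap and the optimism penalty) would never shrink, stalling the telescoping sum above. I would therefore prove a structural lemma that $X^t$ can be perturbed into an allocation that (i) stays valid and robustly feasible, (ii) allocates every pivotal pair at a rate bounded below so its count grows, and (iii) loses only $\mathrm{poly}(n,m)\cdot(\text{rate})$ social welfare. This is exactly where normalization $\sum_k\mu^*_{ik}=1$ and the uncertainty-free boundary point $\mathbf{1}/n$ are essential: because that point sits on every constraint with zero uncertainty, one can route exploration through welfare-cheap operations on the allocation graph --- $\opzero$, $\optwo$, $\opthree$, $\opfive$ --- that preserve column sums and the robust feasibility certificate while redistributing mass toward under-sampled pairs. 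Bounding the welfare cost of these operations and the number of rounds any pivotal pair can remain under-explored is the main technical obstacle; it is also where the contrast with envy-freeness surfaces, since envy-freeness constraints admit no analogous uncertainty-free feasible point and hence no such cheap exploration, matching the $\tilde{O}(T^{2/3})$ barrier. Combining the feasibility guarantee with the two summed error terms yields proportionality on $\mathcal{G}$ and total regret $\tilde{O}(n^5m^3\sqrt{T})$, establishing the theorem.
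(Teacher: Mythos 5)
Your skeleton (confidence event, robust surrogate constraints for feasibility, optimism plus a self-normalizing pull-count sum) matches the paper's, but the step you yourself identify as "the heart of the argument" is handled in a way that cannot deliver $\sqrt{T}$. You bound the robustness gap by $O(\mathrm{poly}(n,m)\sum_{i,k}\epsilon^t_{ik})$, i.e.\ by the \emph{unweighted} $\norm{\epsilon^t}_1$, observe correctly that widths of unallocated pairs then never shrink, and propose to repair this with a structural lemma that allocates "every pivotal pair at a rate bounded below," losing $\mathrm{poly}(n,m)\cdot(\text{rate})$ welfare per step. That is uniform forced exploration: with exploration rate $r$ the welfare loss is $\Theta(rT)$ while the width sum is $\Theta(\sqrt{T/r})$, and optimizing $r$ gives $T^{2/3}$ --- exactly the explore-then-commit barrier the theorem is supposed to beat. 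The paper's mechanism is different in kind, not degree: LP \eqref{eq:lp_with_slack} maximizes the \emph{single entry} $X_{ik}$ subject to robust feasibility and a bounded loss in optimistic welfare; Lemma \ref{def:fairness_to_UAR2} supplies, for every grid point $\mu \in G^t$, an allocation $X^{\mu}$ that has slack $\gamma$ in all constraints \emph{and} is entrywise within $n\gamma/a$ of $Y^{\mu}$; Lemma \ref{lemma:Xmu} shows this $X^{\mu}$ (and $\hat{X}^t$) is feasible for LP \eqref{eq:lp_with_slack}, whence $\hat{Z}^{ik}_{ik} \ge \max\bigl(X^{\mu}_{ik}, \hat{X}^t_{ik}\bigr)$ and so $X^t_{ik} \ge \frac{1}{nm}X^{\mu}_{ik} \approx \frac{1}{nm}Y^{\mu}_{ik}$. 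Every regret-relevant uncertainty term --- $\frob{Y^{\mu^g}}{\epsilon^t}$ (with $\mu^g$ the grid point near $\mu^*$, and Lemma \ref{def:continuous} covering the discretization) and $\frob{\hat{X}^t}{\epsilon^t}$, including the welfare-slack term your optimism inequality silently drops because $X^t$ is \emph{not} the optimistic maximizer --- is thereby dominated by $O(nm)\frob{X^t}{\epsilon^t} + O(\mathrm{poly}/\sqrt{T})$. Exploration is proportional to the plausible optima's own usage of each pair; pairs the optimum ignores never need their widths to shrink. Only after this domination does your (correct) telescoping argument $\sum_t X^t_{ik}/\sqrt{N^t_{ik}} = O(m\sqrt{T})$ close the loop. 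This entrywise-domination idea, and the entrywise-closeness clause of Lemma \ref{def:fairness_to_UAR2} that powers it, are absent from your proposal.

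Three further problems, more briefly. First, the "welfare-cheap operations on the allocation graph" you invoke ($\opzero$, $\optwo$, $\opthree$, $\opfive$) are not defined anywhere in the paper --- they are unused preamble macros --- so that portion of the argument has no content. Second, your explanation of the envy-freeness contrast is wrong: the UAR allocation satisfies every envy-freeness constraint with equality for \emph{every} $\mu$ (each player values all bundles identically under it), so envy-freeness does admit an uncertainty-free feasible point; the true asymmetry is that the slack-creation-with-entrywise-closeness property (Property \ref{def:fairness_to_UAR2_app}) fails for envy constraints, and the $T^{2/3}$ barrier is established by the information-theoretic construction of Theorem \ref{thm:lower_bounds}, not by the absence of such a point. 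Third, your probability and initialization accounting is off: you spend the entire $1/T$ failure budget on the confidence event, but the martingale bound tying $N^t_{ik}$ to $\frac{1}{m}\sum_{s<t}X^s_{ik}$ is a second high-probability event (the paper's $E'$), and the warm-up phase of $\log^2(T)\sqrt{T}$ UAR rounds is needed so that $\norm{\epsilon^t}_1 = \tilde{O}(T^{-1/4})$ uniformly --- a bound required both for applying Lemma \ref{def:fairness_to_UAR2} with $\gamma \le \gamma_0$ and for the grid argument. These last items are patchable; the exploration mechanism in the first paragraph is the structural gap.
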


The initial exploration phase of Algorithm \ref{algo:UCB} uses the fact that the uniform at random allocation is guaranteed to satisfy proportionality in expectation constraints for any mean value matrix $\mu$. After the exploration phase, at each step we calculate an estimated mean value matrix ($\hat{\mu}^t$) and an uncertainty matrix ($\epsilon^t$). Algorithm \ref{algo:UCB} then performs two rounds of optimization. The first optimization of Algorithm \ref{algo:UCB} calculates an optimistic estimate of expected social welfare and guarantees that the solution $\hat{X}^t$ will satisfy the proportionality in expectation constraints for any $\mu \in B(\hat{\mu}^t, \epsilon^t)$. Therefore, if the true mean value matrix $\mu^*$ is in $B(\hat{\mu}^t, \epsilon^t)$, then $\hat{X}^t$ will satisfy the proportionality in expectation constraints for $\mu^*$. The algorithm unfortunately cannot directly use this $\hat{X}^t$ as the allocation in round $t$ because $\hat{X}^t$ does not necessarily provide sufficient exploration of all (item, player) pairs. See Section \ref{sec:discuss_Xhatt} below for more details. 

To avoid this issue, the algorithm includes a second round of optimization in LP \eqref{eq:lp_with_slack} that calculates an allocation $\hat{Z}^{ik}$ for each (item, player) pair $(i, k)$. $\hat{Z}^{ik}$ is guaranteed to sufficiently explore the $(i,k)$ pair and have social welfare that is not significantly less than the social welfare of $\hat{X}^t$. By using the fractional allocation $X^t$ that averages over all $\hat{Z}^{ik}$, the algorithm is able to guarantee that $X^t$ sufficiently explores every (player, item) pair. We note that due to the maximization in the second round of optimization, the runtime of Algorithm \ref{algo:UCB} is exponential in $n$ and $m$.  In the algorithm notation below, subscripts represent matrix indexing while superscripts represent matrix names, i.e. $Z^{ik}$ is a matrix, and $X^t_{ik}$ is the $(i,k)$ entry of $X^t$.

 \begin{algorithm}[htb!]
 \caption{UCB Online Fair Division}
 \label{algo:UCB}
 \begin{algorithmic}
\Require $n,m,T,a,b$
\For{$t \gets 0$ to $\log^2(T)\sqrt{T}-1$}
    \State Use $X^t = \mathrm{UAR}$.
\EndFor
\For{$t \gets \log^2(T)\sqrt{T}$ to $T$}
     \State $N^t_{ik} \gets  \sum_{\tau=0}^{t-1} 1_{k_\tau = k, i_\tau = i}$
     \State $\hat{\mu}^t_{ik} \gets  \frac{1}{N^t_{ik}} \sum_{\tau = 0}^{t-1} 1_{k_\tau = k, i_\tau = i} v_{\tau}$
     \State $\epsilon^t_{ik} = \sqrt{\log^2(6nmT)/(N^t_{ik})}$
     \State $(\mu_U^t)_{ik} = \hat{\mu}_{ik}^t + \epsilon_{ik}^t$
     \State $G^t = \left\{\mu \in B(\hat{\mu}^t, \epsilon^t) : \sqrt{T}\mu_{ik} \in \mathbb{Z} \quad \forall i,k\right\}$ 
     \State $\hat{X}^t \gets$ Solution  to the following LP:
             \begin{align*}
         \max_{X} \: &  \frob{X}{\mu^t_U} \\
        \text{s.t. } &  X_{i'} \cdot \mu_{i'} \ge \frac{1}{n}  \quad \forall i' \in [n], \: \forall \mu \in B(\hat{\mu}^t, \epsilon^t) \\
        & \sum_{i} X_{ik} = 1 \quad \forall k   \numberthis \label{eq:lp_etc_finallp2}
        \end{align*}  \label{line:hatX}
    \State $\forall i \in [n], \forall k \in [m]$, $\hat{Z}^{ik} \gets$ Solution to the following LP:
    \begin{align*}
         \max \:&  X_{ik} \\
        \text{s.t. } & X_{i'} \cdot \mu_{i'} \ge \frac{1}{n}  \quad \forall i' \in [n], \: \forall \mu \in B(\hat{\mu}^t, \epsilon^t) \\ 
        & \frob{X}{\mu^t_U} \ge \frob{\hat{X}^t}{\mu^t_U}  -   \frac{4bn}{a}\max_{\mu \in G^t} \frob{Y^{\mu}}{\epsilon^t} \\ 
        & \quad \quad \quad \quad \quad \quad- 2\frob{\hat{X}^t}{\epsilon_t}\\
        & \sum_{i'} X_{i'k'} = 1 \quad \forall k'  \numberthis \label{eq:lp_with_slack}
    \end{align*}
    \State Use $X^t = \frac{1}{nm} \left(\sum_{i,k} \hat{Z}^{ik} \right)$ 
\EndFor
 \end{algorithmic}
 \end{algorithm}

\subsection{Algorithm Intuition}\label{sec:discuss_Xhatt} 

In this section, we discuss the intuition behind the $\tilde{O}(\sqrt{T})$ regret guarantee of Algorithm \ref{algo:UCB}. First, we describe how Algorithm \ref{algo:UCB} relates to the standard multi-armed bandits upper confidence bound algorithm. We then describe the importance of the second round of optimization in Algorithm \ref{algo:UCB} and why the algorithm does not simply use the fractional allocation $\hat{X}^t$ at time $t$.

Consider the standard multi-armed bandits (MAB) setting, where there are $n$ arms that each have an unknown mean reward. At each time step, the algorithm chooses one arm, and the goal is to maximize the $T$-step reward. In this setting, a standard UCB algorithm computes estimates of the mean rewards $\hat{\mu}^t$ (where $\hat{\mu}^t_j$ corresponds to arm $j$) and an uncertainty vector $\epsilon^t$ (where $\epsilon^t_j$ is the uncertainty in $\hat{\mu}^t_j$). The standard UCB algorithm at time $t$ chooses arm $j_t = \arg\max_j \hat{\mu}^t_j + \epsilon^t_j$. The key idea underlying the low regret of standard UCB is that 
\begin{equation}\label{eq:UCB_standard_MAB}
    \text{[Regret at time $t$]} = O(\epsilon_{j_t}).
\end{equation}
In online fair division, instead of choosing a single arm $j_t$, the algorithm chooses a fractional allocation $X^t$. The generalization of Equation \eqref{eq:UCB_standard_MAB} to this setting is  
\begin{equation}\label{eq:to_show_regret_bound}
        \text{[Regret at time $t$]}  =  O\left(\frob{X^t}{\epsilon^t}\right).
\end{equation}
The standard UCB approach for finding a fractional allocation $X^t$ that satisfies Equation \eqref{eq:to_show_regret_bound} is to solve the optimization problem
\begin{align*}
         \arg\max_{X} \: &  \frob{X}{\mu^t_U} \\
        & \sum_{i} X_{ik} = 1 \quad \forall k   \numberthis \label{eq:lp_unconstrained}
\end{align*}
As in standard MAB, the solution to LP \eqref{eq:lp_unconstrained} will satisfy Equation \eqref{eq:to_show_regret_bound}. In online fair division, however, the algorithm must choose an $X^t$ satisfying the proportionality constraints, and the solution to LP \eqref{eq:lp_unconstrained} may not satisfy these constraints. Instead, Algorithm \ref{algo:UCB} uses LP \eqref{eq:lp_etc_finallp2} (which has the same objective function as LP \eqref{eq:lp_unconstrained}) to find an allocation $\hat{X}^t$ that satisfies the proportionality constraints with high probability. However, $\hat{X}^t$ may no longer satisfy Equation \eqref{eq:to_show_regret_bound}. This means the algorithm cannot use the allocation $\hat{X}^t$ and bound the regret with a UCB argument.

Because we cannot directly use $\hat{X}^t$, Algorithm \ref{algo:UCB} instead leverages $\hat{X}^t$ to find an allocation $X^t$ that will satisfy Equation \eqref{eq:to_show_regret_bound}. This is done using LP \eqref{eq:lp_with_slack}. LP \eqref{eq:lp_with_slack} is designed to find $Z^{ik}$ such that the $(i,k)$ entry of $Z^{ik}$ is relatively large compared to the $(i,k)$ entry of $Y^{\mu^*}$. Algorithm \ref{algo:UCB} chooses $X^t$ to be a linear combination of the $Z^{ik}$. Therefore, every entry in $X^t$ will be relatively large compared to the corresponding entry in $Y^{\mu^*}$. Furthermore, the second constraint in LP \eqref{eq:lp_with_slack} guarantees that $X^t$ will not have significantly less social welfare than $\hat{X}^t$. The bulk of the theoretical work in proving Theorem \ref{thm:sqrtT_regret} is showing that the previous two sentences together imply that $X^t$ will satisfy a (slightly more complicated) version of Equation \eqref{eq:to_show_regret_bound}. Once we show that Equation \eqref{eq:to_show_regret_bound} holds for $X^t$, a UCB argument bounds the regret of the algorithm to be $\tilde{O}(\sqrt{T})$.

\section{Properties of Proportionality Constraints}

The proof of Theorem \ref{thm:sqrtT_regret} relies on three key results about proportionality in expectation constraints, which we outline in this section.

The first result, Lemma \ref{def:notion_of_fairness}, is the reason that Algorithm \ref{algo:UCB} can explore and satisfy the proportionality in expectation constraints when the mean values are unknown.
\begin{lemma}\label{def:notion_of_fairness}
    The uniform at random allocation satisfies the proportionality in expectation constraints for any $\mu^* \in [a,b]^{n \times m}$.
\end{lemma}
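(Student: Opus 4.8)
The plan is to verify Definition \ref{def:proportionality} directly for the UAR allocation. Recall that a fractional allocation $X$ satisfies proportionality in expectation for $\mu^*$ precisely when $X_i \cdot \mu^*_i \ge \frac{1}{n}$ for every player $i$. The UAR allocation sets $X_{ik} = \frac{1}{n}$ for all $i,k$, so the $i$th row is the constant vector $X_i = \left(\frac{1}{n}, \ldots, \frac{1}{n}\right) \in \mathbb{R}^m$. The entire content of the lemma reduces to computing the dot product $X_i \cdot \mu^*_i$ for this particular row.

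The key step is the computation itself. For the UAR allocation,
\[
X_i \cdot \mu^*_i = \sum_{k=1}^m \frac{1}{n} \mu^*_{ik} = \frac{1}{n} \sum_{k=1}^m \mu^*_{ik} = \frac{1}{n} \norm{\mu^*_i}_1 = \frac{1}{n},
\]
where the penultimate equality uses that every $\mu^*_{ik} \ge a > 0$ so the entries are nonnegative and the $\ell_1$ norm is just their sum, and the final equality invokes the normalized values assumption $\sum_{k=1}^m \mu^*_{ik} = 1$. Since this holds with equality for every $i$, the constraint $X_i \cdot \mu^*_i \ge \frac{1}{n}$ is satisfied (in fact tight) for all players, which is exactly the proportionality in expectation condition. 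I would also note that the UAR allocation is valid, since $\sum_i X_{ik} = \sum_i \frac{1}{n} = 1$ for every $k$, so it is a legitimate candidate allocation.

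There is no real obstacle here: the statement is a one-line consequence of the normalization hypothesis, and the only thing to be careful about is making explicit where normalization enters. The reason this trivial-looking fact is stated as a lemma is its algorithmic role rather than its depth. Because UAR meets every proportionality constraint with equality regardless of the true means $\mu^*$, it is always feasible no matter how little the algorithm knows about $\mu^*$. This is precisely what licenses the initial exploration phase of Algorithm \ref{algo:UCB}, during which the algorithm plays UAR to collect samples while remaining provably proportional. The only subtlety worth flagging is that the argument uses $\mu^*_{ik} \ge a > 0$ to ensure the values are genuinely nonnegative (so that $\ell_1$-norm equals sum), but since the hypothesis $\mu^* \in [a,b]^{n\times m}$ with $a>0$ is given, this is immediate.
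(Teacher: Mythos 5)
Your proof is correct and takes essentially the same approach as the paper's: both verify that under UAR each player's expected value $X_i \cdot \mu^*_i$ equals exactly $\frac{1}{n}$, so every proportionality constraint holds with equality. The only difference is that you spell out explicitly where the normalization assumption $\sum_k \mu^*_{ik} = 1$ enters, which the paper's one-line proof leaves implicit.
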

\begin{proof}
    If every player is given exactly $1/n$ proportion of every item type as in the UAR allocation, then every player has value $1/n$ for their allocation. This implies that the proportionality in expectation constraints will be satisfied.
\end{proof}

The second lemma shows that the total social welfare of the optimal allocation is continuous in the mean value matrix. 

\begin{lemma}\label{def:continuous}
   Define $Y^\mu$ as the solution to LP \eqref{lp:ymu}. Then there exists a $\gamma_0$ such that if $\norm{\mu- \mu'}_1  \le \gamma_0$ for $\mu, \mu' \in  [a,b]^{n \times m}$, then $\frob{Y^{\mu}}{\mu} - \frob{Y^{\mu'}}{\mu'} \le \frac{bn}{a}\norm{\mu- \mu'}_1$.
\end{lemma}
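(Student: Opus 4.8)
The plan is to bound the optimal-value gap by exhibiting, from the optimizer $Y^\mu$ of LP \eqref{lp:ymu} at $\mu$, a nearby allocation $X'$ that is \emph{feasible for the perturbed instance} $\mu'$ and loses little objective value. Since $Y^{\mu'}$ is optimal for $\mu'$, any feasible $X'$ gives $\frob{Y^{\mu'}}{\mu'} \ge \frob{X'}{\mu'}$, so it suffices to show $\frob{Y^\mu}{\mu} - \frob{X'}{\mu'} \le \frac{bn}{a}\norm{\mu-\mu'}_1$. I would split this gap as $\frob{Y^\mu}{\mu} - \frob{X'}{\mu'} = \frob{Y^\mu}{\mu-\mu'} + \frob{Y^\mu - X'}{\mu'}$, isolating the (easy) change in objective from the (hard) cost of restoring feasibility.

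\textbf{The easy term and the deficits.} Because every valid fractional allocation has entries in $[0,1]$ (the columns are probability vectors), the objective-perturbation term is controlled directly: $\frob{Y^\mu}{\mu-\mu'} = \sum_{i,k} Y^\mu_{ik}(\mu_{ik}-\mu'_{ik}) \le \norm{\mu-\mu'}_1$. The same bound shows that each proportionality constraint degrades only slightly under $\mu'$: $Y^\mu_i\cdot\mu'_i \ge Y^\mu_i\cdot\mu_i - \norm{\mu_i-\mu'_i}_1 \ge \tfrac1n - \norm{\mu_i-\mu'_i}_1$. Thus the per-player \emph{deficit} $\delta_i := \max\!\left(0,\ \tfrac1n - Y^\mu_i\cdot\mu'_i\right)$ satisfies $\delta_i \le \norm{\mu_i-\mu'_i}_1$, and summing over players gives the global bound $\sum_i \delta_i \le \norm{\mu-\mu'}_1$.

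\textbf{Restoring feasibility (the crux).} To build $X'$ I would start from $Y^\mu$ and, for each player $i$ in deficit, shift additional allocation mass onto an item $k$ that player $i$ values highly, keeping each column summing to $1$ by withdrawing the corresponding mass from players who currently enjoy surplus. Since every value satisfies $\mu'_{ik}\ge a$, raising player $i$'s value by $\delta_i$ requires transferring at most $\delta_i/a$ units of mass, so the total mass moved is at most $\norm{\mu-\mu'}_1/a$; and since $\mu'_{ik}\le b$, the welfare lost by the donors is at most $b$ times the mass withdrawn, bounding $\frob{Y^\mu-X'}{\mu'}$ in terms of $\frac{b}{a}\norm{\mu-\mu'}_1$. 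Combining the two terms yields the claimed Lipschitz bound. The main obstacle, and where the factor $n$ enters, is guaranteeing that this redistribution can be carried out \emph{without the column constraints forcing new violations} — withdrawing mass from a player may itself push that player below $\tfrac1n$, and a naive cascade need not terminate since $b/a\ge 1$. I would resolve this by first checking that enough surplus exists globally: because the uniform at random allocation is feasible with value exactly $1/n$ per player (Lemma \ref{def:notion_of_fairness} and normalization), $\frob{Y^\mu}{\mu'}\ge \frob{Y^\mu}{\mu}-\norm{\mu-\mu'}_1 \ge 1-\norm{\mu-\mu'}_1$, so the total surplus is within $\norm{\mu-\mu'}_1$ of covering the total deficit. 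Routing each of the at most $n$ players' corrections through available surplus, and absorbing the cross-interactions into a single $O(n)$ accounting factor, is what produces the stated constant $\frac{bn}{a}$.
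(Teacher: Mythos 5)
Your overall route---repair $Y^\mu$ into an allocation $X'$ that is feasible for $\mu'$ by shifting mass from surplus players to deficit players---is essentially the paper's own redistribution argument (run ``across'' from $\mu$ to $\mu'$ rather than, as in the paper, ``within'' $\mu$ via the relaxed LP \eqref{eq:lp_Zmu}). But the step you yourself flag as the crux is exactly where your argument breaks, and the resolution you propose does not fix it. Your global surplus check---total surplus under $\mu'$ is within $\norm{\mu-\mu'}_1$ of the total deficit---is true but too weak, because redistribution is \emph{lossy}: withdrawing one unit of mass on item $k$ from a donor costs that donor up to $b$ in value, while the same unit of mass gives the deficit player as little as $a$. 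Hence erasing a total deficit $\sum_i \delta_i$ can consume up to $\tfrac{b}{a}\sum_i \delta_i$ of donor surplus, and the condition you actually need is $\tfrac{a}{b}\,(\text{total surplus}) \ge \text{total deficit}$; the inequality ``surplus $\ge$ deficit $-\norm{\mu-\mu'}_1$'' does not imply this since $a<b$. The failure mode is not exotic: if every proportionality constraint of $Y^\mu$ is tight, i.e. $Y^\mu_i\cdot\mu_i = \tfrac1n$ for all $i$ (the zero-slack situation the paper emphasizes as typical), and $\mu'$ is slightly smaller than $\mu$ on the support of $Y^\mu$, then under $\mu'$ every player is in deficit and there is \emph{no} surplus at all. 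No routing or $O(n)$ accounting can help, because transfers conserve mass: any withdrawal creates a fresh violation, which is precisely the non-terminating cascade you mention.

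The missing idea is a case split with a fallback to the uniform allocation, which is what the paper's proof does (Appendix C, Equations \eqref{eq:uar_case} and \eqref{eq:not_uar_case}). When $\tfrac{a}{b}\,(\text{total surplus}) < \text{total deficit}$, abandon redistribution and take $X' = \mathrm{UAR}$: it is feasible for every value matrix (Lemma \ref{def:notion_of_fairness}) and, under normalization, has welfare exactly $1$. The point is that in this regime $Y^\mu$ was barely better than UAR to begin with: in your notation, $\frob{Y^\mu}{\mu} \le \frob{Y^\mu}{\mu'} + \norm{\mu-\mu'}_1 = 1 + \sum_j s_j - \sum_i \delta_i + \norm{\mu-\mu'}_1 < 1 + \bigl(\tfrac{b}{a}+1\bigr)\norm{\mu-\mu'}_1$, while $\frob{Y^{\mu'}}{\mu'} \ge \frob{\mathrm{UAR}}{\mu'} = 1$, which gives the claimed bound. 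Only in the complementary case, where the scaled surplus covers the deficit, does your transfer argument go through as written (and it is there that the factor $\tfrac{bn}{a}$ legitimately arises). With this dichotomy added, your proof closes and matches the structure of the paper's; without it, the construction of $X'$ simply does not exist in the tight-constraint case.
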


\begin{proof}
    We provide a brief proof sketch and defer the formal proof to Appendix C. Define $\epsilon = \norm{\mu- \mu'}_1 $. Let $Z^{\mu}$ be the solution to the following modification of LP \ref{lp:ymu}: 
    \begin{align*}
        \max &  \frob{X}{\mu} \\
        \text{s.t. } & X_i \cdot \mu_i - \frac{1}{n} \ge - \epsilon  \quad \forall i \in [n]  \\
        & \sum_{i} X_{ik} = 1 \quad \forall k   \numberthis \label{eq:lp_Zmu_sketch}
    \end{align*} 
    
    We next construct a fractional allocation $W^\mu$ such that $W^{\mu}$ is a solution to LP \ref{lp:ymu} and such that $\frob{W^\mu}{\mu} - \frob{Z^{\mu}}{\mu} \ge - O(\epsilon)$. That is to say, the social welfare of $W^{\mu}$ is not much worse than that of $Z^{\mu}$. Because $W^{\mu}$ is a solution to LP \ref{lp:ymu}, we have 
    \[
        \frob{Y^{\mu}}{\mu} \ge \frob{W^{\mu}}{\mu} \ge \frob{Z^{\mu}}{\mu} - n\epsilon.
    \]
    Next, we note that $Y^{\mu'}$ is also a solution to LP \ref{eq:lp_Zmu_sketch} by construction. Therefore, it must be the case that 
    \[
        \frob{Z^{\mu}}{\mu} \ge \frob{Y^{\mu'}}{\mu}.
    \]
    
    Combining the previous two equations gives that $\frob{Y^{\mu}}{\mu} \ge \frob{Y^{\mu'}}{\mu'} - n\epsilon$. By symmetry, we also have that $\frob{Y^{\mu'}}{\mu'} \ge \frob{Y^{\mu}}{\mu} - n\epsilon$, and together with the previous equation this proves the desired result.
\end{proof}

Lemma \ref{def:fairness_to_UAR2} is the key reason that Algorithm \ref{algo:UCB} is able to find an allocation that satisfies the proportionality constraints without losing significant social welfare. Informally, this lemma states that for a known mean matrix $\mu$, there exists an allocation $X'$ such that $X'$ has only slightly less expected social welfare than $Y^{\mu}$ and such that either $X' = \mathrm{UAR}$, or $X'$ is close to $Y^{\mu}$ \textit{and} every proportionality constraint has non-negligible slack under $X'$. 
\begin{lemma}\label{def:fairness_to_UAR2}
    Define $Y^\mu$ as the solution to LP \eqref{lp:ymu}. Then for any $\gamma < \frac{a}{bn}$ and any $\mu \in [a,b]^{n \times m}$, there exists an allocation $X'$ such that  $\frob{X'}{\mu} \ge \frob{Y^\mu}{\mu} - \frac{bn\gamma}{a}$ and either $X' = \mathrm{UAR}$ or for each $i \in [n]$,
    \begin{enumerate}
        \item $X_i' \cdot \mu_i \ge  \frac{1}{n} + \gamma$ and
        \item  $\forall i\in [n], k \in [m]$, $|X'_{ik} -  Y^\mu_{ik}| \le \frac{n\gamma}{a}$.
    \end{enumerate}
\end{lemma}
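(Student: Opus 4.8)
The plan is to construct $X'$ explicitly and split into two cases according to whether $Y^\mu$ carries enough ``excess'' value to be perturbed into a strictly proportional allocation. The tempting first attempt — a convex combination $X' = (1-\lambda)Y^\mu + \lambda\,\mathrm{UAR}$ — does \emph{not} suffice: because values are normalized, $\mathrm{UAR}_i\cdot\mu_i = \frac{1}{n}\norm{\mu_i}_1 = \frac{1}{n}$, so any mixture leaves a player whose proportionality constraint is tight under $Y^\mu$ pinned at exactly $\frac{1}{n}$, producing no slack. Hence mixing toward $\mathrm{UAR}$ can never establish condition (1) for the tight players, and a genuine reallocation of mass is needed.

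First I would record two welfare facts. Since $\mathrm{UAR}$ is proportional (Lemma \ref{def:notion_of_fairness}) and feasible for LP \eqref{lp:ymu}, we have $\frob{Y^\mu}{\mu} \ge \frob{\mathrm{UAR}}{\mu} = 1$; and writing $s_i := Y^\mu_i\cdot\mu_i - \frac{1}{n} \ge 0$ for the per-player slack, the total excess is $S := \sum_i s_i = \frob{Y^\mu}{\mu} - 1$. The dichotomy of the lemma corresponds to whether this excess is small or large. In the ``small excess'' case I would simply take $X' = \mathrm{UAR}$: the welfare gap is exactly $S$, so choosing the threshold to be $S \le \frac{bn\gamma}{a}$ immediately yields $\frob{X'}{\mu} = 1 \ge \frob{Y^\mu}{\mu} - \frac{bn\gamma}{a}$, landing in the ``$X' = \mathrm{UAR}$'' branch of the statement.

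In the ``large excess'' case I would build $X' = Y^\mu + \Delta$, raising every player to slack at least $\gamma$. For each player $i$ with $s_i < \gamma$, the deficit $\gamma - s_i \le \gamma$ is covered by shifting additional mass onto the items $i$ values most, drawn from players whose slack on those items exceeds $\gamma$. Because every value lies in $[a,b]$, covering a deficit of at most $\gamma$ moves at most $\gamma/a$ units of mass for that player, so each donor loses at most $\frac{b}{a}\gamma$ of value and the total mass moved is at most $n\gamma/a$. This accounting is exactly what produces the two bounds: every entry of $\Delta$ has magnitude at most $n\gamma/a$ (condition (2)), and $\frob{\Delta}{\mu} \ge -\frac{bn\gamma}{a}$, since welfare is lost at a rate of at most $b$ per unit of moved mass. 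The hypothesis $\gamma < \frac{a}{bn}$ keeps $n\gamma/a < \frac{1}{b} \le 1$, ensuring the perturbed entries remain valid probabilities and that donors are not driven below slack $\gamma$.

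The main obstacle is proving that this reallocation is \emph{simultaneously} feasible for all deficient players: the column constraints $\sum_i X_{ik} = 1$ couple the players, and a tight player's favored items may be held only by other tight players, so there is no a priori guarantee that enough ``donatable'' mass sits on the right items. I expect the heart of the argument to be showing that whenever $S$ is large this coupling cannot block the reallocation — equivalently, that if no $\gamma$-slack perturbation of $Y^\mu$ exists then $S$ must be small and the $\mathrm{UAR}$ branch applies — so that the two cases are exhaustive with the stated constants. This is precisely the flavor of the ``repair'' construction used (and deferred to the appendix) in Lemma \ref{def:continuous}, which converts a near-feasible allocation into a feasible one with $O(\epsilon)$ welfare loss; here I would adapt it to \emph{over}-repair a feasible allocation into a strictly $\gamma$-slack one, most plausibly by invoking the complementary-slackness structure of $Y^\mu$ and using the dual variables of the tight constraints to certify that buying $\gamma$ units of slack per player costs at most $\frac{b}{a}\gamma$ in welfare.
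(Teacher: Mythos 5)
Your Case 1 is exactly the paper's: writing $S_i := Y^\mu_i\cdot\mu_i - \frac{1}{n}$ for the per-player slack, when $\sum_i S_i \le \frac{bn\gamma}{a}$ the paper also takes $X' = \mathrm{UAR}$, and by normalization the welfare loss is exactly $\sum_i S_i$. Your observation that convex mixing toward $\mathrm{UAR}$ cannot create slack for tight players is also correct. The genuine gap is Case 2, which is the entire substance of the lemma: your targeted-transfer scheme (deficient players draw mass, on their most-valued items, from players with excess slack) runs into precisely the donor--recipient coupling obstacle that you yourself identify, and your proposed resolution via complementary slackness and dual variables of LP \eqref{lp:ymu} is never carried out --- it is a conjecture about how a proof might go, not a proof. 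In particular, nothing in your proposal establishes that the two cases are exhaustive with the stated constants, nor that donors retain slack $\gamma$ after giving mass away.

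The paper's construction (following Lemma 2 of \citep{procaccia2024honor}) dissolves the coupling rather than resolving it: no matching of donors to recipients is needed at all. In Case 2, i.e. $\sum_i S_i > \frac{bn\gamma}{a}$, \emph{every} player donates mass proportional to its own slack, spread over items in proportion to its own allocation, $\Delta_{ik} = \frac{Y^\mu_{ik}}{\sum_{k'}Y^\mu_{ik'}}\cdot\frac{S_i}{\sum_{i'}S_{i'}}\cdot\frac{n\gamma}{a}$, and the pooled mass is redistributed \emph{uniformly}: $X'_{ik} = Y^\mu_{ik} - \Delta_{ik} + \frac{1}{n}\sum_{i'}\Delta_{i'k}$. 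Column sums are preserved identically. Non-negativity holds because the Case 2 hypothesis gives $\frac{n\gamma}{a\sum_{i'}S_{i'}} < \frac{1}{b}$, hence $\Delta_{ik} < Y^\mu_{ik}\cdot\frac{S_i}{b\sum_{k'}Y^\mu_{ik'}} \le Y^\mu_{ik}$, using $S_i \le Y^\mu_i\cdot\mu_i \le b\sum_{k'}Y^\mu_{ik'}$. The total mass moved is $\sum_{i,k}\Delta_{ik} = \frac{n\gamma}{a}$, so every entry changes by at most $\frac{n\gamma}{a}$ (your condition (2)) and welfare drops by at most $b\cdot\frac{n\gamma}{a}$. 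Finally, the step you left unresolved --- that everyone ends with slack at least $\gamma$ --- follows because each player receives exactly $\frac{\gamma}{a}$ units of mass and \emph{any} mass is worth at least $a$ to \emph{any} player, so each player gains at least $\gamma$ in value, while the same Case 2 bound shows each donor's value loss is strictly below its own slack $S_i$. It does not matter which items the donated mass sits on; that is exactly what makes the feasibility question you were attacking with duality vanish.
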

The proof of Lemma \ref{def:fairness_to_UAR2} uses the same construction as in Lemma 2 of \citep{procaccia2024honor}, and we defer the proof to Appendix D. Informally, the construction either sets $X' = \mathrm{UAR}$ or constructs $X'$ by redistributing allocation away from players with large proportionality surplus (i.e. players who strictly prefer their allocation to UAR). Unlike in \citep{procaccia2024honor}, the proof uses that this redistribution process always redistributes at most $n\gamma/a$ from any (player, item) pair, thereby satisfying the second condition of Lemma \ref{def:fairness_to_UAR2}.

\section{Proof Sketch of Theorem \ref{thm:sqrtT_regret}}

We are now ready to present the proof sketch of Theorem \ref{thm:sqrtT_regret}. In Appendix A, we prove a more general result than Theorem \ref{thm:sqrtT_regret} that applies to any set of fairness constraints satisfying general versions of Properties \ref{def:notion_of_fairness}, \ref{def:continuous}, and \ref{def:fairness_to_UAR2}. For this proof sketch, we will outline the proof for proportionality constraints. See Appendix A for more details on the general version of this result.

\begin{proof}[Proof sketch]
By Lemma \ref{def:notion_of_fairness}, the UAR allocations used for the first $\sqrt{T}\log^2(T)$ steps will satisfy the proportionality constraints. The regret of the first $\sqrt{T}\log^2(T)$ steps is $\tilde{O}(\sqrt{T})$ because the regret of any one step is upper bounded by $b-a$.

Now we study what happens in the algorithm for $t \ge \sqrt{T}\log^2(T)$. Define event $E$ as the event that $\mu^* \in B(\hat{\mu}^t, \epsilon^t)$ and $\norm{\epsilon^t}_1 = \tilde{O}(T^{-1/4})$ for every $t$. By two applications of Azuma--Hoeffding's inequality, $\Pr(E) = 1-\frac{2}{3T}$. Under event $E$, for every round $t$ and for all $i,k$, the allocation $\hat{Z}^{ik}$ will satisfy the proportionality in expectation constraints for $\mu^*$ due to the first constraint in LP \eqref{eq:lp_with_slack}. Because $X^t$ is a linear combination of the $\hat{Z}^{ik}$ and the proportionality constraints are linear, this implies that under event $E$, $X^t$ will also satisfy the proportionality in expectation constraints for $\mu^*$.

Now we will bound the regret of Algorithm \ref{algo:UCB} for $t \ge \log^2(T)\sqrt{T}$. The key step in bounding this regret is showing that the regret at time $t \ge \log^2(T)\sqrt{T}$ is
\begin{equation}\label{eq:to_show_regret_at_time_t}
\frob{Y^{\mu^*}}{\mu^*} - \frob{X^t}{\mu^*} =  \tilde{O}\left(\frob{X^t}{\epsilon^t}+ \frac{1}{\sqrt{T}}\right) .
\end{equation}
In order to show Equation \eqref{eq:to_show_regret_at_time_t}, we first bound the regret at time $t$ by an expression which does not contain any terms involving $\mu^*$. 

Under event $E$, $\mu^* \in B(\hat{\mu}^t, \epsilon^t)$. $G^t$ forms a grid on $B(\hat{\mu}^t, \epsilon^t)$, and therefore there exists some element $\mu_g \in G^t$ such that $\norm{\mu^* - \mu_g}_{\infty} \le \frac{1}{\sqrt{T}}$. By Lemma \ref{def:continuous}, this implies that 
\begin{equation}\label{eq:to_use1}
|\frob{Y^{\mu^*}}{\mu^*} - \frob{Y^{\mu_g}}{\mu_g}| = O\left(\frac{1}{\sqrt{T}}\right).
\end{equation}
Using the first constraint of LP \eqref{eq:lp_with_slack}, we can bound $\frob{\hat{X}^t}{\mu^U_t} - \frob{\hat{Z}^{ik}}{\mu^U_t}$. By construction of $X^t$ and $\mu^g$, this implies that
\begin{align*}
&\frob{\hat{X}^t}{\mu^g} - \frob{X^t}{\mu^*} \\
&= \tilde{O}_T\left( \max_{\mu \in G^t} \frob{Y^{\mu}}{\epsilon^t} + \frob{\hat{X}^t}{\epsilon_t} + \frob{\hat{Z}^{ik}}{\epsilon^t}\right).\numberthis \label{eq:to_use2}
\end{align*}
Furthermore, a series of algebraic steps with a UCB argument shows that 
\begin{align*}
    &\frob{Y^{\mu^g}}{\mu^g} - \frob{\hat{X}^t}{\mu^g} \\
    &=  \tilde{O}\left(\max_{\mu \in G^t} \frob{Y^{\mu}}{\epsilon^t} + \langle \hat{X}^t, \epsilon^t \rangle_F   \right). \numberthis \label{eq:to_use3}
\end{align*}
Combining Equations \eqref{eq:to_use1}, \eqref{eq:to_use2}, and \eqref{eq:to_use3} gives the following key result.
\begin{align*}
&\frob{Y^{\mu^*}}{\mu^*} - \frob{X^t}{\mu^*}\\
&=  \tilde{O}\left( \frob{\hat{X}^t}{\epsilon^t} + \max_{\mu \in G^t} \frob{Y^{\mu}}{\epsilon^t} + \frob{X^t}{\epsilon^t} +\frac{1}{\sqrt{T}}\right). \numberthis \label{eq:to_bound_terms}
\end{align*}
To show Equation \eqref{eq:to_show_regret_at_time_t} from Equation \eqref{eq:to_bound_terms}, we bound the first two  terms in Equation \eqref{eq:to_bound_terms} by $O(\frob{X^t}{\epsilon^t} + \frac{1}{\sqrt{T}})$ conditional on event $E$. First, we show that for all $i$ and $k$, $\hat{X}^t$ satisfies the constraints of LP \eqref{eq:lp_with_slack}. We can then conclude that $\hat{Z}^{ik}_{ik} \ge \hat{X}^t_{ik}$ because $\hat{Z}^{ik}$ is the solution to LP \eqref{eq:lp_with_slack}. $X^t$ is a linear combination of the $\hat{Z}^{ik}$, and therefore the previous sentence implies that $X^t_{ik} \ge \frac{1}{nm}\hat{X}^t_{ik}$. This implies that 
\begin{equation}\label{eq:to_combine1}
    \frob{\hat{X}^t}{\epsilon^t} = O(\frob{X^t}{\epsilon^t}).
\end{equation}
Under event $E$, Lemma \ref{def:fairness_to_UAR2} with a carefully chosen value of $\gamma$ implies that for every $\mu \in G^t$, there exists an allocation $X^{\mu}$ such that $\frob{X^{\mu}}{\mu}$ is similar to $\frob{Y^{\mu}}{\mu}$ and such that $X^{\mu}$ is a solution to LP \ref{eq:lp_with_slack}. The fact that $X^{\mu}$ is a solution to LP \ref{eq:lp_with_slack} is a result of the careful construction of constraints in LP \ref{eq:lp_with_slack} and choice of $\gamma$ for Lemma \ref{def:fairness_to_UAR2}. Because $X^\mu$ is a solution to LP \eqref{eq:lp_with_slack}, by the same logic as in the previous paragraph, we have that $\frob{X^{\mu}}{\epsilon^t} = O(\frob{X^t}{\epsilon^t})$. $X^{\mu}$ was constructed in Lemma \ref{def:fairness_to_UAR2} such that the elements of $X^{\mu}$ are close to the elements of $Y^{\mu}$. Therefore, $\frob{Y^{\mu}}{\epsilon^t} = O(\frob{X^t}{\epsilon^t} + \frac{1}{\sqrt{T}})$ for all $\mu \in G^t$, or equivalently 
\begin{align*}
    \max_{\mu \in G^t} \frob{Y^{\mu}}{\epsilon^t} = O(\frob{X^t}{\epsilon^t} + \frac{1}{\sqrt{T}}). \numberthis \label{eq:to_combine2}
\end{align*}
Putting together Equations \eqref{eq:to_bound_terms}, \eqref{eq:to_combine1}, and \eqref{eq:to_combine2} gives the desired result of Equation \eqref{eq:to_show_regret_at_time_t}. Using Equation \eqref{eq:to_show_regret_at_time_t}, we can conclude with an upper confidence bound argument to upper bound the total $T$-step regret by
\begin{align*}
&\sum_{t=0}^{T-1} \frob{Y^{\mu^*}}{\mu^*} - \frob{X^t}{\mu^*} \\
&\le \tilde{O}(\sqrt{T}) + \sum_{t=\log^2(T)\sqrt{T}}^{T-1} \tilde{O}\left(\frob{X^t}{\epsilon^t} + \frac{1}{\sqrt{T}}\right)\\
&= \tilde{O}(\sqrt{T}).
\end{align*}
\end{proof}

\subsection{Lower Bounds}

The regret of $\tilde{O}(\sqrt{T})$ in Theorem \ref{thm:sqrtT_regret} is tight for $T$ up to log factors because $\tilde{\Omega}(\sqrt{T})$ is the standard lower bound for stochastic multi-armed bandit problems. A natural follow-up question to Theorem \ref{thm:sqrtT_regret} is whether an equivalent result holds when the algorithm must satisfy other notions of fairness such as envy-freeness in expectation. For the online fair division problem, envy-freeness in expectation can be represented as constraints in the following way:

\begin{definition}\label{def:ef}
    An algorithm $\alg$ that uses fractional allocation $X^t$ at time $t$ satisfies \emph{envy-freeness in expectation  }if for all $t < T$ and all $i \in [n]$,
    \[(X_t)_i \cdot \mu^*_i \ge \max_{i' \in [n]} \: \: (X_t)_{i'} \cdot \mu^*_i \quad \forall i \in [n].
    \]
\end{definition}
Theorem \ref{thm:lower_bounds} shows that an equivalent result to Theorem \ref{thm:sqrtT_regret} does not hold for envy-freeness in expectation, and in fact the best regret possible while maintaining envy-freeness in expectation is $\tilde{\Omega}(T^{2/3})$.

\begin{thm}\label{thm:lower_bounds}
    There exists $a,b,n,m$ such that no algorithm can, for all $\mu^* \in [a,b]^{n \times m}$ with rows that add to $1$, both satisfy the envy-freeness in expectation constraints and achieve regret of less than $\frac{T^{2/3}}{\log(T)}$ with probability at least $1-1/T$.
\end{thm}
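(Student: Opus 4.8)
The plan is to prove the lower bound by exhibiting a family of \emph{normalized} instances on which envy-freeness in expectation (Definition \ref{def:ef}) forces a genuine safe-exploration trade-off, and then balancing the cost of exploration against the cost of the safety margin at the scale $\delta = T^{-1/3}$. At a high level I would isolate two structural properties that together yield $\tilde\Omega(T^{2/3})$ and that are precisely the negation, for envy-freeness, of the property that makes proportionality easy (Lemma \ref{def:fairness_to_UAR2}): (P1) \textbf{costly exploration} --- there is a critical pair $(i^*,k^*)$ such that every envy-free allocation placing probability $\rho$ on $(i^*,k^*)$ has social welfare at most $\frob{Y^{\mu^*}}{\mu^*}-\Omega(\rho)$; and (P2) \textbf{zero-slack margin} --- the welfare-optimal envy-free allocation lies on an envy constraint whose feasibility depends on $\mu^*_{i^*k^*}$, so that uncertainty $\delta$ in $\mu^*_{i^*k^*}$ forces any whp-safe allocation to retreat from the optimum by $\Omega(\delta)$, losing $\Omega(\delta)$ welfare per step. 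Property (P1) is exactly the configuration ruled out for proportionality by Lemma \ref{def:fairness_to_UAR2}, and is what prevents the $\tilde O(\sqrt T)$ argument of Theorem \ref{thm:sqrtT_regret} from going through.

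The core difficulty is realizing (P1) and (P2) simultaneously under the normalization $\sum_k \mu^*_{ik}=1$, because normalization tends to leak the value of any rarely-sampled mean. I would sidestep this with $n=m=3$ and a critical pair $(i^*,k^*)=(1,2)$ that player $1$ \emph{never receives} at the optimum. Concretely, I take an instance in which the welfare-optimal envy-free allocation gives item $1$ to player $1$ and splits items $2$ and $3$ between players $2$ and $3$, with $\mu^*_{22}$ a constant larger than $\mu^*_{12}$ (so reallocating item $2$ toward player $1$ is welfare-costly, giving (P1)) and with player $1$'s envy of player $2$ exactly tight, i.e. $\mu^*_{11}\approx\mu^*_{12}$ (giving (P2)). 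Since player $1$ receives only item $1$, the algorithm learns only $\mu^*_{11}$ and hence only the sum $\mu^*_{12}+\mu^*_{13}=1-\mu^*_{11}$; the split $(\mu^*_{12},\mu^*_{13})$ is \emph{not} pinned down by normalization. I then define two instances $\mu^{(0)},\mu^{(1)}\in[a,b]^{3\times 3}$ with unit row sums that agree except that they move $\mu_{12}$ by $\pm\delta$ (compensating in $\mu_{13}$ to keep the sum, and hence every learnable statistic, fixed), chosen so that the $\mu^{(0)}$-optimal allocation is envy-free for $\mu^{(0)}$ but violates player $1$'s envy of player $2$ by $\Omega(\delta)$ under $\mu^{(1)}$.

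With $\delta=T^{-1/3}$, I would fix the true instance to be $\mu^{(0)}$ and argue by a case split on $N$, the expected number of rounds in which item $2$ is allocated to player $1$ --- the only event informative for distinguishing $\mu^{(0)}$ from $\mu^{(1)}$. If $N\ge cT^{2/3}$, then by (P1) each unit of probability on $(1,2)$ costs $\Omega(1)$ welfare, so the expected regret on $\mu^{(0)}$ is $\Omega(N)=\Omega(T^{2/3})$. If $N< cT^{2/3}$, then since $\mu^{(0)}$ and $\mu^{(1)}$ induce different observation laws only when item $2$ is given to player $1$, a change-of-measure bound (Pinsker with the KL chain rule) gives total variation below $1/2$ between the two histories, so the round-by-round allocations are statistically close under the two instances. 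Because the algorithm is assumed envy-free in expectation whp for \emph{every} $\mu^*$, in particular for $\mu^{(1)}$, it cannot drive player $1$'s envy-of-$2$ slack below $\Omega(\delta)$ on more than a vanishing fraction of rounds: doing so under $\mu^{(0)}$ would, by statistical closeness, do so under $\mu^{(1)}$ as well, where such allocations violate envy-freeness by $\Omega(\delta)$, contradicting safety on $\mu^{(1)}$. Hence under $\mu^{(0)}$ the algorithm keeps margin $\Omega(\delta)$ on a constant fraction of rounds, and by (P2) this costs $\Omega(\delta)$ welfare per such round, for expected regret $\Omega(T\delta)=\Omega(T^{2/3})$ on $\mu^{(0)}$. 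In both cases the single instance $\mu^{(0)}$ forces expected regret $\Omega(T^{2/3})$, and a standard high-probability conversion (using that per-round regret is bounded by $b-a$, so the $\le 1/T$-probability failure event contributes only $O(1)$ to the expectation) rules out regret below $T^{2/3}/\log T$ with probability at least $1-1/T$.

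I expect the main obstacle to be the instance design of the second paragraph: making a tight envy constraint depend on a mean that is simultaneously welfare-costly to sample (P1) and invisible to normalization (P2), which is exactly the regime excluded for proportionality by Lemma \ref{def:fairness_to_UAR2}. The delicate points I anticipate are (i) verifying that the $\mu^{(0)}$-optimal allocation is genuinely optimal among \emph{all} envy-free allocations rather than a convenient feasible point, and (ii) checking that perturbing $\mu_{12}$ by $\delta$ shifts the envy polytope by $\Theta(\delta)$ while leaving every learnable statistic fixed, which is what lets (P2) and the indistinguishability bound coexist. The second technical hurdle is converting the high-probability, per-round safety guarantee into the margin lower bound used in the small-$N$ case: since $X^t$ is history-dependent and the constraint is required at every $t$, I would formalize ``small margin on $\mu^{(0)}$ implies a safety violation on $\mu^{(1)}$'' through the coupling induced by the total-variation bound together with a union bound over the two safety events. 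This adaptation of the non-normalized construction of \citep{procaccia2024honor} to the normalized regime --- shielding the critical mean from normalization by letting player $1$ receive only a single item type --- is the crux of the argument.
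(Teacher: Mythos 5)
Your proposal takes essentially the same route as the paper's proof: the paper likewise uses a normalized $3\times 3$ two-instance construction in which the perturbed entries ($\pm T^{-1/3}$, compensating within one row to preserve normalization) belong to a player who receives only a single item type at the envy-free optimum, so that low regret on one instance forces too few informative samples, and a change-of-measure argument (Bretagnolle--Huber in the paper, Pinsker in yours) transfers the near-optimal allocations to the perturbed instance where the tight envy constraint is violated. One small fix when you write it out: since your compensating perturbation also moves $\mu_{13}$, rounds in which player $1$ receives item $3$ are informative too, so your exploration count $N$ and the welfare-cost bound must cover the $(1,3)$ pair as well --- exactly as the paper's Lemma \ref{lemma:bounded_NT} charges regret for both perturbed entries (plus the third term $X_{32}^t$ that pins the critical item on the critical player).
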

\begin{proof}[Proof sketch]
    The proof of this theorem extends the lower bound construction of \citep{procaccia2024honor} to an example with normalized mean values. We prove the desired result by contradiction. First, we assume that an algorithm $\alg$ does achieve regret of less than $\frac{T^{2/3}}{\log(T)}$ and satisfies the envy-freeness in expectation constraints with probability greater than $1-1/T$ for all mean value matrices with normalized values. We then construct two mean value matrices each with three players and three types of items that differ only in two entries by $T^{-1/3}$, and therefore are difficult to distinguish between. We then show that $\alg$ cannot simultaneously have low regret and satisfy the envy-freeness in expectation constraints for both of these mean value matrices, which leads to a contradiction. See Appendix E for the formal proof.
\end{proof}

\section{Discussion}\label{sec:discussion}

\subsection{Non-random Item Types}
We studied the online fair division problem where $T$ items arrive online and each item's type is drawn uniformly at random. In this section, we discuss how our results extend to a similar problem where all item types are deterministic. Suppose that instead of a single item arriving at each of $T$ time steps, a basket containing exactly one of every item type arrives at each of $T/m$ time steps. The algorithm then must allocate every item in the basket among the $n$ players using a random fractional allocation. The goal in this alternative setting is to maximize expected social welfare while satisfying the proportionality in expectation constraints for each basket. 

For the setting studied in this paper, there are three sources of randomness: random item types, random player values, and random fractional allocations. In this alternative setting the item types are not random, and therefore there are only two sources of randomness: random player values and random fractional allocations. Note that $m$ items with uniform at random types are equal in expectation to a basket of $m$ items with one of every item type. Furthermore, both the constraints and the reward function for both settings are expressed as expectations. Therefore, the difference in sources of randomness does not fundamentally change the problem. This implies that the results from this paper (such as Theorem \ref{thm:sqrtT_regret}) carry over to this setting with deterministic item types. 

\subsection{Limitations and Future Directions}

In this section, we discuss the limitations of this paper and some potential future directions. One limitation of Algorithm \ref{algo:UCB} is that the runtime is not linear due to the second round of optimization. An open question is whether an algorithm with linear runtime can achieve the same rate of regret. Furthermore, the main result of this paper, Theorem \ref{thm:sqrtT_regret}, only applies to proportionality constraints. In the proof of Theorem \ref{thm:sqrtT_regret}, we do present a general algorithm that achieves $\tilde{O}(\sqrt{T})$ regret for any fairness constraints satisfying certain properties. An open question is whether there exist other types of fairness constraints (e.g. equitability) for which $\tilde{O}(\sqrt{T})$ is also possible. We leave this as an open question for future work.

Another interesting question is whether the ideas in this paper can be extended to other fair division problems. More broadly, the techniques used in this paper are not exclusive to fairness constraints. Therefore, another question is whether similar ideas can lead to algorithms that achieve $\tilde{O}(\sqrt{T})$ regret under even more general classes of constraints.

\section{Acknowledgements}

Schiffer was supported by an NSF Graduate Research Fellowship. Zhang was supported by an NSF Graduate Research Fellowship. The authors would also like to thank Ariel Procaccia for helpful discussions.

\newpage
\bibliographystyle{abbrvnat}
\bibliography{abb,ultimate, bib}

\newpage
\appendix

\section{Proof of Theorem \ref{thm:sqrtT_regret}}\label{app:general_definitions}

To prove Theorem \ref{thm:sqrtT_regret}, we use a general framework for analyzing fairness constraints introduced in \cite{procaccia2024honor}. Note that Property \ref{def:continuous_app} is a new property not previously studied in \cite{procaccia2024honor}, while Properties \ref{def:fairness_to_UAR2_app} and \ref{def:lipschitz} are variants of properties from \cite{procaccia2024honor}. We also note that the usage of the properties in the proof of Lemma \ref{lemma:general_sqrtT_regret} is significantly different than how these properties are used in \cite{procaccia2024honor}.

We consider sets of fairness constraints $\{(B_\ell(\mu^*) , c_\ell)\}_{\ell=1}^{L}$, where $B_\ell(\mu^*) \in \mathbb{R}^{n \times m}$ and $c_\ell \in \mathbb{R}$. An algorithm $\alg$ that uses allocation $X^t$ at time $t$ satisfies these constraints if $\frob{B_\ell(\mu^*)}{X^t} \ge c_\ell$ for all $t$. The expected social welfare maximizing allocation satisfying these constraints for a given value matrix $\mu$ is
\begin{align*}
    Y^\mu := \arg\max_X & \: \:  \frob{X}{\mu} \\
    \text{s.t. } & \frob{B_\ell(\mu)}{X} \ge c_\ell \quad \forall \ell \\
    & \sum_{i} X_{ik} = 1 \quad \forall k  \numberthis \label{lp:ymu_app}
\end{align*}
The regret of $\alg$ (that uses allocation $X^t$ at time $t$) for constraints $\{(B_\ell(\mu^*), c_\ell\}_{\ell=1}^{L}$ can be represented as $T \cdot \frob{Y^{\mu^*}}{\mu^*} - \sum_{t=0}^{T-1} \frob{X^t}{\mu^*}$.

In this framework, we can define the proportionality in expectation constraints as follows:
 
 \begin{remark}\label{remark_2}
    For every $\ell \in [n]$, construct $B_\ell$ as follows. For every $k \in [m]$, $B_{\ell}(\mu^*)_{\ell k} = \mu^*_{\ell k}$ and $B_{\ell}(\mu^*)_{ik} = 0$ for every $i \ne \ell$.  The proportionality in expectation constraints can be written as $\left\{\left(B_\ell(\mu^*),\frac{1}{n}\right) \right\}_{\ell=1}^{n}$.
\end{remark}

We now state the general versions of Properties \ref{def:notion_of_fairness}, \ref{def:continuous}, and \ref{def:fairness_to_UAR2}.

\begin{property}\label{def:notion_of_fairness_app}
    A constraint $(B, c)$ is  \emph{satisfied by UAR} if  the uniform at random (UAR) allocation satisfies the constraint, i.e. $\frac{1}{n}\norm{B}_1 \le c$.
\end{property}

\begin{property}\label{def:fairness_to_UAR2_app}
    Define $Y^\mu$ as the solution to LP \eqref{lp:ymu_app}. Then there exists $ \gamma_0, C_{\mathrm{P}\ref{def:fairness_to_UAR2_app}} > 0$ such that for any $\gamma < \gamma_0$ and any $\mu \in [a,b]^{n \times m}$, there exists $X'$ such that $\frob{X'}{\mu} \ge \frob{Y^\mu}{\mu} - C_{\mathrm{P}\ref{def:fairness_to_UAR2_app}}\gamma$ and either $X' = \mathrm{UAR}$ or
    \begin{enumerate}
        \item $\langle B_\ell(\mu), X' \rangle_F \ge  c_\ell + \gamma$ for all $\ell \in [L]$ and
        \item $\forall i\in [n], k \in [m]$, $|X'_{ik} -  Y^\mu_{ik}| \le C_{\mathrm{P}\ref{def:fairness_to_UAR2_app}}\gamma$.
    \end{enumerate}
\end{property}

\begin{property}\label{def:continuous_app}
   Define $Y^{\mu}$ as the solution to LP \eqref{lp:ymu_app}. There exists $\gamma_0,C_{\mathrm{P}\ref{def:continuous_app}}$ such that if $\norm{\mu- \mu'}_1  \le \gamma_0$ for $\mu, \mu' \in  [a,b]^{n \times m}$, then $\frob{Y^{\mu}}{\mu} - \frob{Y^{\mu'}}{\mu'} \le C_{\mathrm{P}\ref{def:continuous_app}} \norm{\mu- \mu'}_1$.
\end{property}

We also need one additional technical property.
 \begin{property}\label{def:lipschitz}
     There exists a constant $K > 0$ such that for all $\mu^1, \mu^2 \in [a,b]^{n \times m}$, if $|\mu^1_{ik} - \mu^2_{ik}| \le \epsilon_{ik}$ then $|B_\ell(\mu^1)_{ik} - B_\ell(\mu^2)_{ik}| \le K\epsilon_{ik}$ for all $i,k$.
 \end{property}

\begin{observation}\label{observation:satisfies_lipschitz_and_zeros}
    The proportionality in expectation constraints defined in Remark \ref{remark_2} satisfies Property \ref{def:lipschitz} with $K = 1$.
\end{observation}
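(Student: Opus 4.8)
The plan is to verify Property \ref{def:lipschitz} directly from the explicit form of the constraint matrices $B_\ell$ given in Remark \ref{remark_2}, by a short case analysis on the row index. Fix any $\ell \in [n]$ and any two mean matrices $\mu^1, \mu^2 \in [a,b]^{n \times m}$ satisfying $|\mu^1_{ik} - \mu^2_{ik}| \le \epsilon_{ik}$ for all $i,k$. I would then show that $|B_\ell(\mu^1)_{ik} - B_\ell(\mu^2)_{ik}| \le \epsilon_{ik}$ for every $(i,k)$, which is exactly the conclusion of Property \ref{def:lipschitz} with $K = 1$.

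The argument splits according to whether $i = \ell$. When $i = \ell$, the definition in Remark \ref{remark_2} gives $B_\ell(\mu^1)_{\ell k} = \mu^1_{\ell k}$ and $B_\ell(\mu^2)_{\ell k} = \mu^2_{\ell k}$, so that $|B_\ell(\mu^1)_{\ell k} - B_\ell(\mu^2)_{\ell k}| = |\mu^1_{\ell k} - \mu^2_{\ell k}| \le \epsilon_{\ell k}$ by hypothesis. When $i \ne \ell$, both entries are identically zero by construction, so $|B_\ell(\mu^1)_{ik} - B_\ell(\mu^2)_{ik}| = 0 \le \epsilon_{ik}$, using that each $\epsilon_{ik} \ge 0$ as it bounds an absolute difference. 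In both cases the required bound holds with constant $1$, which completes the verification.

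There is essentially no obstacle here: because each matrix $B_\ell(\mu)$ is obtained from $\mu$ by copying the $\ell$th row verbatim and zeroing out every other row, the map $\mu \mapsto B_\ell(\mu)$ acts entrywise as either the identity or the zero map, and is therefore trivially $1$-Lipschitz entrywise. The only point worth stating explicitly is that the constant $K = 1$ is uniform over all $\ell$ and over all pairs $\mu^1, \mu^2$, which is immediate since the case analysis above never depended on the particular choice of $\ell$ or of the two mean matrices.
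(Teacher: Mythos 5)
Your verification is correct and matches exactly the reasoning the paper implicitly relies on: the paper states this as an Observation with no written proof precisely because the map $\mu \mapsto B_\ell(\mu)$ acts entrywise as the identity (on row $\ell$) or the zero map (on all other rows), so the entrywise $1$-Lipschitz bound is immediate. Your case analysis on $i = \ell$ versus $i \ne \ell$, together with the remark that $K=1$ is uniform in $\ell$, $\mu^1$, and $\mu^2$, is the complete and intended argument.
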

Equipped with this additional property, we state the more general form of Algorithm \ref{algo:UCB} and Theorem \ref{thm:sqrtT_regret} in Algorithm \ref{algo:UCB_app} and Lemma \ref{lemma:general_sqrtT_regret}.

\begin{lemma}\label{lemma:general_sqrtT_regret}
    Suppose $\left\{\{(B_\ell(\mu), c_\ell)\}_{\ell=1}^{L}\right\}_{\mu \in [a,b]^{n \times m}}$ satisfy Properties \ref{def:notion_of_fairness_app}, \ref{def:fairness_to_UAR2_app}, \ref{def:continuous_app}, and \ref{def:lipschitz}.  Then with probability $1-1/T$, Algorithm \ref{algo:UCB} satisfies the constraints $\{(B_\ell(\mu^*), c_\ell)\}_{\ell=1}^{L}$ and has regret of $\tilde{O}(KC_{\mathrm{P}\ref{def:continuous_app}} C_{\mathrm{P}\ref{def:fairness_to_UAR2_app}}n^3m^3\sqrt{T})$.
\end{lemma}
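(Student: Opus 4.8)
The plan is to mirror the proof sketch of Theorem \ref{thm:sqrtT_regret}, but to replace every appeal to a concrete feature of proportionality with the corresponding abstract property, carefully propagating the constants $K$, $C_{\mathrm{P}\ref{def:continuous_app}}$, and $C_{\mathrm{P}\ref{def:fairness_to_UAR2_app}}$. The argument splits into constraint satisfaction and a regret bound. For satisfaction, I would first dispatch the exploration phase: by Property \ref{def:notion_of_fairness_app} the UAR allocation satisfies every constraint, so all rounds $t < \sqrt{T}\log^2(T)$ are feasible. For the remaining rounds I would define the event $E$ that $\mu^* \in B(\hat{\mu}^t, \epsilon^t)$ and $\norm{\epsilon^t}_1 = \tilde{O}(T^{-1/4})$ for every $t$, and establish $\Pr(E) \ge 1 - 2/(3T)$ by two applications of Azuma--Hoeffding (one controlling the estimation error $|\hat{\mu}^t_{ik} - \mu^*_{ik}|$, one lower-bounding the counts $N^t_{ik}$ accumulated during exploration). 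On $E$, because $\mu^* \in B(\hat{\mu}^t, \epsilon^t)$, the robust first constraint of LP \eqref{eq:lp_with_slack} forces each $\hat{Z}^{ik}$ to satisfy $\frob{B_\ell(\mu^*)}{\hat{Z}^{ik}} \ge c_\ell$; since $X^t$ is a convex combination of the $\hat{Z}^{ik}$ and the constraints are linear, $X^t$ inherits feasibility for $\mu^*$.

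For the regret I would aim at the per-step bound of Equation \eqref{eq:to_show_regret_at_time_t}. Fixing a round $t \ge \sqrt{T}\log^2(T)$ and the nearest grid point $\mu_g \in G^t$ with $\norm{\mu^* - \mu_g}_\infty \le 1/\sqrt{T}$, Property \ref{def:continuous_app} yields $|\frob{Y^{\mu^*}}{\mu^*} - \frob{Y^{\mu_g}}{\mu_g}| = O(C_{\mathrm{P}\ref{def:continuous_app}}/\sqrt{T})$, the analogue of Equation \eqref{eq:to_use1}. The remaining pieces proceed as in the sketch: the first constraint of LP \eqref{eq:lp_with_slack} controls $\frob{\hat{X}^t}{\mu^t_U} - \frob{\hat{Z}^{ik}}{\mu^t_U}$ (Equation \eqref{eq:to_use2}), and a UCB argument built on the optimism of $\mu^t_U$ together with $\mu^* \in B(\hat{\mu}^t, \epsilon^t)$ controls $\frob{Y^{\mu_g}}{\mu_g} - \frob{\hat{X}^t}{\mu_g}$ (Equation \eqref{eq:to_use3}). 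Combining these gives Equation \eqref{eq:to_bound_terms}, expressing the per-step regret as $\tilde{O}(\frob{\hat{X}^t}{\epsilon^t} + \max_{\mu \in G^t}\frob{Y^{\mu}}{\epsilon^t} + \frob{X^t}{\epsilon^t} + 1/\sqrt{T})$.

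The main obstacle is collapsing the first two terms of Equation \eqref{eq:to_bound_terms} into $O(\frob{X^t}{\epsilon^t} + 1/\sqrt{T})$. The term $\frob{\hat{X}^t}{\epsilon^t}$ is handled by verifying that $\hat{X}^t$ is itself feasible for LP \eqref{eq:lp_with_slack} (it meets the robust constraint, and the social-welfare constraint holds because its right-hand side subtracts a nonnegative slack from $\frob{\hat{X}^t}{\mu^t_U}$); optimality of $\hat{Z}^{ik}$ then gives $\hat{Z}^{ik}_{ik} \ge \hat{X}^t_{ik}$, so $X^t_{ik} \ge \frac{1}{nm}\hat{X}^t_{ik}$ and hence $\frob{\hat{X}^t}{\epsilon^t} = O(\frob{X^t}{\epsilon^t})$. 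The delicate term is $\max_{\mu \in G^t}\frob{Y^{\mu}}{\epsilon^t}$. For each grid $\mu$ I would invoke Property \ref{def:fairness_to_UAR2_app} with $\gamma$ tuned to roughly $K \max_{i,k}\epsilon^t_{ik}$ to obtain an allocation $X^{\mu}$ with constraint slack $\gamma$ at $\mu$ and with $|X^{\mu}_{ik} - Y^{\mu}_{ik}| \le C_{\mathrm{P}\ref{def:fairness_to_UAR2_app}}\gamma$. The crux is showing $X^{\mu}$ is feasible for LP \eqref{eq:lp_with_slack}: the robust first constraint follows because Property \ref{def:lipschitz} bounds the change in $\frob{B_\ell(\mu)}{X^{\mu}}$ over $B(\hat{\mu}^t, \epsilon^t)$ by $K\frob{X^{\mu}}{\epsilon^t}$, which the slack $\gamma$ is chosen to absorb; and the social-welfare constraint holds because the slack budget subtracted in LP \eqref{eq:lp_with_slack} (the $\tfrac{4bn}{a}\max_{\mu}\frob{Y^{\mu}}{\epsilon^t}$ and $2\frob{\hat{X}^t}{\epsilon^t}$ terms, with $\tfrac{4bn}{a}$ replaced by the abstract constant in the general algorithm) was engineered precisely to dominate $C_{\mathrm{P}\ref{def:fairness_to_UAR2_app}}\gamma$ plus the continuity gap from Property \ref{def:continuous_app}. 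Once $X^{\mu}$ is feasible, the same comparison as above gives $\frob{X^{\mu}}{\epsilon^t} = O(\frob{X^t}{\epsilon^t})$, and the entrywise closeness of $X^{\mu}$ to $Y^{\mu}$ transfers this to $\frob{Y^{\mu}}{\epsilon^t} = O(\frob{X^t}{\epsilon^t} + 1/\sqrt{T})$, which is Equation \eqref{eq:to_combine2}.

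Substituting Equations \eqref{eq:to_combine1} and \eqref{eq:to_combine2} into Equation \eqref{eq:to_bound_terms} yields the per-step bound \eqref{eq:to_show_regret_at_time_t}. To finish, I would sum over rounds: on $E$ the realized increment to each count $N^t_{ik}$ concentrates around $X^t_{ik}/m$, so a martingale concentration (already part of $E$) lets me replace $\sum_t \frob{X^t}{\epsilon^t}$ by a telescoping sum $\sum_{i,k}\sum_t X^t_{ik}\sqrt{\log^2(6nmT)/N^t_{ik}}$; bounding each inner sum by $O(m\sqrt{N^T_{ik}})$ and applying Cauchy--Schwarz over the $nm$ pairs against $\sum_{i,k} N^T_{ik} \le T$ gives $\sum_t \frob{X^t}{\epsilon^t} = \tilde{O}(\mathrm{poly}(n,m)\sqrt{T})$. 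Adding the $\tilde{O}(\sqrt{T})$ exploration cost and collecting every constant encountered along the way produces the advertised bound $\tilde{O}(K C_{\mathrm{P}\ref{def:continuous_app}} C_{\mathrm{P}\ref{def:fairness_to_UAR2_app}} n^3 m^3 \sqrt{T})$. I expect the feasibility verification of $X^{\mu}$ in LP \eqref{eq:lp_with_slack}, and the accompanying tuning of $\gamma$ so that Property \ref{def:lipschitz} and the engineered slack budget simultaneously succeed, to be the most technically demanding step.
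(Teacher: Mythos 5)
Your proposal follows the paper's own route almost step for step (exploration via Property \ref{def:notion_of_fairness_app}, the event $E$, feasibility of $X^t$ by linearity, the per-step decomposition through a grid point $\mu^g$, the construction of comparison allocations $X^{\mu}$ via Property \ref{def:fairness_to_UAR2_app}, feasibility of $\hat{X}^t$ and $X^{\mu}$ in LP \eqref{eq:lp_with_slack_app} to collapse the error terms, and a count-concentration-plus-bucketing bound for $\sum_t \frob{X^t}{\epsilon^t}$). However, there are two genuine gaps at exactly the step you flag as the crux. First, Property \ref{def:fairness_to_UAR2_app} only guarantees \emph{either} $X^{\mu} = \mathrm{UAR}$ \emph{or} the pair (slack $\gamma$, entrywise closeness $|X^{\mu}_{ik} - Y^{\mu}_{ik}| \le C_{\mathrm{P}\ref{def:fairness_to_UAR2_app}}\gamma$); your derivation of Equation \eqref{eq:to_combine2} uses the entrywise closeness unconditionally. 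When the property returns $\mathrm{UAR}$ — which happens precisely when $Y^{\mu}$ has little total slack, i.e.\ when the constraints bind — the transfer from $\frob{X^{\mu}}{\epsilon^t}$ to $\frob{Y^{\mu}}{\epsilon^t}$ breaks, and a separate argument is required. The paper handles this branch by noting that if some $X^{\mu} = \mathrm{UAR}$ is feasible for LP \eqref{eq:lp_with_slack_app}, then $\hat{Z}^{ik}_{ik} \ge \frac{1}{n}$, hence $X^t_{ik} \ge \frac{1}{mn^2}$ for all $i,k$, which yields $\max_{\mu \in G^t}\frob{Y^{\mu}}{\epsilon^t} \le \norm{\epsilon^t}_1 \le mn^2 \frob{X^t}{\epsilon^t}$ (Equation \eqref{eq:one_mu_is_UAR}); without this case your bound on $\max_{\mu \in G^t}\frob{Y^{\mu}}{\epsilon^t}$ does not go through.

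Second, your tuning $\gamma \approx K\max_{i,k}\epsilon^t_{ik}$ is on the wrong scale, and the two requirements you correctly identify cannot in general be met by any multiple of $\norm{\epsilon^t}_\infty$. The Lipschitz error you must absorb is $2K\sum_{i,k}\epsilon^t_{ik}X^{\mu}_{ik} \approx 2K\frob{Y^{\mu}}{\epsilon^t}$, which can be as large as $2Km\max_{i,k}\epsilon^t_{ik}$ (so $\gamma \propto \max_{i,k}\epsilon^t_{ik}$ is too small by a factor of $m$), while the welfare constraint of LP \eqref{eq:lp_with_slack_app} only budgets for $C_{\mathrm{P}\ref{def:fairness_to_UAR2_app}}\gamma \le 4KC_{\mathrm{P}\ref{def:fairness_to_UAR2_app}}\max_{\mu \in G^t}\frob{Y^{\mu}}{\epsilon^t}$ (so inflating $\gamma$ by $m$ can overshoot the budget, since $\max_{\mu}\frob{Y^{\mu}}{\epsilon^t}$ is not comparable to $m\norm{\epsilon^t}_\infty$ when the large entries of $\epsilon^t$ avoid the support of every $Y^{\mu}$). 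The point is that the LP's engineered slack is phrased in terms of $\max_{\mu}\frob{Y^{\mu}}{\epsilon^t}$, so $\gamma$ must be commensurate with that same quantity: the paper takes $\gamma = 4K\frob{Y^{\mu}}{\epsilon^t}$, which simultaneously dominates the Lipschitz error (using the entrywise closeness and $\norm{\epsilon^t}_1 = \tilde{O}(T^{-1/4})$) and fits inside the budget. A last, smaller bookkeeping issue: the count concentration $N^t_{ik} \ge \frac{1}{m}\sum_{s} X^s_{ik}$ that your summation step needs is \emph{not} part of your event $E$ as defined; it is a separate martingale event (the paper's $E'$, Lemma \ref{lemma:prob_of_event_Eprime}) carrying its own $\frac{1}{3T}$ failure probability, which is what makes the final union bound come out to $1 - 1/T$.
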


 \begin{algorithm}[htb!]
 \caption{Fair Upper-Confidence-Bound}
 \label{algo:UCB_app}
 \begin{algorithmic}
\Require $n,m,T,a,b, \{\{(B_\ell(\mu), c_\ell)\}_{\ell=1}^{L}\}_{\mu}$
\State $C_{\mathrm{P}\ref{def:fairness_to_UAR2_app}} \gets \text{From Prop \ref{def:fairness_to_UAR2_app}}$, $K \gets \text{From Prop \ref{def:lipschitz}}$ \label{line:CK}
\For{$t \gets 0$ to $\log^2(T)\sqrt{T}-1$}
    \State Use $X^t = \mathrm{UAR}$.
\EndFor
\For{$t \gets \log^2(T)\sqrt{T}$ to $T$}
     \State $N^t_{ik} \gets  \sum_{\tau=0}^{t-1} 1_{k_\tau = k, i_\tau = i}$
     \State $\hat{\mu}^t_{ik} \gets  \frac{1}{N^t_{ik}} \sum_{\tau = 0}^{t-1} 1_{k_\tau = k, i_\tau = i} V_{i_\tau}(j_\tau)$
     \State $\epsilon^t_{ik} = \sqrt{\log^2(6nmT)/(N^t_{ik})}$
     \State $(\mu_U^t)_{ik} = \hat{\mu}_{ik}^t + \epsilon_{ik}^t$
     \State $G^t = \left\{\mu \in B(\hat{\mu}^t, \epsilon^t): \sqrt{T}\mu_{ik} \in \mathbb{Z} 
\quad \forall i,k\right\}$
     \State $\hat{X}^t \gets$ Solution to
             \begin{align*}
         \max_{X} \: &  \frob{X}{\mu^t_U} \\
        \text{s.t. } & \langle B_\ell(\mu), X \rangle_F \ge c_\ell  \quad \forall \ell \in [L], \forall \mu \in B(\hat{\mu^t},\epsilon^t) \\
        & \sum_{i} X_{ik} = 1 \quad \forall k   \numberthis \label{eq:lp_etc_finallp2_app}
        \end{align*}
    \State $\forall i \in [n], \forall k \in [m]$, $\hat{Z}^{ik} \gets$ Solution to the following LP:
    \begin{align*}
        \max \: &  X_{ik} \\
        \text{s.t. } & \langle B_\ell(\mu), X \rangle_F \ge c_\ell  \quad \forall \ell \in [L], \forall \mu \in B(\hat{\mu}^t,\epsilon^t) \\
        & \frob{X}{\mu^t_U} \ge \frob{\hat{X}^t}{\mu^t_U}  -   4KC_{\mathrm{P}\ref{def:fairness_to_UAR2_app}}\max_{\mu \in G^t} \frob{Y^{\mu}}{\epsilon^t}  - 2\frob{\hat{X}^t}{\epsilon^t}\\
        & \sum_{i'} X_{i'k'} = 1 \quad \forall k'  \numberthis \label{eq:lp_with_slack_app}
    \end{align*}
    \State Use $X^t = \frac{1}{nm} \left(\sum_{i,k} \hat{Z}^{ik} \right)$ 
\EndFor
 \end{algorithmic}
 \end{algorithm}

\newpage

Now we can use Lemma \ref{lemma:general_sqrtT_regret} to prove Theorem \ref{thm:sqrtT_regret}. 
\begin{proof}[Proof of Theorem \ref{thm:sqrtT_regret}]
    By Observation \ref{observation:satisfies_lipschitz_and_zeros}, the proportionality in expectation constraints defined in Remark \ref{remark_2} satisfies Property \ref{def:lipschitz}. Furthermore, by Lemmas \ref{def:notion_of_fairness}, \ref{def:continuous}, and \ref{def:fairness_to_UAR2}, the proportionality in expectation constraints satisfy Properties \ref{def:notion_of_fairness_app}, \ref{def:fairness_to_UAR2_app} (with $C_{\mathrm{P}\ref{def:fairness_to_UAR2_app}} = \frac{bn}{a}$),  and \ref{def:continuous_app} (with $C_{\mathrm{P}\ref{def:continuous_app}} = n)$. Therefore, Lemma \ref{lemma:general_sqrtT_regret} directly proves Theorem \ref{thm:sqrtT_regret}.
\end{proof}

\section{Proof of Lemma \ref{lemma:general_sqrtT_regret}}

\begin{proof}
    First, we bound the regret and prove that the constraints are satisfied for the first $\log^2(T)\sqrt{T}$ steps. The regret of any single step is at most $b-a$ due to the assumptions of bounds on the mean values. Therefore, the regret of the first $\log^2(T)\sqrt{T}$ steps can be bounded by $(b-a)\log^2(T)\sqrt{T} = \tilde{O}(\sqrt{T})$. Furthermore, by Property \ref{def:notion_of_fairness_app}, the fractional allocations used in these steps satisfy the constraints because they are uniform at random allocations. 
    
    For the rest of the proof, we will consider $t \ge \log^2(T)\sqrt{T}$. First, we construct a high probability event $E$ such that under event $E$, the uncertainties $\epsilon^t$ are small and $\mu^* \in B(\hat{\mu}^t, \epsilon^t)$.

    Define event $E$ as
    \[
       E := \bigcap_{t \ge \log^2(T)\sqrt{T}} \left\{\norm{\epsilon^t}_1 \le nm\sqrt{\frac{2nm\log^2(6nmT)}{\sqrt{T}}} \right\} \cap  \left\{ \forall i \in [n],k \in [m], |\hat{\mu}_{ik}^t - \mu^*_{ik}| \le \epsilon_{ik}^t\right\}.
    \]
    We can bound the probability of $E$ using the following lemma.
    \begin{lemma}\label{lemma:prob_of_event_E}
        Using the notation above, $\Pr(E) \ge 1 - \frac{2}{3T}$.
    \end{lemma}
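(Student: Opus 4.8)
The plan is to bound $\Pr(E)$ by analyzing the two events in the intersection separately and applying a union bound, where each piece follows from a concentration argument. The first event controls the size of $\norm{\epsilon^t}_1$, and the second controls whether $\mu^*$ lies in the confidence box $B(\hat{\mu}^t, \epsilon^t)$; these correspond to the ``two applications of Azuma--Hoeffding'' mentioned in the main proof sketch.

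\medskip

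\noindent\textbf{Bounding the confidence-box event.} First I would fix a pair $(i,k)$ and a round $t \ge \log^2(T)\sqrt{T}$. Since $\hat{\mu}^t_{ik}$ is the empirical mean of $N^t_{ik}$ independent sub-Gaussian samples with mean $\mu^*_{ik}$, a standard sub-Gaussian concentration bound (e.g. Hoeffding or Azuma--Hoeffding applied to the martingale formed by the allocation process) gives that $|\hat{\mu}^t_{ik} - \mu^*_{ik}|$ exceeds $\sqrt{\log^2(6nmT)/N^t_{ik}} = \epsilon^t_{ik}$ only with probability at most roughly $2\exp(-\tfrac12 \log^2(6nmT))$. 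The key subtlety is that $N^t_{ik}$ is itself a random stopping-time-like quantity, so I would either condition on the value of $N^t_{ik}$ and apply the bound for each fixed count, or invoke an anytime/uniform concentration inequality so that the bound holds simultaneously over all realizations. Taking a union bound over all $n m$ pairs and all $T$ rounds, the total failure probability of the second event is at most $2nmT \exp(-\tfrac12 \log^2(6nmT))$, which is $o(1/T)$ for large $T$ and in particular can be made at most $\tfrac{1}{3T}$.

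\medskip

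\noindent\textbf{Bounding the norm event.} For the first event I would argue that by round $t \ge \log^2(T)\sqrt{T}$, every $(i,k)$ pair has been sampled enough times that $N^t_{ik}$ is large with high probability, forcing each $\epsilon^t_{ik}$ to be small. During the UAR exploration phase each item type arrives with probability $1/m$ and is then allocated to each player with probability $1/n$, so the expected number of times each $(i,k)$ pair is observed over the first $\log^2(T)\sqrt{T}$ rounds is $\tfrac{1}{nm}\log^2(T)\sqrt{T}$. A Chernoff/Azuma--Hoeffding bound then shows that $N^t_{ik} \ge \tfrac{1}{2nm}\log^2(T)\sqrt{T} \ge \tfrac{\sqrt{T}}{2nm}$ for all $(i,k)$ and all such $t$, except on an event of probability at most $\tfrac{1}{3T}$ after a union bound. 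On this good event, $\epsilon^t_{ik} \le \sqrt{2nm\log^2(6nmT)/\sqrt{T}}$, and summing over the $nm$ entries yields exactly the stated bound $\norm{\epsilon^t}_1 \le nm\sqrt{2nm\log^2(6nmT)/\sqrt{T}}$.

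\medskip

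\noindent\textbf{Conclusion and main obstacle.} Combining the two failure probabilities via a union bound gives $\Pr(E^c) \le \tfrac{2}{3T}$, hence $\Pr(E) \ge 1 - \tfrac{2}{3T}$ as claimed. (Note the overall $1 - 1/T$ guarantee in Lemma~\ref{lemma:general_sqrtT_regret} reserves the remaining $\tfrac{1}{3T}$ slack for other failure events in the full analysis.) I expect the main obstacle to be handling the randomness of the counts $N^t_{ik}$ correctly: because the number of observations is random and the allocation depends on the history, the empirical means are not sums of a fixed number of i.i.d. terms, so the concentration step must be phrased either as a martingale (Azuma--Hoeffding) argument or via a union bound over all possible values of $N^t_{ik}$ to obtain a bound that is valid uniformly in $t$ and in the realized sample counts.
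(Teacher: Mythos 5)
Your proposal is correct and follows essentially the same route as the paper: split $E$ into the two events, control $\norm{\epsilon^t}_1$ by lower-bounding the counts $N^t_{ik}$ accumulated during the UAR phase via Hoeffding plus monotonicity of $N^t_{ik}$, handle the random sample counts by a concentration bound that is uniform over all possible values of $N^t_{ik}$ (the paper formalizes this with an auxiliary i.i.d.\ sequence and fixed-prefix averages $\tilde{\mu}^t_{ik}$, then uses $\hat{\mu}^t_{ik} = \tilde{\mu}^{N^t_{ik}}_{ik}$), and finish with a union bound giving $\frac{1}{3T} + \frac{1}{3T}$. The only caution is that your first suggested alternative\,---\,conditioning on the realized value of $N^t_{ik}$\,---\,is the delicate option, since adaptive allocation can bias the conditional law of the observed samples; your preferred uniform/union-over-counts argument, which is what the paper actually does, avoids this issue.
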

    The proof of Lemma \ref{lemma:prob_of_event_E} can be found in Appendix \ref{app:prob_of_event_E}.
    
     The bulk of the rest of this proof will focus on bounding the regret at every time $t$ after the warm-up period conditional on event $E$. We then conclude by showing that $X^t$ satisfies the fairness constraints for all $t$ conditional on event $E$. 
    
   For every $\mu \in G^t$, we will now define an allocation $X^{\mu}$ using Property \ref{def:fairness_to_UAR2_app} with a specific choice of $\gamma$. This $X^{\mu}$ will be important for bounding the regret at time $t$. Consider any $\mu \in G^t$. Under event $E$ and for sufficiently large $T$, 
   \[
   \gamma := 4K\frob{Y^\mu}{\epsilon^t} \le   4K\norm{\epsilon^t}_1 = \tilde{O}(Kn^{1.5}m^{1.5}T^{-1/4}) \le \gamma_0,
   \]
   where $K$ is the constant from Property \ref{def:lipschitz} and $\gamma_0$ is from Property \ref{def:fairness_to_UAR2_app}. By Property \ref{def:fairness_to_UAR2_app}, there exists a fractional allocation $X^\mu$ such that 
 \begin{equation}\label{eq:swclose}
        |\frob{X^{\mu}}{\mu} - \frob{Y^{\mu}}{\mu}| \le  C_{\mathrm{P}\ref{def:fairness_to_UAR2_app}}4K\frob{Y^{\mu}}{\epsilon^t}.
    \end{equation}
    Furthermore, either $X^{\mu} = \mathrm{UAR}$ or both 
    \begin{equation}\label{eq:first_caseB23case}
            \frob{B_\ell(\mu)}{X^\mu} \ge  c_\ell + 4K\frob{Y^\mu}{\epsilon^t}.
    \end{equation}
    and
    \begin{equation}\label{eq:close_approx}
        |X_{ik}^\mu - Y_{ik}^{\mu}| \le C_{\mathrm{P}\ref{def:fairness_to_UAR2_app}}4K\frob{Y^{\mu}}{\epsilon^t} \le O(KC_{\mathrm{P}\ref{def:fairness_to_UAR2_app}}\norm{\epsilon^t}_1)
    \end{equation} 
    We can also show that for every $\mu \in GYt$, $X^\mu$ satisfies the constraints of the linear programs in Algorithm \ref{algo:UCB}.
    \begin{lemma}\label{lemma:Xmu}
        Under event $E$, for every $\nu \in G^t$, $X^{\nu}$\,---\,as defined above in Equations \eqref{eq:swclose}--\eqref{eq:close_approx}\,---\, satisfies the constraints of LP \eqref{eq:lp_etc_finallp2_app} and LP \eqref{eq:lp_with_slack_app}.
    \end{lemma}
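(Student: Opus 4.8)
The plan is to verify directly that $X^\nu$ satisfies each constraint appearing in LP \eqref{eq:lp_etc_finallp2_app} and LP \eqref{eq:lp_with_slack_app}, organized into three groups: (i) the column-sum (validity) constraints, (ii) the robust fairness constraints $\frob{B_\ell(\mu)}{X}\ge c_\ell$ for all $\ell$ and all $\mu\in B(\hat\mu^t,\epsilon^t)$ (which appear in both LPs), and (iii) the social-welfare floor (the second constraint of LP \eqref{eq:lp_with_slack_app}). Group (i) is immediate, since $X^\nu$ is produced by Property \ref{def:fairness_to_UAR2_app} as a fractional allocation and so its columns sum to $1$. Throughout I would split on the dichotomy of Property \ref{def:fairness_to_UAR2_app}, handling the case $X^\nu=\mathrm{UAR}$ and the case where Equations \eqref{eq:first_caseB23case} and \eqref{eq:close_approx} hold separately. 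A useful preliminary bound, valid under event $E$ for large $T$, is $\frob{X^\nu}{\epsilon^t}\le 2\frob{Y^\nu}{\epsilon^t}$: it follows from the entrywise estimate \eqref{eq:close_approx} by summing against $\epsilon^t$, since the resulting error $C_{\mathrm{P}\ref{def:fairness_to_UAR2_app}}4K\frob{Y^\nu}{\epsilon^t}\norm{\epsilon^t}_1$ is dominated by $\frob{Y^\nu}{\epsilon^t}$ once $\norm{\epsilon^t}_1=\tilde O(T^{-1/4})$ is small.

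For group (ii), if $X^\nu=\mathrm{UAR}$ then $\mathrm{UAR}$ satisfies $\frob{B_\ell(\mu)}{X^\nu}\ge c_\ell$ for every $\mu$ by Property \ref{def:notion_of_fairness_app}. Otherwise I would transport the slack that Equation \eqref{eq:first_caseB23case} guarantees at $\nu$ to an arbitrary $\mu\in B(\hat\mu^t,\epsilon^t)$. Since both $\nu$ and $\mu$ lie within $\epsilon^t$ of $\hat\mu^t$, we have $|\nu_{ik}-\mu_{ik}|\le 2\epsilon^t_{ik}$ entrywise, so Property \ref{def:lipschitz} yields $|B_\ell(\nu)_{ik}-B_\ell(\mu)_{ik}|\le 2K\epsilon^t_{ik}$ and hence $\frob{B_\ell(\mu)}{X^\nu}\ge \frob{B_\ell(\nu)}{X^\nu}-2K\frob{X^\nu}{\epsilon^t}$. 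Combining with \eqref{eq:first_caseB23case} gives $\frob{B_\ell(\mu)}{X^\nu}\ge c_\ell + 4K\frob{Y^\nu}{\epsilon^t}-2K\frob{X^\nu}{\epsilon^t}$, and the preliminary bound $\frob{X^\nu}{\epsilon^t}\le 2\frob{Y^\nu}{\epsilon^t}$ makes the last two terms nonnegative, so $\frob{B_\ell(\mu)}{X^\nu}\ge c_\ell$ as required.

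For group (iii) the key idea is to compare social welfare at $\nu$ rather than invoking continuity. Because the robust constraints of LP \eqref{eq:lp_etc_finallp2_app} include the index $\mu=\nu$, the maximizer $\hat X^t$ is feasible for LP \eqref{lp:ymu_app} at $\mu=\nu$, whence $\frob{\hat X^t}{\nu}\le \frob{Y^\nu}{\nu}$. Writing $\mu^t_U=\nu+(\mu^t_U-\nu)$ with $0\le \mu^t_U-\nu\le 2\epsilon^t$ entrywise, this gives $\frob{\hat X^t}{\mu^t_U}\le \frob{Y^\nu}{\nu}+2\frob{\hat X^t}{\epsilon^t}$, while the same entrywise inequality together with $X^\nu\ge 0$ gives $\frob{X^\nu}{\mu^t_U}\ge \frob{X^\nu}{\nu}$. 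Applying Equation \eqref{eq:swclose} to lower bound $\frob{X^\nu}{\nu}\ge \frob{Y^\nu}{\nu}-4KC_{\mathrm{P}\ref{def:fairness_to_UAR2_app}}\frob{Y^\nu}{\epsilon^t}$ and using $\nu\in G^t$ to pass to $\max_{\mu\in G^t}\frob{Y^\mu}{\epsilon^t}$, the two chains combine into exactly $\frob{X^\nu}{\mu^t_U}\ge \frob{\hat X^t}{\mu^t_U}-4KC_{\mathrm{P}\ref{def:fairness_to_UAR2_app}}\max_{\mu\in G^t}\frob{Y^\mu}{\epsilon^t}-2\frob{\hat X^t}{\epsilon^t}$, which is the second constraint of LP \eqref{eq:lp_with_slack_app}.

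The main obstacle is making the slack terms line up exactly rather than up to uncontrolled constants. In group (ii) this means checking that the factor-$4$ slack of Equation \eqref{eq:first_caseB23case} strictly dominates the factor-$2$ Lipschitz loss, which is precisely why the preliminary bound $\frob{X^\nu}{\epsilon^t}\le 2\frob{Y^\nu}{\epsilon^t}$ (and hence the entrywise closeness \eqref{eq:close_approx} together with the smallness of $\norm{\epsilon^t}_1$ under $E$) is needed. In group (iii) the obstacle is that a naive continuity bound via Property \ref{def:continuous_app} would introduce a $\norm{\epsilon^t}_1$ term that is not absorbable into $\frob{\hat X^t}{\epsilon^t}$ or $\max_{\mu\in G^t}\frob{Y^\mu}{\epsilon^t}$; the resolution is the telescoping above, which relies on the monotonicity $\mu^t_U\ge \nu$ entrywise and on $\hat X^t$ being feasible for $Y^\nu$'s program, and which notably does not require Property \ref{def:continuous_app} at all.
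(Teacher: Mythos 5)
Your proposal is correct and follows essentially the same route as the paper's proof: for the robust fairness constraints you use the Lipschitz property with the entrywise bound \eqref{eq:close_approx} and the smallness of $\norm{\epsilon^t}_1$ under $E$ (your preliminary bound $\frob{X^\nu}{\epsilon^t}\le 2\frob{Y^\nu}{\epsilon^t}$ is exactly the paper's intermediate computation, just packaged as a lemma), and for the welfare-floor constraint you use the same chain built from feasibility of $\hat X^t$ in LP \eqref{lp:ymu_app} at $\mu=\nu$, the entrywise sandwich $\mu^t_U-2\epsilon^t\le\nu\le\mu^t_U$, and Equation \eqref{eq:swclose}. The only cosmetic differences are that you verify the column-sum constraints explicitly and note that Property \ref{def:continuous_app} is not needed, both of which the paper leaves implicit.
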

    The proof of Lemma \ref{lemma:Xmu} can be found in Appendix \ref{app:Xmu}.
  
    For the next part of the proof, we will bound the regret at time $t$ by quantities that do not depend on $\mu^*$. Define $\mu^g$ as the element in $G^t$ that is closest to $\mu^*$. Because $\mu^* \in B(\hat{\mu}^t, \epsilon^t)$ under event $E$ and because $G^t$ forms an evenly spaced grid on $B(\hat{\mu}^t,\epsilon^t)$, we have that under event $E$,
     \begin{equation}\label{eq:closeg}
         \norm{\mu^*-\mu^g}_1 \le \frac{nm}{\sqrt{T}}.
     \end{equation}    
    When $\mu^* \in B(\mu^t, \epsilon^t)$, for all $i,k$ we have $\mu^*_{ik} \ge \hat{\mu}^t_{ik} - \epsilon^t_{ik}$. Because $(\mu_U^t)_{ik} = \hat{\mu}^t_{ik} + \epsilon^t_{ik}$, this implies that $\mu^*_{ik} \ge (\mu^t_U)_{ik} - 2\epsilon^t_{ik}$. Using this in the first line below, we have that
  \begin{align*}
      \frob{\hat{Z}^{ik}}{\mu^*} &\ge \frob{\hat{Z}^{ik}}{\mu^t_U- 2 \epsilon^t} && \text{[Event $E$]}\\
      &= \frob{\hat{Z}^{ik}}{\mu^t_U} - 2\frob{\hat{Z}^{ik}}{\epsilon^t} &&\\
      &\ge \frob{\hat{X}^t}{\mu_U^t} -   4KC_{\mathrm{P}\ref{def:fairness_to_UAR2_app}} \cdot  \max_{\mu \in G^t} \frob{Y^{\mu}}{\epsilon^t} - 2\frob{\hat{X}^t}{\epsilon^t} - 2\frob{\hat{Z}^{ik}}{\epsilon^t}  && \text{[LP \eqref{eq:lp_with_slack_app}]}\\
      &\ge \frob{\hat{X}^t}{\mu^g} -  4KC_{\mathrm{P}\ref{def:fairness_to_UAR2_app}} \cdot \max_{\mu \in G^t} \frob{Y^{\mu}}{\epsilon^t} - 2\frob{\hat{X}^t}{\epsilon^t} - 2\frob{\hat{Z}^{ik}}{\epsilon^t}.  && \text{[$(\mu_U^t)_{ik} \ge \mu^g_{ik}$]} \numberthis \label{eq:Zik}
  \end{align*}

 We can now bound the regret at time $t\ge \log^2(T)\sqrt{T}$ as follows:
  {\footnotesize
  \begin{align*}
      &\frob{Y^{\mu^*}}{\mu^*} - \frob{X^t}{\mu^*}\\
      &= \frob{Y^{\mu^*}}{\mu^*} - \frac{1}{nm} \left(  \sum_{i,k} \frob{\hat{Z}^{ik}}{\mu^*} \right) && \text{[Def of $X^t$]}\\
      &\le \frob{Y^{\mu^g}}{\mu^g} + C_{\mathrm{P}\ref{def:continuous_app}} \norm{\mu^*-\mu^g}_1 - \frac{1}{nm} \left(  \sum_{i,k} \frob{\hat{Z}^{ik}}{\mu^*} \right) && \text{[Prop \ref{def:continuous_app}]} \\
      &\le \frob{Y^{\mu^g}}{\mu^g} + \frac{C_{\mathrm{P}\ref{def:continuous_app}}nm}{\sqrt{T}} - \frac{1}{nm} \left(  \sum_{i,k} \frob{\hat{Z}^{ik}}{\mu^*} \right) && \text{[ Eq. \eqref{eq:closeg}]} \\
      &\le \frob{Y^{\mu^g}}{\mu^g}+ \frac{C_{\mathrm{P}\ref{def:continuous_app}}nm}{\sqrt{T}} -  \frac{1}{nm}\sum_{i,k} \left(\frob{\hat{X}^t}{\mu^g} -  4KC_{\mathrm{P}\ref{def:fairness_to_UAR2_app}} \cdot \max_{\mu \in G^t} \frob{Y^{\mu}}{\epsilon^t} - 2\frob{\hat{X}^t}{\epsilon^t} - 2\frob{\hat{Z}^{ik}}{\epsilon^t}\right)  && \text{[Eq. \eqref{eq:Zik}]} \\
      &\le \frob{Y^{\mu^g}}{\mu^g} - \frob{\hat{X}^t}{\mu^g} +   4KC_{\mathrm{P}\ref{def:fairness_to_UAR2_app}} \max_{\mu \in G^t} \frob{Y^{\mu}}{\epsilon^t} + 2\frob{\hat{X}^t}{\epsilon^t}+  \frac{2}{nm}\sum_{i,k} \frob{\hat{Z}^{ik}}{\epsilon^t} + \frac{C_{\mathrm{P}\ref{def:continuous_app}}nm}{\sqrt{T}}  \\
      &= \frob{Y^{\mu^g}}{\mu^g} - \frob{\hat{X}^t}{\mu^g} +  4KC_{\mathrm{P}\ref{def:fairness_to_UAR2_app}}\max_{\mu \in G^t} \frob{Y^{\mu}}{\epsilon^t}+ 2\frob{\hat{X}^t}{\epsilon^t}+  2\frob{X^t}{\epsilon^t}+ \frac{C_{\mathrm{P}\ref{def:continuous_app}}nm}{\sqrt{T}}. && \text{[Def of $X^t$]}  \numberthis \label{eq:bound_Ymu1}
  \end{align*}}
 This gives us an upper bound on the regret at time $t$ that does not depend on $\mu^*$.
 
  Next, we focus on the term $ \frob{Y^{\mu^g}}{\mu^g} - \frob{\hat{X}^t}{\mu^g}$. By Lemma \ref{lemma:Xmu}, $X^{\mu^g}$  satisfies the constraints of LP \eqref{eq:lp_etc_finallp2_app}. Therefore,
    \begin{equation}\label{eq:ucb1}
        \frob{X^{\mu^g}}{\mu_U^t} \le \frob{\hat{X}^t}{\mu_U^t}.
    \end{equation}
    We also need the following inequality:
    {\small
    \begin{align*}
      &\frob{X^{\mu^g}}{\mu^g} -    \frob{\hat{X}^t}{\mu^g}\\
      &= \frob{X^{\mu^g}}{\mu^t_U} -    \frob{\hat{X}^t}{\mu^t_U}  + \left(\frob{X^{\mu^g}}{\mu^g} - \frob{X^{\mu^g}}{\mu^t_U} \right) -    \left(\frob{\hat{X}^t}{\mu^g} - \frob{\hat{X}^t}{\mu^t_U} \right)  \\
      &\le \left(\frob{X^{\mu^g}}{\mu^g} - \frob{X^{\mu^g}}{\mu^t_U} \right) -    \left(\frob{\hat{X}^t}{\mu^g} - \frob{\hat{X}^t}{\mu^t_U} \right) && \text{[Eq. \eqref{eq:ucb1}]}\\
      &\le  -    \left(\frob{\hat{X}^t}{\mu^g} - \frob{\hat{X}^t}{\mu^t_U} \right) && \text{[$\mu^g_{ik} \le (\mu_U^t)_{ik}, \: \forall i,k$]}\\
      &=  \frob{\hat{X}^t}{\mu_U^t - \mu^g}\\
      &\le 2\frob{\hat{X}^t}{\epsilon^t}. \numberthis \label{eq:comp_mu_g}
    \end{align*}
    }
      Therefore, we have that
    \begin{align*}
        \frob{Y^{\mu^g}}{\mu^g} - \frob{\hat{X}^t}{\mu^g} &= \frob{Y^{\mu^g}}{\mu^g} -\frob{X^{\mu^g}}{\mu^g} + \frob{X^{\mu^g} }{\mu^g}-  \frob{\hat{X}^t}{\mu^g} \\
        &\le O\left(C_{\mathrm{P}\ref{def:fairness_to_UAR2_app}}K\frob{Y^{\mu^g}}{\epsilon^t} + \langle \hat{X}^t, \epsilon^t \rangle_F \right) && \text{[Equations \eqref{eq:swclose}, \eqref{eq:comp_mu_g}]}\\
        &\le O\left(C_{\mathrm{P}\ref{def:fairness_to_UAR2_app}}K\max_{\mu \in G^t} \frob{Y^{\mu}}{\epsilon^t} + \langle \hat{X}^t, \epsilon^t \rangle_F   \right).  \numberthis \label{eq:bound_Ymu2}
    \end{align*}
    Combining Equations \eqref{eq:bound_Ymu1} and \eqref{eq:bound_Ymu2}, we have that
    \begin{align*}
         \frob{Y^{\mu^*}}{\mu^*} - \frob{X^t}{\mu^*} \le \tilde{O}\left( C_{\mathrm{P}\ref{def:fairness_to_UAR2_app}}K\max_{\mu \in G^t} \frob{Y^{\mu}}{\epsilon^t} + \langle \hat{X}^t, \epsilon^t \rangle_F + \frob{X^t}{\epsilon^t} + \frac{C_{\mathrm{P}\ref{def:continuous_app}}nm}{\sqrt{T}}  \right). \numberthis \label{eq:diff_in_util}
    \end{align*}
    Now we will bound each of the first two terms in Equation \eqref{eq:diff_in_util}. By Lemma \ref{lemma:Xmu}, for all $\mu \in G^t$,  $X^\mu$ satisfies the constraints of LP \eqref{eq:lp_with_slack_app}, and by construction $\hat{X}^t$ also satisfies the constraints of LP \eqref{eq:lp_with_slack_app}. Therefore, we must have that for all $i,k$ and all $\mu \in G^t$,
    \begin{equation}\label{eq:X_to_Z_ind}
        X^\mu_{ik}  \le \hat{Z}^{ik}_{ik} \le nm \cdot X_{ik}^t
    \end{equation}
    and
    \begin{equation}\label{eq:X_to_Z_ind2}
        \hat{X}^t_{ik} \le \hat{Z}^{ik}_{ik} \le nm \cdot X_{ik}^t.
    \end{equation}
       Equation \eqref{eq:X_to_Z_ind} implies that for every $\mu \in G^t$,
    \begin{equation}\label{eq:X_to_Z}
    \frob{X^\mu}{\epsilon^t} \le nm \cdot \frob{X^t}{\epsilon^t}.
    \end{equation} 
    
    If there exists $\mu \in G^t$ such that $X^{\mu} = \mathrm{UAR}$, then Equation \eqref{eq:X_to_Z_ind} implies that $X_{ik}^t \ge \frac{1}{mn^2}$ for all $i,k$. This implies that 
    \begin{equation}\label{eq:one_mu_is_UAR}
        \max_{\mu \in G^t} \frob{Y^{\mu}}{\epsilon^t} \le \norm{\epsilon^t}_1 \le mn^2\frob{X^t}{\epsilon^t}.
    \end{equation}
 
    On the other hand, if there does not exist $\mu \in G^t$ such that $X^{\mu} = \mathrm{UAR}$, then for all $\mu \in G^t$,
    \begin{align*}
        \max_{\mu \in G^t} \frob{Y^{\mu}}{\epsilon^t} &\le  \max_{\mu \in G^t} \frob{X^{\mu}}{\epsilon^t} +  \max_{\mu \in G^t} (\frob{Y^{\mu}}{\epsilon^t} -  \frob{X^{\mu}}{\epsilon^t} )\\
        &= \max_{\mu \in G^t} \frob{X^{\mu}}{\epsilon^t} +\max_{\mu \in G^t} \sum_{i,k} |Y^\mu_{ik} - X^\mu_{ik}|\epsilon^t_{ik}\\
        &\le \max_{\mu \in G^t} \frob{X^{\mu}}{\epsilon^t} +\max_{\mu \in G^t} \sum_{i,k} O(KC_{\mathrm{P}\ref{def:fairness_to_UAR2_app}}\norm{\epsilon^t}_1)\epsilon^t_{ik} && \text{[Equation \eqref{eq:close_approx}]}\\
        &\le \max_{\mu \in G^t} \frob{X^{\mu}}{\epsilon^t} + O(KC_{\mathrm{P}\ref{def:fairness_to_UAR2_app}}\norm{\epsilon^t}_1^2) \\
        &\le nm \cdot \frob{X^t}{\epsilon^t} + \tilde{O}(KC_{\mathrm{P}\ref{def:fairness_to_UAR2_app}}n^3m^3/\sqrt{T}). && \text{[Equation \eqref{eq:X_to_Z}, Event $E$]} \numberthis \label{eq:max_upper_bound}
    \end{align*}
    Combining Equations \eqref{eq:one_mu_is_UAR} and \eqref{eq:max_upper_bound}, we have that under event $E$,
    \begin{equation}\label{eq:bound_on_ymu_in_both_cases}
        \max_{\mu \in G^t} \frob{Y^{\mu}}{\epsilon^t} \le \tilde{O} \left(mn^2\frob{X^t}{\epsilon^t} + \frac{KC_{\mathrm{P}\ref{def:continuous_app}} C_{\mathrm{P}\ref{def:fairness_to_UAR2_app}}n^3m^3}{\sqrt{T}}\right).
    \end{equation}
    To bound the second term from Equation \eqref{eq:diff_in_util}, note that Equation \eqref{eq:X_to_Z_ind2} implies that under event $E$,
    \begin{equation}\label{eq:hat_to_X}
        \frob{\hat{X}^t}{\epsilon^t} \le nm \cdot \frob{X^t}{\epsilon^t}.
    \end{equation}
    Finally, combining Equation \eqref{eq:diff_in_util}, Equation \eqref{eq:bound_on_ymu_in_both_cases}, and Equation \eqref{eq:hat_to_X}, we get that conditional on event $E$,
    \[
         \frob{Y^{\mu^*}}{\mu^*} - \frob{X^t}{\mu^*} \le \tilde{O}\left(mn^2\frob{X^t}{\epsilon^t} + \frac{KC_{\mathrm{P}\ref{def:continuous_app}} C_{\mathrm{P}\ref{def:fairness_to_UAR2_app}}n^3m^3}{\sqrt{T}}\right).
    \]
    Conditional on event $E$, the total regret of the algorithm can therefore be bounded by
    \begin{align*}
        \mathrm{Regret}(\alg) &= \sum_{t=0}^{T-1} \left(\frob{Y^{\mu^*}}{\mu^*} - 
 \frob{X^t}{\mu^*}\right) \\
 &\le  \sum_{t=0}^{\log^2(T)\sqrt{T} -1} (b-a) + \sum_{t=\log^2(T)\sqrt{T}}^{T-1} \tilde{O}\left(mn^2\frob{X^t}{\epsilon^t} + \frac{KC_{\mathrm{P}\ref{def:continuous_app}} C_{\mathrm{P}\ref{def:fairness_to_UAR2_app}}n^3m^3}{\sqrt{T}}\right) \\
 &=  \tilde{O}(\sqrt{T})+ \sum_{t=\log^2(T)\sqrt{T}}^{T-1} \tilde{O}\left(mn^2\frob{X^t}{\epsilon^t} + \frac{KC_{\mathrm{P}\ref{def:continuous_app}} C_{\mathrm{P}\ref{def:fairness_to_UAR2_app}}n^3m^3}{\sqrt{T}}\right) \\
 &= \tilde{O}\left( KC_{\mathrm{P}\ref{def:continuous_app}} C_{\mathrm{P}\ref{def:fairness_to_UAR2_app}}n^3m^3\sqrt{T} + mn^2\sum_{t= \log^2(T)\sqrt{T}}^{T-1} \frob{X^t}{\epsilon^t}\right) \\
 &= \tilde{O} \left(KC_{\mathrm{P}\ref{def:continuous_app}} C_{\mathrm{P}\ref{def:fairness_to_UAR2_app}}n^3m^3\sqrt{T} + mn^2\sum_{i,k} \sum_{t= \log^2(T)\sqrt{T}}^{T-1} X_{ik}^t\epsilon_{ik}^t\right)\\
        &= \tilde{O} \left(KC_{\mathrm{P}\ref{def:continuous_app}} C_{\mathrm{P}\ref{def:fairness_to_UAR2_app}}n^3m^3\sqrt{T} + \sum_{i,k} mn^2\sum_{t= \log^2(T)\sqrt{T}}^{T-1} \frac{X_{ik}^t}{\sqrt{N_{ik}^t}}\right). && \text{[Def of $\epsilon_{ik}$]} \numberthis \label{eq:total_regret_under_E}
    \end{align*}

We will focus on bounding the inner summation term $\sum_{t= \log^2(T)\sqrt{T}}^{T-1} \frac{X_{ik}^t}{\sqrt{N_{ik}^t}}$. Define
\[
    E' = \bigcap_{t = \log^2(T)\sqrt{T}}^{T-1} \bigcap_{i,k} \left\{ N_{ik}^t \ge \frac{1}{m} \cdot \sum_{s=\log^2(T)\sqrt{T}}^{t-1} X_{ik}^s \right\}.
\]
\begin{lemma}\label{lemma:prob_of_event_Eprime}
    Using the notation above, $\Pr(E') \ge 1-\frac{1}{3T}$.
\end{lemma}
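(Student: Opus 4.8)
The plan is to recognize $N_{ik}^t$ as a sum of conditionally Bernoulli indicators and to apply a martingale (Freedman/Bernstein) concentration inequality, with the crucial observation that the warm-up phase supplies exactly the deterministic ``buffer'' $\tfrac{\log^2(T)\sqrt T}{mn}$ needed to absorb the downward fluctuation of $N_{ik}^t$ below its conditional mean.

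First I would fix a pair $(i,k)$ and set up the filtration. Let $\mathcal{F}_\tau = \sigma(H_\tau)$ be generated by the history before round $\tau$, so that the allocation $X^\tau = \alg(H_\tau)$ is $\mathcal{F}_\tau$-measurable. Since the type $k_\tau$ is uniform on $[m]$ and, conditional on $k_\tau = k$, player $i$ receives the item with probability $X^\tau_{ik}$, we have $\E[\mathbf 1_{k_\tau = k, i_\tau = i} \mid \mathcal{F}_\tau] = \tfrac{1}{m} X^\tau_{ik}$. Hence, writing $D_\tau := \mathbf 1_{k_\tau = k, i_\tau = i} - \tfrac1m X^\tau_{ik}$, the sequence $(D_\tau)$ is a martingale difference sequence with respect to $(\mathcal{F}_\tau)$, with $|D_\tau| \le 1$ and predictable variance $\E[D_\tau^2 \mid \mathcal F_\tau] \le \tfrac1m X^\tau_{ik} \le 1$, so the predictable quadratic variation of $\sum_{\tau=0}^{t-1} D_\tau$ is at most $T$ deterministically.

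The key structural point is the decomposition of the conditional mean of $N_{ik}^t = \sum_{\tau=0}^{t-1}\mathbf 1_{k_\tau=k,i_\tau=i}$. During the warm-up rounds $\tau < \log^2(T)\sqrt T$ the allocation is UAR, so $X^\tau_{ik} = \tfrac1n$ deterministically, contributing exactly $\log^2(T)\sqrt T \cdot \tfrac{1}{mn} = \tfrac{\log^2(T)\sqrt T}{mn} =: \lambda$ to the sum of conditional means. Thus $\sum_{\tau=0}^{t-1}\E[\mathbf 1_{k_\tau=k,i_\tau=i}\mid\mathcal F_\tau] = \lambda + \tfrac1m\sum_{s=\log^2(T)\sqrt T}^{t-1}X^s_{ik}$, so the event defining $E'$ for this $(i,k,t)$ is exactly the event $\sum_{\tau=0}^{t-1} D_\tau \ge -\lambda$, i.e.\ that $N_{ik}^t$ falls below its conditional mean by no more than the warm-up buffer $\lambda$. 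I would then apply Freedman's inequality to the lower tail: using the deterministic bound $T$ on the predictable quadratic variation and $|D_\tau|\le 1$, the maximal (over $t \le T$) form gives, for each fixed $(i,k)$, $\Pr(\exists t:\ \sum_{\tau=0}^{t-1}D_\tau \le -\lambda) \le \exp\!\big(-\tfrac{\lambda^2}{2(T + \lambda/3)}\big)$. Since $\lambda = o(T)$, for large $T$ this is at most $\exp(-\tfrac{\lambda^2}{4T}) = \exp(-\tfrac{\log^4 T}{4m^2n^2})$. A union bound over the $nm$ pairs $(i,k)$ (or over all $nmT$ triples $(i,k,t)$ if one prefers a fixed-time Bernstein bound) yields $\Pr(E'^{\,c}) \le nm\exp(-\tfrac{\log^4 T}{4m^2 n^2})$, and because $\log^4 T$ dominates $\log(3nmT^2)$ for large $T$, this is at most $\tfrac{1}{3T}$, as claimed.

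The main obstacle is not the computation but selecting the right concentration tool and identifying the buffer. Because $X^\tau$ is history-dependent, the per-step success probabilities (and hence the ``mean'' we compare against) are themselves random and $\mathcal F_\tau$-adapted, so an ordinary Chernoff bound for independent trials does not apply; one must use a martingale Bernstein/Freedman inequality whose variance proxy is the predictable quadratic variation. The accompanying insight is that the deviation we must control, $\lambda = \tfrac{\log^2(T)\sqrt T}{mn}$, is simultaneously (i) exactly the deterministic mass deposited by the UAR warm-up and (ii) of order $\log^2 T$ standard deviations of $N_{ik}^t$ (since the variation is $O(T)$ while $\sqrt T \ll \lambda$), which is precisely what makes each tail super-polynomially small and lets the union bound close.
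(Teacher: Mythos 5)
Your proof is correct, and it rests on the same core insight as the paper's: the UAR warm-up deposits a deterministic buffer of $\lambda = \tfrac{\log^2(T)\sqrt{T}}{nm}$ in expected allocations, and this buffer is of order $\log^2(T)$ standard deviations of the count, so the martingale fluctuation (of order $\tilde{O}(\sqrt{T})$) cannot eat through it. Where you differ is in the decomposition. The paper splits $N^t_{ik}$ at the warm-up boundary and runs two separate concentration arguments: a Hoeffding bound showing the \emph{realized} warm-up count $N_{ik}^{\log^2(T)\sqrt{T}}$ is at least $\lambda$ minus a $\tilde{O}(T^{1/4})$ deviation, and then an Azuma--Hoeffding bound on the post-warm-up martingale $M_t = N^t_{ik} - N_{ik}^{\log^2(T)\sqrt{T}} - \tfrac{1}{m}\sum_s X^s_{ik}$; the two deviation terms are then absorbed by $\lambda$. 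You instead run a single martingale $\sum_\tau D_\tau$ over the entire horizon and observe that the event defining $E'$ is exactly the lower-tail event $\{\sum_{\tau<t} D_\tau \ge -\lambda\}$, so one application of Freedman's maximal inequality (per pair $(i,k)$) finishes the proof. Your version is arguably cleaner: it needs only one concentration step, never has to control the realized warm-up count separately, and the identification of $E'$ as a single lower-tail event makes the role of the buffer transparent. The paper's version buys slightly more elementary machinery (plain Hoeffding and Azuma--Hoeffding with no maximal/Freedman form), at the cost of a union bound over $t$ and bookkeeping across the two phases --- though, as you note, your argument also survives with Azuma plus a union bound over the $nmT$ triples, since you only use the crude variance proxy $T$ anyway.
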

The proof of Lemma \ref{lemma:prob_of_event_Eprime} can be found in Appendix \ref{app:prob_of_event_Eprime}.

For $t \ge \log^2(T)\sqrt{T}$, define
\[
    f(t) = \left\lfloor \frac{1}{m}\sum_{s=\log^2(T)\sqrt{T}}^{t-1} X_{ik}^s \right\rfloor.
\]
$f(t)$ is an monotonically increasing function with integer outputs and with $f(\log^2(T)\sqrt{T}) = 0$. Consider a fixed $\ell \in [0:f(T-1)]$. Define $s_1$ as the smallest value of $t$ such that $f(t) = \ell$ and $s_2$ as the largest value of $t$ such that $f(t) = \ell$. Then it must be the case that
\[
    \sum_{t=s_1}^{s_2 - 1} X_{ik}^t < m,
\]
as otherwise $f(s_2) \ge \ell+1$. Because $0 \le X_{ik}^t \le 1$, this implies that
\begin{equation}\label{eq:sum_over_Xik_ell}
    \sum_{t : f(t) = \ell } X_{ik}^t = \sum_{t = s_1}^{s_2} X_{ik}^t =  X_{ik}^{s_2} + \sum_{t = s_1}^{s_2-1} X_{ik}^t \le m+1.
\end{equation}
Under event $E'$, for all $i\in[n],k \in [m]$, 
\begin{align*}
        \sum_{t=\log^2(T)\sqrt{T}}^{T-1} \frac{X_{ik}^t}{\sqrt{N_{ik}^t}} &\le \sum_{t=\log^2(T)\sqrt{T}}^{T-1} \frac{X_{ik}^t}{\sqrt{ \lfloor \frac{1}{m}\sum_{s=\log^2(T)\sqrt{T}}^{t-1} X_{ik}^s \rfloor }}  \\
         &= \sum_{t=\log^2(T)\sqrt{T}}^{T-1} \frac{X_{ik}^t}{\sqrt{ f(t)}}  \\
        &= \sum_{\ell=0}^{f(T-1)} \sum_{t : f(t) = \ell } \frac{X_{ik}^t}{\sqrt{f(t)}} \\
        &= \sum_{\ell=0}^{f(T-1)} \sum_{t : f(t) = \ell } \frac{X_{ik}^t}{\sqrt{\ell}} \\
        &= \sum_{\ell=0}^{f(T-1)}  \frac{ \sum_{t : f(t) = \ell } X_{ik}^t}{\sqrt{\ell}} \\
        &\le \sum_{\ell=0}^{f(T-1) }  \frac{m+1}{\sqrt{\ell}}  && \text{[Equation \eqref{eq:sum_over_Xik_ell}]}  \\
        &\le \sum_{\ell=0}^{ T}  \frac{m+1}{\sqrt{\ell}} && \text{[$f(T-1) \le T$]}\\
        &\le O(m\sqrt{T}). \numberthis \label{eq:bound_on_X_to_N}
\end{align*}

    Combining Equation \eqref{eq:total_regret_under_E} and \eqref{eq:bound_on_X_to_N}, the total regret conditional on $E \cap E'$ is bounded by $\tilde{O}(KC_{\mathrm{P}\ref{def:continuous_app}} C_{\mathrm{P}\ref{def:fairness_to_UAR2_app}}n^3m^3\sqrt{T})$. Note that every $X^t$ satisfies the desired fairness constraints conditional on event $E$ due to the construction of LP \eqref{eq:lp_with_slack_app}. Therefore, the algorithm satisfies the constraints for all steps $t$ conditional on event $E \cap E'$. Finally, Lemma \ref{lemma:prob_of_event_E} and Lemma \ref{lemma:prob_of_event_Eprime}  with a union bound gives us that $\Pr(E \cap E') \ge 1-1/T$, as desired.

\end{proof}

\subsection{Proof of Lemma \ref{lemma:prob_of_event_E}}\label{app:prob_of_event_E}    
    For sufficiently large $T$, by Hoeffding's inequality we have that
    {\small
    \begin{align*}
        &\Pr\left(\forall t \ge \log^2(T)\sqrt{T}, \: \norm{\epsilon^t}_1 \le nm\sqrt{2nm\log^2(6nmT)/(\sqrt{T})}\right) \\
        &\ge \Pr\left(\forall t \ge \log^2(T)\sqrt{T},\forall i \in [n], k \in [m], \: \epsilon^t_{ik} \le \sqrt{2nm\log^2(6nmT)/(\sqrt{T})}\right) \\
        &= \Pr\left(\forall t \ge \log^2(T)\sqrt{T},\forall i \in [n], k \in [m], \: \sqrt{\log^2(6nmT)/N^t_{ik}} \le \sqrt{2nm\log^2(6nmT)/(\sqrt{T})}\right) \\
        &= \Pr\left(\forall t \ge \log^2(T)\sqrt{T},\forall i \in [n], k \in [m], \: N_{ik}^t \ge \frac{\sqrt{T}}{2nm}\right) \\
        &= \Pr\left(\forall t \ge \log^2(T)\sqrt{T},\forall i \in [n], k \in [m], \: N_{ik}^t \ge \frac{\sqrt{T}}{nm} - \frac{\sqrt{T}}{2nm}\right) \\
        &\ge \Pr\left(\forall t \ge \log^2(T)\sqrt{T}, \forall i \in [n], k \in [m], \: N_{ik}^t \ge \frac{\sqrt{T}}{nm} - \sqrt{0.5\log(3nmT)}T^{1/4}\right)  \quad \quad \quad \quad \quad \text{[Suff large $T$]}\\
        &\ge \Pr\left(\forall i \in [n], k \in [m], \: N_{ik}^{\sqrt{T}} \ge \frac{\sqrt{T}}{nm} - \sqrt{0.5\log(3nmT)}T^{1/4}\right)\quad \quad \quad \quad \quad \quad \quad  \quad \quad \quad \text{[$N_{ik}^t$ monoton. incr]} \\
        &\ge 1-nme^{-2(\sqrt{0.5\log(3nmT)}T^{1/4})^2/\sqrt{T}}. \quad \quad \quad \quad \quad \quad \quad  \text{[Hoeffding's Ineq]} \\
        &= 1-\frac{1}{3T}. \numberthis \label{eq:large_N_sqrtT}
    \end{align*}
    }
    We will next show that 
    \begin{equation}\label{eq:diff_in_mu_sqrtT}
        \Pr\left(\forall i \in [n],k \in [m],t \in [\sqrt{T}\log^2(T): T-1],  |\hat{\mu}_{ik}^t - \mu^*_{ik}| \le \epsilon_{ik}^t\right) \ge 1- \frac{1}{3T}
    \end{equation}
    For any $i,k$, define $r^0_{ik},...r^{T-1}_{ik}$ as follows. For $t < N_{ik}^T$, define $r^t_{ik}$ as player $i$'s realized value for the item in round $s$, where $s$ is the $(t+1)$th time that an item of type $k$ is allocated to player $i$. For $t \ge N_{ik}^T$, let $r^t_{ik}$ be an i.i.d. draw from the distribution of player $i$'s value for an item of type $k$. By this construction, $r^0_{ik},...r^{T-1}_{ik}$ is a sequence of i.i.d. draws from the distribution of player $i$'s value for item of type $k$. For every $i,k$ and $t$, define $\tilde{\mu}^t_{ik} = \frac{1}{t}\sum_{s=0}^{t-1} r^s_{ik}$. Therefore, for every $t$, $\tilde{\mu}^t_{ik}$ is a sum of exactly $t$ i.i.d. random variables.
    
   We now use Hoeffding's Inequality to bound the following probability.  
    \begin{align*}
        &\Pr\left(\forall i \in [n],k \in [m],t \in [T],  |\tilde{\mu}_{ik}^t - \mu^*_{ik}| \le \sqrt{\log^2(6nmT)/t}\right) \\
        &=\Pr\left(\forall i \in [n],k \in [m],t \in [T],  |t\tilde{\mu}_{ik}^t - t\mu^*_{ik}| \le \sqrt{\log^2(6nmT)t}\right) 
    \end{align*}
    $t\tilde{\mu}_{ik}^t$ is a sum of $t+1$ independent and identically distributed sub-Gaussian random variables. Furthermore, $\E[t\tilde{\mu}_{ik}^t] = t\mu^*_{ik}$. Therefore, by an application of Hoeffding's inequality for sub-Guassian random variables, there exists a constant $c > 0$ (where $c$ depends on the sub-Gaussian value distribution) such that
    \begin{align*}
        &\Pr\left(\forall i \in [n],k \in [m],t \in [T],  |t\tilde{\mu}_{ik}^t - t\mu^*_{ik}| \le \sqrt{\log^2(6nmT)t}\right)  \\
        &\ge 1 - 2nmTe^{-c\left(\sqrt{\log^2(6nmT)t}\right)^2/t} && \text{[Hoeffding's Inequality]} \\
        &\ge 1 - 2nmTe^{-c\log^2(6nmT)} \\
        &=  1-2nmT\left(\frac{1}{6nmT}\right)^{c\log(6nmT)} \\
        &\ge 1-\frac{1}{3T}. && \text{[For sufficiently large $T$]}
    \end{align*}

    Now, note that by construction, we have that $\hat{\mu}_{ik}^t = \tilde{\mu}_{ik}^{N^t_{ik}}$ for any $t \le T$. Therefore, we have that
    \begin{align*}
        &\Pr\left(\forall i \in [n],k \in [m],t \in [\sqrt{T}\log^2(T): T-1],  |\hat{\mu}_{ik}^t - \mu^*_{ik}| \le \epsilon_{ik}^t\right) \\
        &=\Pr\left(\forall i \in [n],k \in [m],t \in [\sqrt{T}\log^2(T): T-1],  |\hat{\mu}_{ik}^t - \mu^*_{ik}| \le \sqrt{\log^2(6nmT)/N_{ik}^t}\right) \\
        &\ge \Pr\left(\forall i \in [n],k \in [m],t \in [T],  |\tilde{\mu}_{ik}^t - \mu^*_{ik}| \le \sqrt{\log^2(6nmT)/t}\right) \\
        &\ge 1- \frac{1}{3T} \numberthis \label{eq:last_third}.
    \end{align*}

This completes the proof of Equation \eqref{eq:diff_in_mu_sqrtT}

Combining Equations \eqref{eq:large_N_sqrtT} and \eqref{eq:diff_in_mu_sqrtT} with a union bound, we have that $\Pr(E) \ge 1-\frac{2}{3T}$.

  \subsection{Proof of Lemma \ref{lemma:Xmu}}\label{app:Xmu}
     \begin{proof}  Consider any $\mu' \in B(\hat{\mu}^t, \epsilon^t)$. Because $\nu \in G^t$ and therefore $\nu \in B(\hat{\mu}^t, \epsilon^t)$, we must have $|\mu'_{ik}- \nu_{ik}| \le 2\epsilon^t_{ik}$ for all $i,k$. Therefore, by the Lipschitz continuity assumption, we have that $|B_\ell(\mu')_{ik} - B_\ell(\nu)_{ik}| \le 2K\epsilon^t_{ik}$. If $X^{\nu} = \mathrm{UAR}$, then $\frob{B_\ell(\mu')}{X^\nu} \ge c_\ell$ by Property \ref{def:notion_of_fairness_app}. If $X^{\nu} \ne \mathrm{UAR}$, then by Equation \eqref{eq:close_approx} we have that for sufficiently large $T$ under event $E$,
    \begin{align*}
        &\frob{B_\ell(\mu')}{X^\nu}\\
        &= \frob{B_\ell(\mu')}{X^\nu} - \frob{B_\ell(\nu)}{X^\nu} +  \frob{B_\ell(\nu)}{X^\nu} \\
        &= \frob{B_\ell(\mu') - B_\ell(\nu)}{X^\nu} + \frob{B_\ell(\nu)}{X^\nu} \\
        &\ge  \frob{B_\ell(\nu)}{X^\nu} - 2K\sum_{ik} \epsilon_{ik}^t X^\nu_{ik} && \text{[Property \ref{def:lipschitz}]}\\
        &\ge  \frob{B_\ell(\nu)}{X^\nu} - 2K\sum_{ik} \epsilon_{ik}^t (Y_{ik}^{\nu} + C_{\mathrm{P}\ref{def:fairness_to_UAR2_app}}4K\frob{Y^{\nu}}{\epsilon^t}) && \text{[Equation \eqref{eq:close_approx}]}\\
        &=  \frob{B_\ell(\nu)}{X^\nu} - 2K\sum_{ik} \epsilon_{ik}^t Y^\nu_{ik}  - 8K^2C_{\mathrm{P}\ref{def:fairness_to_UAR2_app}}\sum_{ik} \epsilon_{ik}^t\frob{Y^{\nu}}{\epsilon^t} \\
        &=  \frob{B_\ell(\nu)}{X^\nu} - 2K\sum_{ik} \epsilon_{ik}^t Y^\nu_{ik}  - 8K^2C_{\mathrm{P}\ref{def:fairness_to_UAR2_app}}\frob{Y^{\nu}}{\epsilon^t}\sum_{ik} \epsilon_{ik}^t \\
        &\ge  \frob{B_\ell(\nu)}{X^\nu} - 2K\frob{Y^{\nu}}{\epsilon^t}  - 2K\frob{Y^{\nu}}{\epsilon^t}   && \text{[$\sum_{i,k} \epsilon_{ik}^t \le \tilde{O}(T^{-1/4}) \le \frac{1}{4KC_{\mathrm{P}\ref{def:fairness_to_UAR2_app}}}$]}\\
        &\ge c_\ell. && \text{[Equation \eqref{eq:first_caseB23case}]} 
    \end{align*}
    In both cases, we showed that $\frob{B_\ell(\mu')}{X^\nu} \ge c_\ell$, therefore $X^{\nu}$ satisfies the constraints of LP \eqref{eq:lp_etc_finallp2_app}.

    Now we will show that $X^{\nu}$ satisfies the constraints of LP \eqref{eq:lp_with_slack_app}. $\hat{X}^t$ satisfies the constraints of LP \eqref{lp:ymu_app} for $\mu =\nu$ because $\hat{X}^t$ satisfies the constraints of LP \eqref{eq:lp_etc_finallp2_app} and $\nu \in B(\hat{\mu}^t, \epsilon^t)$. Therefore, because $Y^\nu$ is the best fractional allocation satisfying the constraints of LP \eqref{lp:ymu_app} for $\mu = \nu$, we must have that $\frob{\hat{X}^t}{\nu} \le \frob{Y^{\nu}}{\nu}$. This in turn implies  (using this in the fourth line) that for sufficiently large $T$,
  \begin{align*}
      \frob{X^\nu}{\mu_U} &\ge \frob{X^\nu}{\nu} \\
      &= \frob{Y^\nu}{\nu} +  \frob{X^\nu}{\nu}  - \frob{Y^\nu}{\nu} \\
      &\ge \frob{Y^\nu}{\nu} - 4KC_{\mathrm{P}\ref{def:fairness_to_UAR2_app}} \cdot \frob{Y^\nu}{\epsilon^t} && \text{[Equation \eqref{eq:swclose}]}\\
      &\ge \frob{\hat{X}^t}{\nu} - 4KC_{\mathrm{P}\ref{def:fairness_to_UAR2_app}}\cdot \frob{Y^\nu}{\epsilon^t} \\ 
      &\ge \frob{\hat{X}^t}{\mu_U^t - 2\epsilon^t} - 4KC_{\mathrm{P}\ref{def:fairness_to_UAR2_app}}\cdot \frob{Y^\nu}{\epsilon^t} \\ 
      &\ge  \frob{\hat{X}^t}{\mu_U^t} - 2\frob{\hat{X}^t}{\epsilon^t} - 4KC_{\mathrm{P}\ref{def:fairness_to_UAR2_app}} \cdot \frob{Y^\nu}{\epsilon^t}   \\
      &\ge  \frob{\hat{X}^t}{\mu_U^t} -  4KC_{\mathrm{P}\ref{def:fairness_to_UAR2_app}} \cdot \max_{\mu' \in G^t} \frob{Y^{\mu'}}{\epsilon^t} - 2\frob{\hat{X}^t}{\epsilon^t} && \text{[$\nu \in G^t$]}
  \end{align*}
    Therefore, $X^{\nu}$ also satisfies all of the constraints of LP  \eqref{eq:lp_with_slack_app}. 
    \end{proof}

\subsection{Proof of Lemma \ref{lemma:prob_of_event_Eprime}}\label{app:prob_of_event_Eprime}
For sufficiently large $T$, Hoeffding's inequality give that
\begin{equation}\label{eq:hoeffd1}
    \Pr\left(\forall i,k,  \: N_{ik}^{\log^2(T)\sqrt{T}} \ge \log^2(T)\frac{\sqrt{T}}{nm} - \sqrt{\tfrac{1}{2}\log^2(T)\sqrt{T}\log(6nmT)}\right) \ge 1-\frac{1}{6T}.
\end{equation}

Define
For $t \ge \log^2(T)\sqrt{T}$, define
\[
M_{t} =  N^t_{ik} -  N_{ik}^{\log^2(T)\sqrt{T}} - \frac{1}{m}\sum_{s=\log^2(T)\sqrt{T}}^{t-1} X_{ik}^s.
\]
This $M_t$ is a martingale with respect to the history because for all $t$,
\begin{align*}
    \E\left[M_{t} | H_{t} \right] &= 
    M_{t-1} +  \E\left[1_{i_{t-1} = i, k_{t-1} = k} - \frac{X_{ik}^{t-1}}{m}\right] \\
    &= M_{t-1}.
\end{align*}
And furthermore, we have that for all $t$,
\[
    |M_{t} - M_{t-1}| \le 1.
\]
Therefore, by the Azuma-Hoeffding Concentration Inequality, for all $i,k,t$,
\begin{align*}
    &\Pr\left(N^t_{ik} -  N_{ik}^{\log^2(T)\sqrt{T}} \le \frac{1}{m}\sum_{s=\log^2(T)\sqrt{T}}^{t-1} X_{ik}^t - \sqrt{0.5t\log(6nmT^2)}\right) \\
    &= \Pr\left(M_t \le -\sqrt{0.5t\log(6nmT^2})\right) \\
    &\le \exp\left(-2\frac{0.5t\log(6nmT^2)}{(t-\log^2(T)\sqrt{T})}\right) \\
    &\le \frac{1}{6nmT^2}.
\end{align*}
Using a union bound, we have that
\begin{align*}
    &\Pr\left(\forall i, k,t\in [\log^2(T)\sqrt{T}: T], \: N^t_{ik} -  N_{ik}^{\log^2(T)\sqrt{T}} \ge \frac{1}{m}\sum_{s=\log^2(T)\sqrt{T}}^{t-1} X_{ik}^t - \sqrt{2t\log(6nmT^2)}\right)\\
    &\ge 1-\frac{1}{6T}. \numberthis \label{eq:hoeffd2}
\end{align*}
Using a union bound to combine Equations \eqref{eq:hoeffd1} and \eqref{eq:hoeffd2}, with probability $1-\frac{1}{3T}$ for sufficiently large $T$, the following holds for all $t \in [\log^2(T)\sqrt{T}: T]$ and for all $i,k$.
\begin{align*}
    N_{ik}^t &= N_{ik}^{\log^2(T)\sqrt{T}} + \left(N^t_{ik} -  N_{ik}^{\log^2(T)\sqrt{T}}\right) \\
    &\ge \log^2(T)\frac{\sqrt{T}}{nm} - \sqrt{\tfrac{1}{2}\log^2(T)\sqrt{T}\log(6nmT^2)} + \frac{1}{m}\sum_{s=\log^2(T)\sqrt{T}}^{t-1} X_{ik}^t - \sqrt{0.5t\log(6nmT^2)} \\
    &\ge  \frac{1}{m}\sum_{s=\log^2(T^2)\sqrt{T}}^{t-1} X_{ik}^t. \numberthis \label{eq:Nik_bound}
\end{align*}

Equation \eqref{eq:Nik_bound} implies that
\[
    \Pr(E') \ge 1-\frac{1}{3T}.
\]

\section{Proof of Lemma \ref{def:continuous}}\label{app:proof_of_continuity}
    Recall by Observation \ref{observation:satisfies_lipschitz_and_zeros} that proportionality in expectation satisfies Lipschitz continuity with $K = 1$. Define $\epsilon = \norm{\mu- \mu'}_1 $. Let $Z^{\mu}$ be the optimal solution to 
    \begin{align*}
        \max &  \frob{X}{\mu} \\
        \text{s.t. } & X_i \cdot \mu_i - \frac{1}{n} \ge - \epsilon  \quad \forall i \in [n]  \\
        & \sum_{i} X_{ik} = 1 \quad \forall k   \numberthis \label{eq:lp_Zmu}
    \end{align*} 
    The constraints in LP \eqref{eq:lp_Zmu} imply the following bound.
    \begin{equation}\label{eq:bound_on_slack}
         \sum_{i : Z^{\mu}_i \cdot \mu - \frac{1}{n}  < 0} \left(\frac{1}{n}  - Z^{\mu}_i \cdot \mu_i \right) \le n\epsilon.
    \end{equation}
    Now we construct a fractional allocation $W^\mu$ such that $W^\mu$ satisfies the proportionality in expectation constraints for $\mu$ and such that $W^\mu$ does not have significantly less social welfare than $Z^\mu$. We will have two cases. 
    
    \textbf{Case 1: } Informally, this is the case when the amount of total positive slack of $Z^{\mu}$ in LP \ref{lp:ymu} is small relative to the amount of negative slack. If 
    \begin{equation}\label{eq:uar_case}
        \frac{a}{b}  \cdot \sum_{i : Z^{\mu}_i \cdot \mu - \frac{1}{n}  \ge 0} \left(Z^{\mu}_i \cdot \mu_i - \frac{1}{n} \right)  \le \sum_{i : Z^{\mu}_i \cdot \mu - \frac{1}{n}  < 0} \left(\frac{1}{n}  - Z^{\mu}_i \cdot \mu_i \right),
    \end{equation}
    then let $W^{\mu} = \mathrm{UAR}$. Because $\mathrm{UAR}$ is always a solution to proportionality constraints, we know that $W^{\mu}$ must be a solution to LP \eqref{lp:ymu}. Furthermore, for $W^{\mu} = \mathrm{UAR}$ we have that the loss in social welfare between $Z^{\mu}$ and $W^{\mu}$ is
    \begin{align*}
        \frob{Z^\mu}{\mu}- \frob{W^{\mu}}{\mu}  &= \sum_i \left(Z_i^\mu \cdot \mu_i - \frac{1}{n}\right) \\
        &\le  \sum_{i : Z^{\mu}_i \cdot \mu - \frac{1}{n}  \ge 0} \left(Z^{\mu}_i \cdot \mu_i - \frac{1}{n} \right)  \\
        &\le \frac{bn}{a}\epsilon. && \text{ Equations \eqref{eq:bound_on_slack} and \eqref{eq:uar_case}}.
    \end{align*}
    \textbf{Case 2:} When Equation \eqref{eq:uar_case} does not hold, the amount of positive slack is large relative to the amount of negative slack, i.e.
    \begin{equation}\label{eq:not_uar_case}
        \frac{a}{b}  \cdot \sum_{i : Z^{\mu}_i \cdot \mu - \frac{1}{n}  \ge 0} \left(Z^{\mu}_i \cdot \mu_i - \frac{1}{n} \right)  > \sum_{i : Z^{\mu}_i \cdot \mu - \frac{1}{n}  < 0} \left(\frac{1}{n}  - Z^{\mu}_i \cdot \mu_i \right).
    \end{equation}
    Therefore, we can transfer allocation from the players with positive slack to the players with negative slack. Formally, if Equation \eqref{eq:uar_case} does not hold, then construct $W^{\mu}$ as  
    \begin{equation}\label{eq:wfirst}
        W^{\mu}_i =
            \begin{cases}
                Z^{\mu}_i -  \frac{b}{a} \cdot \frac{\sum_{i' : Z^{\mu}_{i'} \cdot \mu_{i'}- \frac{1}{n}  < 0} \left(\frac{1}{n}  - Z^{\mu}_{i'} \cdot \mu_{i'} \right)}{ \sum_{i' : Z^{\mu}_{i'} \cdot \mu_{i'}- \frac{1}{n}  \ge 0} \left(Z^{\mu}_{i'} \cdot \mu_{i'} - \frac{1}{n} \right)} \cdot \left(Z^{\mu}_{i} \cdot \mu_{i} - \frac{1}{n} \right) \cdot \frac{Z_{i}^\mu}{Z_i^\mu \cdot \mu_i}& \text{if }  Z^{\mu}_{i} \cdot \mu - \frac{1}{n}  \ge 0\\
                Z^{\mu}_{i} + \frac{b}{a} \cdot \left( \frac{1}{n}  - Z^{\mu}_{i} \cdot \mu_{i} \right) \cdot \frac{\sum_{i' :  Z^{\mu}_{i'} \cdot \mu_{i'}- \frac{1}{n}  \ge 0} \left(\left(Z^{\mu}_{i'} \cdot \mu_{i'} - \frac{1}{n} \right)  \cdot \frac{Z_{i'}^\mu}{Z_{i'}^\mu \cdot \mu_{i'}}\right)}{ \sum_{i' : Z^{\mu}_{i'} \cdot \mu_{i'}- \frac{1}{n}  \ge 0} \left(Z^{\mu}_{i'} \cdot \mu_{i'} - \frac{1}{n} \right)} & \text{otherwise}
            \end{cases}
    \end{equation}
    First we show that this $W^\mu$ will satisfy the proportionality constraints. 
    
    For $i$ such that $  Z^{\mu}_i \cdot \mu - \frac{1}{n}  \ge 0$, we have that
    \begin{align*}
        &W_i^\mu \cdot \mu_i - \frac{1}{n} \\
        &=     Z^{\mu}_i \cdot \mu_i -  \frac{b\sum_{i' : Z^{\mu}_{i'} \cdot \mu_{i'}- \frac{1}{n}  < 0} \left(\frac{1}{n}  - Z^{\mu}_{i'} \cdot \mu_{i'} \right)}{ a\sum_{i' : Z^{\mu}_{i'} \cdot \mu_{i'}- \frac{1}{n}  \ge 0} \left(Z^{\mu}_{i'} \cdot \mu_{i'} - \frac{1}{n} \right)} \cdot \left(Z^{\mu}_{i} \cdot \mu_{i} - \frac{1}{n} \right) \cdot \frac{Z_{i}^\mu \cdot \mu_{i}}{Z_i^\mu \cdot \mu_i} - \frac{1}{n}  \\
        &= Z^{\mu}_{i} \cdot \mu_{i} -  \frac{b\sum_{i' : Z^{\mu}_{i'} \cdot \mu_{i'}- \frac{1}{n}  < 0} \left(\frac{1}{n}  - Z^{\mu}_{i'} \cdot \mu_{i'} \right)}{ a\sum_{i' : Z^{\mu}_{i'} \cdot \mu_{i'}- \frac{1}{n}  \ge 0} \left(Z^{\mu}_{i'} \cdot \mu_{i'} - \frac{1}{n} \right)} \cdot \left(Z^{\mu}_{i} \cdot \mu_{i} - \frac{1}{n} \right) - \frac{1}{n}  \\
        &> Z^{\mu}_i \cdot \mu_i - \left(Z^{\mu}_i \cdot \mu_i - \frac{1}{n} \right)  - \frac{1}{n}  && \text{[Eq. \eqref{eq:not_uar_case}]} \\
        &= 0.
    \end{align*}

    For any $i$ such that  $  Z^{\mu}_i \cdot \mu - \frac{1}{n}  < 0$,
    \begin{align*}
        &W_i^\mu \cdot \mu_i - \frac{1}{n} \\
        &= Z^{\mu}_i  \cdot \mu_i  + \frac{b}{a} \left(\frac{1}{n}  - Z^{\mu}_i \cdot \mu_i\right) \cdot  \frac{\sum_{i' :  Z^{\mu}_{i'} \cdot \mu_{i'}- \frac{1}{n}  \ge 0} \left(\left(Z^{\mu}_{i'} \cdot \mu_{i'} - \frac{1}{n} \right)  \cdot \frac{Z_{i'}^\mu \cdot \mu_i}{Z_{i'}^\mu \cdot \mu_{i'}}\right)}{ \sum_{i' : Z^{\mu}_{i'} \cdot \mu_{i'}- \frac{1}{n}  \ge 0} \left(Z^{\mu}_{i'} \cdot \mu_{i'} - \frac{1}{n} \right)}   - \frac{1}{n} \\
        &\ge Z^{\mu}_i  \cdot \mu_i  + \left(\frac{1}{n}  - Z^{\mu}_i \cdot \mu_i\right) \cdot  \frac{\sum_{i' :  Z^{\mu}_{i'} \cdot \mu_{i'}- \frac{1}{n}  \ge 0}  \left(Z^{\mu}_{i'} \cdot \mu_{i'} - \frac{1}{n} \right)   }{\sum_{i' : Z^{\mu}_{i'} \cdot \mu_{i'}- \frac{1}{n}  \ge 0} \left(Z^{\mu}_{i'} \cdot \mu_{i'} - \frac{1}{n} \right)}   - \frac{1}{n} \\
        &= Z^{\mu}_i  \cdot \mu_i  +  \left(\frac{1}{n}  - Z^{\mu}_i \cdot \mu_i\right) - \frac{1}{n}  \\
        &= 0.
    \end{align*}
    Therefore, $W^\mu$ satisfies the proportionality constraints and is a solution to LP \eqref{lp:ymu}. The only players in $W^{\mu}$ that lose social welfare relative to $Z^{\mu}$ are the players $i$ that give away allocation, i.e. the players $i$ such that $Z^{\mu}_{i} \cdot \mu - \frac{1}{n}  \ge 0$. Therefore, we have that
    \begin{align*}
        & \frob{W^\mu}{\mu} \\
        &\ge \frob{Z^\mu}{\mu} - \sum_{i :  Z^{\mu}_{i} \cdot \mu - \frac{1}{n}  \ge 0}   \frac{b\sum_{i' : Z^{\mu}_{i'} \cdot \mu_{i'}- \frac{1}{n}  < 0} \left(\frac{1}{n}  - Z^{\mu}_{i'} \cdot \mu_{i'} \right)}{ a\sum_{i' : Z^{\mu}_{i'} \cdot \mu_{i'}- \frac{1}{n}  \ge 0} \left(Z^{\mu}_{i'} \cdot \mu_{i'} - \frac{1}{n} \right)} \cdot \left(Z^{\mu}_i \cdot \mu_i - \frac{1}{n} \right)\cdot \frac{Z_i^\mu \cdot \mu_i}{Z_i^\mu \cdot \mu_i} \\
        &= \frob{Z^\mu}{\mu} - \sum_{i :  Z^{\mu}_{i} \cdot \mu - \frac{1}{n}  \ge 0}   \frac{b\sum_{i' : Z^{\mu}_{i'} \cdot \mu_{i'}- \frac{1}{n}  < 0} \left(\frac{1}{n}  - Z^{\mu}_{i'} \cdot \mu_{i'} \right)}{ a\sum_{i' : Z^{\mu}_{i'} \cdot \mu_{i'}- \frac{1}{n}  \ge 0} \left(Z^{\mu}_{i'} \cdot \mu_{i'} - \frac{1}{n} \right)} \cdot \left(Z^{\mu}_i \cdot \mu_i - \frac{1}{n} \right)  \\
        &= \frob{Z^\mu}{\mu} -\frac{b}{a}\sum_{i' : Z^{\mu}_{i'} \cdot \mu_{i'} - \frac{1}{n}  < 0} \left(\frac{1}{n} - Z^{\mu}_{i'} \cdot \mu_{i'}   \right)  \\
        &\ge \frob{Z^{\mu}}{\mu} - \frac{bn}{a}\epsilon. && \text{[Eq \eqref{eq:bound_on_slack}]}
    \end{align*}

\vspace{5mm}
    
    Therefore, we have shown that $W^\mu$ satisfies the constraints of LP \eqref{lp:ymu} and that $\frob{W^\mu}{\mu}  \ge \frob{Z^{\mu}}{\mu} -  \frac{bn}{a}\epsilon$. Recall that $Y^{\mu}$ is the optimal fractional allocation for LP \eqref{lp:ymu} and therefore $\frob{Y^{\mu}}{\mu} \ge \frob{W^{\mu}}{\mu}$. Together, this implies that
   \begin{equation}\label{eq:changeinrhs}
        \frob{Y^{\mu}}{\mu} \geq \frob{W^\mu}{\mu} \geq \frob{Z^{\mu}}{\mu} - \frac{bn}{a}\epsilon.
    \end{equation}
    Furthermore, 
    \begin{align*}
      Y^{\mu'}_i \cdot \mu_i  - \frac{1}{n} &= Y^{\mu'}_i \cdot \mu_i   -Y^{\mu'}_i \cdot \mu'_i +Y^{\mu'}_i \cdot \mu'_i - \frac{1}{n}\\ 
        &= Y_i^{\mu'} \cdot (\mu_i - \mu_i') + Y^{\mu'}_i \cdot \mu'_i - \frac{1}{n}\\
        &\ge Y_i^{\mu'} \cdot (\mu_i - \mu_i') \\
        &\ge -\norm{\mu_i - \mu_i'}_{1} \\
        &\ge -\epsilon.
    \end{align*}
    Therefore, $Y^{\mu'}$ is a solution to LP \eqref{eq:lp_Zmu} and therefore $\frob{Z^{\mu}}{\mu} \ge \frob{Y^{\mu'}}{\mu}$. Combining this with Equation \eqref{eq:changeinrhs}, we have that
    \begin{align*}
        \frob{Y^{\mu}}{\mu} &\ge \frob{Y^{\mu'}}{\mu}  - \frac{bn}{a}\epsilon \\
        &= \frob{Y^{\mu'}}{\mu'} + \frob{Y^{\mu'}}{\mu - \mu'}  - \frac{bn}{a}\epsilon\\
        &\ge \frob{Y^{\mu'}}{\mu'} - \frac{bn}{a}\epsilon. && \text{[$(Y^{\mu'})_{ik} \le 1$ and $\norm{\mu - \mu'}_1 \le \epsilon$]}
    \end{align*}
    By symmetry, the same argument works in the reverse direction to show that $\frob{Y^{\mu'}}{\mu'} \ge \frob{Y^{\mu}}{\mu} - \frac{bn}{a}\epsilon $. Combining both directions proves the desired result.

\section{Proof of Lemma \ref{def:fairness_to_UAR2}}\label{app:proof_of_fairness_to_UAR2}

To show  Lemma \ref{def:fairness_to_UAR2}, we will use the same construction as used to prove Lemma 2 in \cite{procaccia2024honor}.

For $i \in [n]$, define 
\begin{equation}\label{eq:def_of_S}
    S_i =  Y_i^\mu \cdot \mu_i - \frac{1}{n}.
\end{equation}
 $S_i$ can be viewed as the amount of slack of player $i$'s proportionality constraint for $Y^\mu$. We consider the following two cases.

\textbf{Case 1:} $\sum_{i=1}^n S_i \le \frac{b}{a}n\gamma$

In this case we take $X' = \mathrm{UAR}$, which causes at most $\frac{nb}{a}\gamma$ decrease in social welfare relative to $Y^{\mu}$.

\vspace{5mm}
\textbf{Case 2:} $\sum_{i=1}^n S_i > \frac{b}{a}n\gamma$

Define
\begin{equation}\label{eq:def_of_Y}
    \Delta_{ik} := \frac{Y_{ik}^\mu}{\sum_{k'=1}^m Y_{ik'}^\mu} \cdot  \frac{S_i}{\sum_{i'=1}^n S_{i'}} \cdot \frac{n\gamma}{a}.
\end{equation}
Define $X'$ as
\begin{equation}\label{eq:def_of_Xprime}
    X'_{ik} := Y^\mu_{ik} - \Delta_{ik} + \frac{1}{n}\sum_{i'=1}^n \Delta_{i'k}.
\end{equation}
Then as in \cite{procaccia2024honor}, we have that
\[
    X_i' \cdot \mu_i -  \frac{1}{n} \ge \gamma
\]
and
\begin{equation}\label{eq:sum_of_Ys}
 \sum_{i=1}^n \sum_{k=1}^m \Delta_{ik} = \frac{n\gamma}{a}.
\end{equation}
As in \cite{procaccia2024honor}, Equation \eqref{eq:sum_of_Ys} implies that the change in social welfare is upper bounded by $\frac{bn\gamma}{a}$. Furthermore, we also have that
\begin{align*}
    |X'_{ik} - Y^\mu_{ik}| &\le \max \left( \Delta_{ik}, \frac{1}{n}\sum_{i'=1}^n \Delta_{i'k}\right)  && \text{[Equation \eqref{eq:def_of_Xprime}]}\\
    &\le  \sum_{k'=1}^m \sum_{i'=1}^n \Delta_{i'k'} \\
    &\le \frac{n\gamma}{a}. && \text{[ Equation \eqref{eq:sum_of_Ys}]}
\end{align*}

\section{Proof of Theorem \ref{thm:lower_bounds}}\label{app:lower_bounds}

\begin{proof}
    Note that the proof structure is similar to that of Theorem 2 in \cite{procaccia2024honor}. However, the construction provided below has more players and more items, which results in a more complex proof.

    Let $a = 1$, $b = 40$, $n = 3$, and $m = 3$. Furthermore, assume that all values come from normal distributions with variance $1$. We provide a proof by contradiction. First, assume there exists an algorithm $\alg$ such that for any $\mu^* \in [a,b]^{3 \times 3}$, with probability at least $1-1/T$ $\alg$ satisfies the envy-freeness constraints and has regret of less than $\frac{T^{2/3}}{\log(T)}$. 

 Next, consider the following two mean value matrices, where rows represent players and columns represent items. Let $\epsilon = T^{-1/3}$.
    
    \begin{minipage}[t]{.5\linewidth}
    \vspace{0.5pt}
    \begin{center}
    $\mu_1$ = $\begin{bmatrix}
            \frac{20}{42} & \frac{21}{42} & \frac{1}{42}   \\
          \frac{ 19}{42} & \frac{19}{42} & \frac{4}{42} \\
           \frac{1}{42} & \frac{1}{42} & \frac{40}{42}
            \end{bmatrix}$
    \end{center}
    \end{minipage}
    \begin{minipage}[t]{.5\linewidth}
    \vspace{0.5pt}
    \begin{center}
        $\mu_2 = \begin{bmatrix}
            \frac{20}{42} & \frac{21}{42} & \frac{1}{42}\\
            \frac{19}{42} & \frac{19}{42} + \epsilon & \frac{4}{42} - \epsilon \\
            \frac{1}{42} & \frac{1}{42} & \frac{40}{42}
            \end{bmatrix}$
    \vspace{0.5pt}
    \end{center}
    \end{minipage}
    
    Let $P_1$ be the distribution of $H_T$ for algorithm $\alg$ when $\mu^* = \mu_1$, and $P_2$ likewise for $\mu^* = \mu_2$.

    The following lemma will help bound the KL-divergence between $P_1$ and $P_2$.
    \begin{lemma}\label{lemma:bounded_NT}
        Under the proof by contradiction assumption stated above,
        \[
            \E_{P_1}[N_{23}^T + N_{22}^T] \le T^{2/3}
        \]
        and
        \begin{equation}\label{eq:lower_bound_prob_of_small_X22}
             \Pr_{P_1}\left( \sum_{t=0}^{T-1} (X^t_{22} + X^t_{23} + X^t_{32}) > \frac{42T^{2/3}}{\log(T)}\right) < 1/8.
        \end{equation}
    \end{lemma}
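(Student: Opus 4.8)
The plan is to reduce both inequalities to a single deterministic fact about the welfare loss of any envy-free allocation under $\mu_1$. First I would identify the optimal envy-free allocation $Y^{\mu_1}$: one checks that the permutation assigning item $2$ to player $1$, item $1$ to player $2$, and item $3$ to player $3$ is envy-free for $\mu_1$ and has social welfare $\frob{Y^{\mu_1}}{\mu_1} = \tfrac{80}{42}$, with $Y^{\mu_1}_{22}=Y^{\mu_1}_{23}=Y^{\mu_1}_{32}=0$. The crux is the per-step bound that for every column-stochastic $X$ which is envy-free for $\mu_1$,
\[
\frob{Y^{\mu_1}}{\mu_1} - \frob{X}{\mu_1} \;\ge\; \tfrac{1}{42}\bigl(X_{22}+X_{23}+X_{32}\bigr),
\]
i.e.\ every unit of mass placed on the three ``wasteful'' entries (items $2,3$ given to player $2$, or item $2$ given to player $3$) costs at least $1/42$ in welfare.

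To prove the per-step bound I would eliminate the bottom row via the column constraints $\sum_i X_{ik}=1$ and write the loss in closed form as $\tfrac{1}{42}\bigl(38 - 19X_{11}-20X_{12}-18X_{21}-18X_{22}+39X_{13}+36X_{23}\bigr)$. Substituting $X_{32}=1-X_{12}-X_{22}$, the claim becomes the linear inequality $37 - 19(X_{11}+X_{12}) - 18(X_{21}+X_{22}) + 39X_{13}+35X_{23} \ge 0$ on the envy-free polytope. This is \emph{false} without fairness, so the main obstacle is to exploit the constraints: I would use only player $2$'s envy constraint against player $1$, $X_2\cdot(\mu_1)_2 \ge X_1\cdot(\mu_1)_2$. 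Writing $u=X_{11}+X_{12}$ and $v=X_{21}+X_{22}$, the column budgets give $u+v\le 2$ and the envy constraint (in its binding case $X_{13}=X_{23}=0$) gives $v\ge u$; maximizing $19u+18v$ subject to these yields exactly $37$, and a short check shows the nonnegative terms $39X_{13},35X_{23}$ only relax the bound further. This is equivalently a one-line LP-duality certificate, and since its right-hand side is nonnegative it also confirms that the permutation above is indeed $Y^{\mu_1}$.

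Granting the per-step bound, \eqref{eq:lower_bound_prob_of_small_X22} follows quickly. Let $A$ be the event that every $X^t$ is envy-free for $\mu_1$ \emph{and} the regret is below $T^{2/3}/\log(T)$; the proof-by-contradiction hypothesis and a union bound give $\Pr_{P_1}(A)\ge 1-2/T$. On $A$, summing the per-step bound over $t$ and recalling the regret definition gives $\tfrac{1}{42}\sum_t(X^t_{22}+X^t_{23}+X^t_{32}) \le \mathrm{Regret} < T^{2/3}/\log(T)$, so the sum is below $42\,T^{2/3}/\log(T)$. Hence the event in \eqref{eq:lower_bound_prob_of_small_X22} is contained in $A^c$, whose probability is at most $2/T < 1/8$ for large $T$.

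For the first inequality I would pass from allocations to counts using $\E_{P_1}[N^T_{ik}] = \tfrac1m\sum_t \E_{P_1}[X^t_{ik}]$, which holds because at each step the type is drawn uniformly and independently of $X^t$; with $m=3$ this gives $\E_{P_1}[N^T_{22}+N^T_{23}] = \tfrac13\E_{P_1}\bigl[\sum_t(X^t_{22}+X^t_{23})\bigr]$. Splitting the expectation over $A$ and $A^c$, the previous paragraph bounds the sum by $42\,T^{2/3}/\log(T)$ on $A$, while on $A^c$ it is at most $2T$ (as $X^t_{22}+X^t_{23}\le 2$), contributing at most $\Pr_{P_1}(A^c)\cdot 2T \le 4$. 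Therefore $\E_{P_1}[N^T_{22}+N^T_{23}] \le \tfrac13\bigl(42\,T^{2/3}/\log(T)+4\bigr) \le T^{2/3}$ for all sufficiently large $T$, which completes the argument. The main difficulty throughout is the per-step welfare-loss certificate; once it is in hand, both parts are routine probability bookkeeping.
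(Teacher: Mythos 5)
Your proposal is correct and takes essentially the same route as the paper: the same optimal allocation $Y^{\mu_1}$, the same key per-step bound $\frob{Y^{\mu_1}}{\mu_1} - \frob{X}{\mu_1} \ge \tfrac{1}{42}\left(X_{22}+X_{23}+X_{32}\right)$ derived from player 2's envy constraint against player 1 together with the column constraints, and the same event-splitting bookkeeping for both conclusions. Your only deviations are presentational --- recasting the per-step bound as a two-variable LP certificate in $u,v$ (I verified the deferred check that the $39X_{13}$ and $35X_{23}$ terms dominate the $\tfrac{2}{19}$-per-unit relaxation of the envy constraint, so it goes through), and retaining the factor $\tfrac{1}{m}$ in $\E_{P_1}[N^T_{ik}] = \tfrac{1}{m}\sum_t \E_{P_1}[X^t_{ik}]$, which the paper's own proof glosses over --- and both are harmless, if anything slightly cleaner.
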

    See Appendix \ref{app:lemma_bounded_NT} for the proof of Lemma \ref{lemma:bounded_NT}. Using Lemma \ref{lemma:bounded_NT}, the KL-divergence between $P_1$ and $P_2$ is
    \begin{align*}
        &\mathrm{KL}(P_1, P_2) \\
        &= \E_{P_1}[N^T_{22}] \cdot \mathrm{KL}\Big(N(\frac{19}{42},1), N(\frac{19}{42}+\epsilon, 1)\Big) +  \E_{P_1}[N^T_{23}] \cdot \mathrm{KL}\Big(N(\frac{4}{42},1), N(\frac{4}{42}-\epsilon, 1)\Big) \\
        &= \frac{\E_{P_1}[N^T_{22}]\epsilon^2}{2}+ \frac{\E_{P_1}[N^T_{23}]\epsilon^2}{2} \\
        &=  \frac{\E_{P_1}[N^T_{23} + N^T_{22}]\epsilon^2}{2} \\
        &\le \frac{1}{2}. \qquad \qquad \qquad \text{[Lemma \ref{lemma:bounded_NT}]} \numberthis \label{eq:KL_div_of_P1_P2}
    \end{align*}
        The following lemma is a result of the Bretagnolle-Huber inequality and is proven in \cite{procaccia2024honor}.
    \begin{lemma}\label{lemma:event_complement_total_variation}
        For any two probability distributions $p$ and $q$ defined on the same space and for any measurable event $F$ in this space, $p(F^C) + q(F) \ge \frac{1}{2}e^{-\mathrm{KL}(p,q)}$.
    \end{lemma}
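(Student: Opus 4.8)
The plan is to reduce the event-based inequality to a statement about the overlap of $p$ and $q$, and then establish the required overlap-versus-KL bound, which is exactly the content of the Bretagnolle--Huber inequality. Write $p$ and $q$ for densities with respect to a common dominating measure $\mu$ (for instance $p+q$). The first step is a pointwise domination argument: for any measurable event $F$,
\[
p(F^C) + q(F) = \int_{F^C} p \, d\mu + \int_F q\, d\mu \ge \int_{F^C}\min(p,q)\, d\mu + \int_F \min(p,q)\, d\mu = \int \min(p,q)\, d\mu,
\]
since $p \ge \min(p,q)$ and $q \ge \min(p,q)$ hold everywhere. This removes $F$ entirely, so it suffices to prove the clean bound $\int \min(p,q)\, d\mu \ge \tfrac{1}{2}e^{-\mathrm{KL}(p,q)}$.

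I would then establish this through the Bhattacharyya coefficient $\int \sqrt{pq}\, d\mu$ in two moves. For the first, apply Cauchy--Schwarz after splitting $\sqrt{pq} = \sqrt{\min(p,q)}\,\sqrt{\max(p,q)}$, giving $\bigl(\int\sqrt{pq}\,d\mu\bigr)^2 \le \int\min(p,q)\,d\mu \cdot \int\max(p,q)\,d\mu$; since $\int\max(p,q)\,d\mu = 2 - \int\min(p,q)\,d\mu \le 2$, this yields $\int\min(p,q)\,d\mu \ge \tfrac{1}{2}\bigl(\int\sqrt{pq}\,d\mu\bigr)^2$. For the second, apply Jensen's inequality to the convex map $-\log$ and the random variable $\sqrt{q/p}$ under $p$:
\[
-\log\!\int\!\sqrt{pq}\,d\mu = -\log \E_p\!\bigl[\sqrt{q/p}\bigr] \le \E_p\!\Bigl[\tfrac{1}{2}\log(p/q)\Bigr] = \tfrac{1}{2}\mathrm{KL}(p,q),
\]
so that $\int\sqrt{pq}\,d\mu \ge e^{-\mathrm{KL}(p,q)/2}$. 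Squaring this and chaining it with the Cauchy--Schwarz estimate gives $\int\min(p,q)\,d\mu \ge \tfrac{1}{2}e^{-\mathrm{KL}(p,q)}$, which combined with the first display completes the proof.

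The step I expect to be the crux is the second move above, namely translating the Hellinger/Bhattacharyya-type quantity $\int\sqrt{pq}\,d\mu$ into $e^{-\mathrm{KL}(p,q)/2}$; everything else is either a one-line pointwise inequality or a direct application of Cauchy--Schwarz. It remains only to check the degenerate cases, which are benign: if $\mathrm{KL}(p,q) = \infty$ the right-hand side is $0$ and the claim is trivial, and the usual conventions ($0\log 0 = 0$, and restricting the relevant integrals to the support of $p$) handle points where $p$ or $q$ vanishes without affecting the Jensen computation.
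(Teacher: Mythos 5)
Your proof is correct, and it is genuinely more than what the paper does: the paper does not prove this lemma at all, but simply notes that it is a consequence of the Bretagnolle--Huber inequality and defers to the proof in \citep{procaccia2024honor}. Your argument is the standard self-contained proof of Bretagnolle--Huber (the one found, e.g., in Lattimore and Szepesv\'ari's bandit text): the pointwise bound $p(F^C)+q(F)\ge\int\min(p,q)\,d\mu$ eliminates the event $F$; the identity $pq=\min(p,q)\max(p,q)$ together with Cauchy--Schwarz and $\int\max(p,q)\,d\mu = 2-\int\min(p,q)\,d\mu\le 2$ gives $\int\min(p,q)\,d\mu\ge\tfrac12\bigl(\int\sqrt{pq}\,d\mu\bigr)^2$; and Jensen's inequality applied to the convex function $-\log$ gives $-\log\int\sqrt{pq}\,d\mu\le\E_p\bigl[\tfrac12\log(p/q)\bigr]=\tfrac12\mathrm{KL}(p,q)$, hence $\int\sqrt{pq}\,d\mu\ge e^{-\mathrm{KL}(p,q)/2}$. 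Chaining these yields the claim, and your handling of the degenerate cases (mutually singular distributions, $\mathrm{KL}(p,q)=\infty$, points where $p$ or $q$ vanishes) is appropriate, since the integral $\int\sqrt{pq}\,d\mu$ restricted to the support of $p$ equals $\E_p[\sqrt{q/p}]$ and the infinite-KL case is trivial. What your approach buys is self-containedness: a reader of the paper must chase the citation (and, ultimately, the original Bretagnolle--Huber argument), whereas your write-up proves the inequality from first principles using only Cauchy--Schwarz and Jensen.
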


    If we let $p = P_1, q = P_2$, and $F =  \left\{\sum_{t=0}^{T-1} (X^t_{22} + X^t_{23}+ X^t_{32}) \leq \frac{42T^{2/3}}{\log(T)}\right\}$, we can then observe that
    {\small
    \begin{align*}
         &\Pr_{P_1}\left(\sum_{t=0}^{T-1} (X^t_{22} + X^t_{23}+ X^t_{32}) > \frac{42T^{2/3}}{\log(T)}\right) + \Pr_{P_2}\left(\sum_{t=0}^{T-1} (X^t_{22} + X^t_{23}+ X^t_{32}) \le \frac{42T^{2/3}}{\log(T)}\right) \\
         &\ge \frac{1}{2}e^{-\mathrm{KL}( P_1, P_2)} && \text{[Lemma \ref{lemma:event_complement_total_variation}]} \\
         &\ge 1/4. && \text{[Eq \eqref{eq:KL_div_of_P1_P2}]} \numberthis \label{eq:event_complement_lemma_application}
    \end{align*}
    }
    Equation \eqref{eq:lower_bound_prob_of_small_X22} and Equation \eqref{eq:event_complement_lemma_application} together give that
    \begin{equation}\label{eq:small_Xt_under_P2}
         \Pr_{P_2}\left(F\right) = \Pr_{P_2}\left(\sum_{t=0}^{T-1} (X^t_{22} + X^t_{23}+ X_{32}^t) \le \frac{42T^{2/3}}{\log(T)}\right) \ge 1/8.
    \end{equation}
    Under event $F$, there must exist $t \in [0:T-1]$ such that $X^t_{22} + X^t_{23}+ X_{32}^t \le \frac{42T^{-1/3}}{\log(T)}$. We will show that this implies that $X^t$ does not satisfy the envy-freeness in expectation constraints for $\mu_2$. If $X^t_{22} + X^t_{23} + X_{32}^t\le \frac{42T^{-1/3}}{\log(T)}$, then $X_{12}^t \ge 1 - \frac{42T^{-1/3}}{\log(T)}$. For sufficiently large $T$, this implies that player 2's envy in expectation at time $t$ for player 1 is
 \begin{align*}
     &X^t_{11} \cdot \frac{19}{42}  + X^t_{12}\left(\frac{19}{42}+\epsilon\right) + X^t_{13}\left(\frac{4}{42}-\epsilon\right) - X^t_{21} \cdot \frac{19}{42} - X^t_{22}\left(\frac{19}{42}+\epsilon\right) - X^t_{23}\left(\frac{4}{42}-\epsilon\right) \\
     &\ge X^t_{11} \cdot \frac{19}{42}  + X^t_{12}\left(\frac{19}{42}+\epsilon\right) + X^t_{13}\left(\frac{4}{42}-\epsilon\right) - \frac{19}{42} - \frac{42T^{-1/3}}{\log\left(T\right)}\left(\frac{19}{42}+\epsilon\right) - \frac{42T^{-1/3}}{\log\left(T\right)}\left(\frac{4}{42}-\epsilon\right) \\
     &\ge X^t_{12}\left(\frac{19}{42}+\epsilon\right)  - \frac{19}{42} - \frac{42T^{-1/3}}{\log\left(T\right)}\left(\frac{19}{42}+\epsilon\right) - \frac{42T^{-1/3}}{\log\left(T\right)}\left(\frac{4}{42}-\epsilon\right) \\
     &\ge \left(1 - \frac{42T^{-1/3}}{\log\left(T\right)}\right)\left(\frac{19}{42}+\epsilon\right)  - \frac{19}{42} - \frac{42T^{-1/3}}{\log\left(T\right)}\left(\frac{19}{42}+\epsilon\right) - \frac{42T^{-1/3}}{\log\left(T\right)}\left(\frac{4}{42}-\epsilon\right) \\
     &= \epsilon -  \frac{84T^{-1/3}}{\log\left(T\right)}\left(\frac{19}{42}+\epsilon\right) - \frac{42T^{-1/3}}{\log\left(T\right)}\left(\frac{4}{42}-\epsilon\right) \\
     &= T^{-1/3} - \frac{84T^{-1/3}}{\log\left(T\right)}\left(\frac{19}{42}+T^{-1/3}\right) - \frac{42T^{-1/3}}{\log\left(T\right)}\left(\frac{4}{42}-T^{-1/3}\right) \\
    &> 0.
 \end{align*}
 Therefore, the envy-freeness constraints are not satisfied. This in turn implies with Equation \eqref{eq:small_Xt_under_P2} that
 \begin{equation}
     \Pr_{P_2}(\text{EFE for $\mu_2$ not satisfied}) \ge  \Pr_{P_2}\left(F\right) = \Pr_{P_2}\left(\sum_{t=0}^{T-1} X^t_{22}+ X^t_{23} + X_{32}^t \le \frac{42T^{2/3}}{\log(T)}\right) \ge 1/8.
 \end{equation}
 Finally, this completes the proof by contradiction because $\alg$ does not satisfy the envy-freeness constraints for $\mu^* = \mu_2$ with probability at least $1-1/T$. 

\end{proof}

\subsection{Proof of Lemma \ref{lemma:bounded_NT}}\label{app:lemma_bounded_NT}

  \begin{proof}
     First, we will let $E$ be the event that $\alg$ both satisfies the envy-free in expectation constraints for $\mu_1$ and that the regret of $\alg$ is bounded by $\frac{T^{2/3}}{\log(T)}$ when $\mu^* = \mu_1$. Recall that the assumption in the proof by contradiction implies that $\Pr_{P_1}(E) \ge 1-1/T$. 
     
     When $\mu^* = \mu_1$, the allocation that maximizes expected social welfare subject to the envy-freeness constraints is
    \[
        Y^{\mu_1} = \begin{bmatrix}
        0 & 1 & 0 \\
        1 & 0 & 0 \\
        0 & 0 & 1
        \end{bmatrix}.
    \]  
    In order for $X^t$ to satisfy envy-freeness in expectation constraints for $\mu_1$, the following equation must hold. 
    \begin{equation}
        19X^t_{21} + 19X^t_{22} + 4X_{23} \ge 19X^t_{11} + 19X^t_{12} + 4X_{13}.
    \end{equation}
    Substituting, we get that this is equivalent to
    \begin{equation}
        19X^t_{21} + 19X^t_{22} \ge 19(1 - X^t_{21} - X^t_{31}) + 19(1 - X^t_{22} - X^t_{32})  - 4X_{23} + 4X_{13},
    \end{equation}
    and simplifying gives   
    \begin{equation}\label{eq:total_items_to_player_2}
        X^t_{21} + X^t_{22} \ge 1 - \frac{2}{19}X_{23} + \frac{2}{19}X_{13} - \frac{1}{2}X^t_{31} - \frac{1}{2}X^t_{32}.
    \end{equation}
    Using the above equation, we can show that the regret at time $t$ for any $X^t$ that satisfies the envy-freeness in expectation constraints for $\mu_1$ is 
    {\small
    \begin{align*}
        &\frob{Y^{\mu_1}}{\mu^*} - \frob{X^t}{\mu^*}\\
        &= \frac{1}{42} \left(80 - (20X^t_{11} + 21X^t_{12} + X^t_{13} + 19X^t_{21} + 19X^t_{22} + 4X^t_{23} + X^t_{31} + X^t_{32} + 40X^t_{33}) \right)\\
        &= \frac{1}{42} \left(80 - (20(1 - X_{21}^t - X_{31}^t) + 21(1 - X_{22}^t - X_{32}^t) + X^t_{13} + 19X^t_{21} + 19X^t_{22} + 4X^t_{23} + X^t_{31} + X^t_{32} + 40X^t_{33})  \right)\\
        &= \frac{1}{42} \left(39 - (-(X_{21}^t  - X_{22}^t) - X_{22}^t + X^t_{13} + 4X^t_{23} -  19X^t_{31} -20 X^t_{32} + 40X^t_{33}) \right) \\
        &\ge\frac{1}{42} \left( 39 - \left(-\left(1 - \frac{2}{19}X^t_{23} + \frac{2}{19}X^t_{13} - \frac{1}{2}X^t_{31} - \frac{1}{2}X^t_{32}\right) - X_{22}^t + X^t_{13} + 4X^t_{23} -  19X^t_{31} -20 X^t_{32} + 40X^t_{33}\right) \right)\\
        &=\frac{1}{42} \left( 40 - \left(\frac{78}{19}X^t_{23} + \frac{17}{19}X^t_{13} - X_{22}^t -  18.5X^t_{31} -19.5 X^t_{32} + 40X^t_{33}\right)  \right)\\
        &=\frac{1}{42} \left( 40 - \left(\frac{78}{19}X^t_{23} + \frac{17}{19}X^t_{13} - X_{22}^t -  18.5X^t_{31} -19.5 X^t_{32} + 40(1-X^t_{13} - X^t_{23})\right) \right) \\
        &=\frac{1}{42} \left( \frac{743}{19}X^t_{13} + \frac{682}{19}X^t_{23} + X_{22}^t + 18.5X^t_{31} + 19.5X^t_{32} \right) \\
        &\ge\frac{1}{42} \left( X_{23}^t + X_{22}^t + X_{32}^t \right).
    \end{align*}
    }
    Therefore, the regret of $\alg$ if $\alg$ satisfies envy-freeness in expectation for $\mu_1$ is
    \[
       T \cdot  \frob{Y^{\mu_1}}{\mu^*} - \sum_{t=0}^{T-1} \frob{X^t}{\mu^*} \ge \frac{1}{42}\sum_{t=0}^{T-1} X_{23}^t + X^t_{22}+ X_{32}^t.
    \]
    This implies that under event $E$,
    \begin{equation}\label{eq:N22_under_Emu1}
        \sum_{t=0}^{T-1} X_{23}^t +X^t_{22}+ X_{32}^t \le  \frac{42T^{2/3}}{\log(T)}.
    \end{equation}
    Equation \eqref{eq:N22_under_Emu1} implies that
    \[
    \E_{P_1}[N_{23}^T + N_{22}^T \mid E] =    \E_{P_1}\left[\sum_{t=0}^{T-1} X_{23}^t + X^t_{22}\mid E\right] \le  \E_{P_1}\left[\sum_{t=0}^{T-1} X_{23}^t + X^t_{22} + X^t_{32} \mid E\right] \le \frac{42T^{2/3}}{\log(T)}.
    \]

    Therefore, for large enough $T$,
    \begin{align*}
        \E_{P_1}[N_{23}^T + N_{22}^T] &= \E_{P_1}[N_{23}^T + N_{22}^T \mid E]\Pr_{P_1}(E) +  \E_{P_1}[N_{23}^T + N_{22}^T | \neg E]\Pr_{P_1}( \neg E) \\
        &\le \E_{P_1}[N_{23}^T + N_{22}^T \mid E] + T \cdot \frac{1}{T} \\
        &\le \frac{42T^{2/3}}{\log(T)} + 1 \\
        &\le T^{2/3}. \numberthis \label{eq:expected_N22}
    \end{align*}
    This is the first of the two desired equations. For the second equation, note that Equation \eqref{eq:N22_under_Emu1} also implies that for sufficiently large $T$,
    \begin{equation}
      \Pr_{P_1}\left( \sum_{t=0}^{T-1} (X^t_{23} + X^t_{22}+ X_{32}^t) > \frac{42T^{2/3}}{\log(T)}\right) \le \Pr_{P_1}(\neg E) \le \frac{1}{T} < \frac{1}{8}.
    \end{equation}
    \end{proof}

\end{document}